\pgfplotsset{compat=1.18}
\newcommand{\drawplot}[4]{%
    \begin{tikzpicture}
        \draw[->, thick] (0,0) -- (2,0) node[right] {$\beta$};
        \draw[->, thick] (0,0) -- (0,1.5) node[left] {#1};

        \ifthenelse{\equal{#2}{SDOE}}{ 
            \draw[red, very thick] (0,0) -- (0.7,0) .. controls (0.7,1) and (1.5,1.1) .. (2,1.2); 
            \draw (0.35,0.35) node {\faSkull};
        }{
            \ifthenelse{\equal{#2}{NoSDOE}}{ 
                \draw[blue, very thick] (0,0) .. controls (0.5,1) and (1.5,1.1) .. (2,1.2) ;
            }{}
        }

        \ifthenelse{\equal{#3}{below}}{
            \draw (1,-0.1) node[anchor=north] {#4};
        }{\ifthenelse{\equal{#3}{above}}{ 
            \draw (1,1.6) node[anchor=south] {#4};    
        }{\ifthenelse{\equal{#3}{center}}{
            \draw (0.65, 0.45) node[anchor=west, fill=blue!10, rounded corners] {\scalebox{0.7}{#4}};
        }{}}}
    \end{tikzpicture}
    
}
\def\app#1#2{%
  \mathrel{%
    \setbox0=\hbox{$#1\sim$}%
    \setbox2=\hbox{%
      \rlap{\hbox{$#1\propto$}}%
      \lower1.1\ht0\box0%
    }%
    \raise0.25\ht2\box2%
  }%
}
\def\approxprop{\mathpalette\app\relax}
\definecolor{darkblue}{rgb}{0.,0.,0.4}
\newcommand{\norm}[1]{\left\lVert#1\right\rVert}
\newcommand{\defeq}{\vcentcolon=}
\newcommand{\ketbra}[2]{\mathinner{\ket{#1}\!\bra{#2}}}
\newcommand{\one}{\mathbb{1}}
\newcommand{\Z}{\mathbb{Z}}
\newcommand{\N}{\mathbb{N}}
\newcommand{\C}{\mathbb{C}}
\newcommand{\Hilb}{\mathcal{H}}
\DeclareMathOperator{\Tr}{Tr}
\DeclareMathOperator{\supp}{supp}
\DeclareMathOperator{\diam}{diam}
\DeclareMathOperator{\diag}{diag}
\newtheorem{theorem}{Theorem}
\newtheorem{proposition}{Proposition}
\newtheorem*{corollary*}{Corollary}
\newtheorem{lemma}{Lemma}
\theoremstyle{definition}
\newtheorem{definition}{Definition}
\newcommand{\SEP}{\textsf{SEP}}
\newcommand{\SymSEP}{\textsf{SymSEP}}
\newcommand{\PPT}{\textsf{PPT}}
\newcommand{\SDOE}{\textsf{SDOE}}
\newcommand{\POE}{\textsf{POE}}
\newcommand{\SEC}{\textsf{SEC}}
\newcommand{\EC}{\textsf{EC}}
\newcommand{\NC}{\textsf{NC}}
\begin{document}
\title{Symmetry enforces entanglement at high temperatures}

\author{Amir-Reza Negari}
\thanks{These authors contributed equally to this work.}
\affiliation{Perimeter Institute for Theoretical Physics, Waterloo, Ontario N2L 2Y5, Canada}
\affiliation{Department of Physics and Astronomy, University of Waterloo, Waterloo, Ontario N2L 3G1, Canada}

\author{Leonardo A. Lessa}
\thanks{These authors contributed equally to this work.}
\affiliation{Perimeter Institute for Theoretical Physics, Waterloo, Ontario N2L 2Y5, Canada}
\affiliation{Department of Physics and Astronomy, University of Waterloo, Waterloo, Ontario N2L 3G1, Canada}

\author{Subhayan Sahu}
\email{ssahu@perimeterinstitute.ca}
\affiliation{Perimeter Institute for Theoretical Physics, Waterloo, Ontario N2L 2Y5, Canada}

\begin{abstract}
    Many-body quantum systems with local interactions undergo ``sudden death of entanglement" at high temperatures, whereby thermal states become classical mixtures of product states. We investigate whether symmetry constraints can prevent this phenomenon. We prove that strongly symmetric thermal states (canonical ensemble) of generic Hamiltonians with on-site Abelian symmetries remain entangled with non-zero entanglement negativity at arbitrarily high temperatures, under mild conditions on the symmetry actions and the charge sector of the strong symmetry. Our results extend to weakly symmetric thermal states (Gibbs ensemble) under superselection rules, which restrict state decompositions to be symmetric. In particular, we show that fermionic Gibbs states evade sudden death of entanglement and have persistent fermionic negativity at high temperatures, proving along the way some existing conjectures about fermionic entanglement. These findings demonstrate that global symmetry correlations can preserve quantum entanglement despite thermal decoherence, providing new insights into the interplay between symmetry and quantum information in thermal equilibrium. 
\end{abstract}
\maketitle

Many-body systems equilibrate to their Gibbs distribution at finite temperatures by exchanging energy and information with a bath. The resulting state $\rho_\beta \propto e^{-\beta H}$ for system Hamiltonian $H$ at inverse temperature $\beta = T^{-1}$ is a mixed state that maximizes entropy under energy constraints, reflecting an observer's ignorance of information leaked to the bath. For physical, local quantum Hamiltonians, the Gibbs state has structural properties at finite non-zero temperatures that reflect its low complexity, such as area law of mutual information~\cite{Wolf_2008} and of entanglement~\cite{Kuwahara_2021}, and local Markov property~\cite{chen2025quantumgibbsstateslocally, kuwahara2024clusteringconditionalmutualinformation, Kato_2019}. Importantly, it was recently proven \cite{bakshi2025hightemperaturegibbsstatesunentangled} that Gibbs states of local Hamiltonians undergo ``sudden death of entanglement'' (\SDOE) at sufficiently high but finite temperatures in the thermodynamic limit, becoming exactly representable by a classical distribution of pure product states. This result confirmed evidence from previously studied few-body and solvable models~\cite{Arnesen_2001,Fine_2005,Qasimi2008,Yu_2009, Aolita_2015, Sherman_2016, Hart_2018, parez2025fateentanglement, gurvits_separable_2003} (See top left plot of Table \ref{tab:schematic_plots}). 

\begin{table}[t]
    \centering
    \begin{tabular}{c|c|c}
         & No superselection & \makecell{Superselection \\ (e.g. fermions)} \\ \hline
        \raisebox{0.8cm}{\makecell{Gibbs \\ ensemble \\ $e^{-\beta H}$}} & \drawplot{$E$}{SDOE}{}{} & \drawplot{$E_S$}{NoSDOE}{center}{$[U_A(g), H] \neq 0$\ (\hyperref[eq:symm_entangling_cond]{\SEC})} \\ \hline 
        \raisebox{0.8cm}{\makecell{Canonical \\ ensemble \\ $e^{-\beta H} \Pi_\Lambda$}} & \multicolumn{2}{c}{\drawplot{$E$}{NoSDOE}{center}{$[U_A(g), H] \Pi_\Lambda \neq 0$\ (\hyperref[eq:entangling_cond]{\EC})}}
    \end{tabular}
    \caption{Typical behavior of entanglement measures ($E$, $E_S$) as functions of the inverse temperature $\beta = T^{-1}$ for the Gibbs and canonical ensembles in the cases with and without superselection. $E$ ($E_S$) is a measure of (symmetric) entanglement, such as (fermionic) negativity. There is sudden death of entanglement for the Gibbs ensemble with no superselection rule if $H$ is local \cite{bakshi2025hightemperaturegibbsstatesunentangled}, while there is not for the other cases if their respective entangling conditions (\EC\ and \SEC) are satisfied. For an example of the above, see Fig. \ref{fig:cluster_chain_plot}.}
    \label{tab:schematic_plots}
\end{table}

Meanwhile, recent developments in mixed-state phases of matter have confirmed the essential role of symmetries \cite{lee_quantum_2023, sala_spontaneous_2024, lessa_strong--weak_2024, zhang_strong--weak_2024, ma_average_2023, ma_topological_2025, lee_symmetry_2025, xue_tensor_2024, ma_symmetry_2024} and uncovered novel connections to their global entanglement properties~\cite{Moharramipour_2024, Li_2025, sahu2025entanglementcosthierarchiesquantum, Lessa_2025MixedStateQuantumAnomaly, lessa_higher-form_2025, li_how_2024, wang_intrinsic_2025, li_symmetry-enforced_2024, wang_anomaly_2024, hsin_higher-form_2025}. In this Letter, we prove that symmetry restrictions on thermal states generically prevent \SDOE.  For that, we consider a symmetry group $G$, and a Hamiltonian $H$ that is symmetric under $G$: $\forall g \in G, [U(g), H] = 0$, where $U : G \to U(\Hilb)$ is a symmetry representation of $G$ in the total Hilbert space $\Hilb$. Gibbs states of local Hamiltonians $H$ that are symmetric also undergo \SDOE, so we must impose a stronger symmetry constraint on the Gibbs state to see meaningful effects on entanglement. We consider two physically motivated ways of doing that:

One way is to impose \textbf{strong symmetry} in the thermal ensemble $\rho_\beta = \sum_i p_i \ketbra{\psi_i}{\psi_i}$ by limiting the mixture only to states $\ket{\psi}$
that have a fixed symmetry charge $\Lambda : G \to U(1)$, i.e. $U(g) \ket{\psi} = \Lambda(g) \ket{\psi}$. This constraint results in the \textit{canonical ensemble} $\rho_{\beta, \Lambda} \propto e^{-\beta H} \Pi_{\Lambda}$, where $\Pi_\Lambda$ projects onto the charge sector $\Lambda$. By definition, $\rho_{\beta, \Lambda}$ is \textit{strongly symmetric} under $G$, satisfying $U(g) \rho_{\beta, \Lambda} = \Lambda(g) \rho_{\beta, \Lambda}$, which is a stronger requirement than the usual \textit{weak symmetry} condition $U(g) \rho U(g)^\dagger = \rho$ \cite{Buca_2012, Albert_2014, albert_lindbladians_2018}. The canonical ensemble naturally arises when an initial strongly symmetric system is put into contact with a thermal bath in a way that preserves the symmetry of the system alone (See~\cite{supp} for a proof).

Alternatively, we can impose a \textbf{superselection rule} by considering a restricted quantum theory consisting only of states that are (weakly) symmetric under $G$ \cite{bartlett_reference_2007, giulini_superselection_2009, centeno_twirled_2025}. This restriction can either be fundamental to physical reality, as in the case of the fermion parity superselection rule \cite{hegerfeldt_proof_1968}, or be put by hand, as in real quantum theory \cite{stueckelberg_quantum_1960, caves_entanglement_2001, ying_whether_2025}. Crucially, the superselection rule alters the definition of entanglement itself \cite{Verstraete_2003, Schuch_2004nonlocal, Schuch_2004quantumentanglement, Banuls_2007, negari2025extendibilityfermionicgaussianstates}. Namely, we call a bipartite state $\rho_{AB}$ \textit{symmetrically separable} if it is the mixture of tensor product states $\rho_A \otimes \rho_B$ whose parts, $\rho_A$ and $\rho_B$, are also valid states within the superselected theory, and thus weakly symmetric by themselves \cite{ma_symmetric_2022}. This is generally stronger than the usual separability condition~\cite{Werner_1989}, since there can be separable states that are not symmetrically separable. 

These two symmetry conditions on $\rho_\beta$ have similar entanglement properties in the following sense: $\rho_\beta$ is symmetrically entangled (i.e. \textit{not} symmetrically separable) if, and only if, the canonical ensemble $\rho_{\beta, \Lambda}$ is also entangled for some charge $\Lambda$ (See \cite{supp} for a simple proof). Then, we are able to prove the persistence of entanglement~\footnote{Not to be confused with the robustness measure of multipartite entanglement called ``persistency of entanglement''~\cite{briegel_persistent_2001, Brunner_2012}.} for both cases at high temperatures (See Fig. \ref{fig:geometry_schematic}): 
\begin{theorem}[Persistence of entanglement: informal]\label{thm:informal}
    For Abelian on-site symmetries and generic symmetric local Hamiltonians at sufficiently high temperatures,
    \begin{enumerate}
        \item Gibbs ensemble $\rho_\beta$ is symmetrically entangled, and    
        \item canonical ensemble $\rho_{\beta, \Lambda}$ is entangled with non-zero entanglement negativity, for almost all symmetry actions and irreps $\Lambda$ in the thermodynamic limit.
    \end{enumerate}
\end{theorem}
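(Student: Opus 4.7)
By the equivalence stated just before the theorem, whenever the canonical ensemble $\rho_{\beta,\Lambda}$ is entangled for some $\Lambda$ the Gibbs ensemble $\rho_\beta$ is symmetrically entangled; so item~1 reduces to item~2. For item~2 I would expand
\begin{equation*}
\rho_{\beta,\Lambda} = \frac{\Pi_\Lambda}{d_\Lambda} - \frac{\beta}{d_\Lambda}\bigl(H-\langle H\rangle_\Lambda\bigr)\Pi_\Lambda + O(\beta^2),
\end{equation*}
with $d_\Lambda = \Tr\Pi_\Lambda$, and analyze the partial transpose $\rho_{\beta,\Lambda}^{T_A}$ on the given bipartition $A|B$ perturbatively in $\beta$.

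At zeroth order, the Abelian decomposition $\Pi_\Lambda = \sum_{\lambda_A+\lambda_B=\Lambda}\Pi_{\lambda_A}^A\otimes\Pi_{\lambda_B}^B$ is a sum of mutually orthogonal tensor-product projectors, so $\Pi_\Lambda/d_\Lambda$ is separable. Term-by-term transposition of the $A$-factors preserves both the projector property and the mutual orthogonality, so $\Pi_\Lambda^{T_A}$ is itself a rank-$d_\Lambda$ projector. Hence $\rho_{0,\Lambda}^{T_A}$ sits on the boundary of the PPT cone with spectrum $\{0, 1/d_\Lambda\}$. Writing $P_1 = \mathbb{1} - \Pi_\Lambda^{T_A}$ for the projector onto the kernel, the first-order correction restricted to $P_1$ is $-(\beta/d_\Lambda)\, P_1(H\Pi_\Lambda)^{T_A}P_1$. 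Using the identity $\Tr(X^{T_A} Y^{T_A}) = \Tr(XY)$, one verifies that this operator is \emph{traceless}; as a traceless Hermitian operator it carries eigenvalues of both signs whenever nonzero, and the positive ones produce negative eigenvalues of $\rho_{\beta,\Lambda}^{T_A}$ of order $\beta/d_\Lambda$ at small $\beta > 0$ --- i.e.\ non-vanishing negativity.

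The hardest step, and main obstacle, is to deduce $P_1(H\Pi_\Lambda)^{T_A}P_1 \neq 0$ from the \EC\ condition $[U_A(g),H]\Pi_\Lambda \neq 0$. My proposed route is to recognize $\Pi_\Lambda^{T_A}$ as the charge-$\Lambda$ projector for the \emph{twisted} symmetry action $U_A(g)^T \otimes U_B(g)$; then $(H\Pi_\Lambda)^{T_A}$ preserves $\mathrm{range}(\Pi_\Lambda^{T_A})$ if and only if $H$ commutes with the twisted action on this sector, and combining this with the original invariance $[H, U_A(g)\otimes U_B(g)] = 0$ would force $[U_A(g), H]\Pi_\Lambda = 0$, contradicting \EC. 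Care is needed when $U_A(g)$ is complex-valued, since the twisted representation is then a genuinely distinct representation of $G$ and the bookkeeping across characters is delicate.

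Finally, the genericity qualifier is direct: \EC\ defines a closed algebraic subset of codimension at least one in the space of symmetric local Hamiltonians and in the parameter space of on-site symmetry actions on $A$, so ``generic Hamiltonians'' and ``almost all symmetry actions and irreps $\Lambda$'' are both handled. Item~1 (Gibbs ensemble with superselection) then follows from item~2 via the equivalence, completing the plan; the fermionic case should go through analogously with partial transpose replaced by fermionic partial time-reversal.
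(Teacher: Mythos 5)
Your overall architecture matches the paper's: item~1 reduces to item~2 via the equivalence lemma, and negativity is detected through the first-order (in $\beta$) behaviour of the block $(\one-\Pi_\Lambda^{T_A})\,\rho_{\beta,\Lambda}^{T_A}\,(\one-\Pi_\Lambda^{T_A})$, which is Hermitian and traceless and therefore has a positive eigenvalue whenever it is nonzero. (The paper phrases this contrapositively --- PPT along a sequence $\beta_n\to 0$ forces the $\beta$-derivative of this block to vanish at $\beta=0$ --- but your degenerate perturbation theory is the same computation.) There are, however, two genuine gaps.

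First, the step you yourself flag as the main obstacle --- deducing $P_1(H\Pi_\Lambda)^{T_A}P_1\neq 0$ from \EC\ --- is left open, and your proposed route through the ``twisted representation'' $U_A(g)^T\otimes U_B(g)$ is exactly where the delicacy you worry about lives. The paper dissolves it by \emph{fixing the transposition basis} to one that diagonalizes $U_A$, so that $U_A(g)^{T_A}=U_A(g)$, $(\Pi_\lambda^A)^{T_A}=\Pi_\lambda^A$ and $\Pi_\Lambda^{T_A}=\Pi_\Lambda$; the twisted action then coincides with the untwisted one, and the equivalence $\EC\iff\NC$ becomes a short computation (Appendix~\ref{appsec:equivalence_of_entangling_conditions}): $(\one-\Pi_\Lambda)(H\Pi_\Lambda)^{T_A}(\one-\Pi_\Lambda)=0$ forces $H\Pi_\Lambda=\sum_\lambda \Pi_\lambda^A(H\Pi_\Lambda)\Pi_\lambda^A$, i.e.\ $[U_A(g),H\Pi_\Lambda]=0$, the negation of \EC. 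Since negativity is invariant under local changes of basis, you are free to make this choice; without it, the character bookkeeping you mention is a real obstruction, not a formality.

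Second, your genericity argument is circular where it matters. The failure of \EC\ is a \emph{linear} condition on $H$, so it cuts out a proper subspace only if there exists at least one symmetric local Hamiltonian satisfying \EC\ for the given $\Lambda$ --- and that existence is precisely what fails for semiuniform irreps, where the paper shows that \emph{no} Hamiltonian supported on $N-1$ or fewer sites can be entangling; for such $\Lambda$ your ``codimension at least one'' claim is simply false within the space of local Hamiltonians. The qualifier ``almost all irreps'' in the theorem has exactly this content, and supplying it requires (i) the explicit two-body perturbation $v_{i,j}=\sigma_i^+\sigma_j^-+\sigma_i^-\sigma_j^+$ built from two distinct on-site irreps appearing in $\Pi_\Lambda$, which witnesses \EC\ for every non-semiuniform $\Lambda$, and (ii) the counting argument that semiuniform irreps have vanishing density $O(2^d/N)$ (and are absent altogether for finite $G$ once $N>|G|$). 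Neither ingredient appears in your proposal, so the quantifiers in the statement are not actually established.
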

The second part of the theorem above is stronger than the first in two ways: first, we provide necessary and sufficient conditions for $\rho_{\beta, \Lambda}$ to be entangled, with $\Lambda$ being arbitrarily chosen; and, second, $\rho_{\beta, \Lambda}$ is not only entangled, but has nonzero negativity, which is a computable measure of entanglement \cite{Peres_1996, Horodecki_1996,vidal_computable_2002, Plenio_2005}.
The phrase “for almost all symmetry actions and irreps” excludes the 
condition of \emph{semiuniformity} (Def. \ref{def:semi-uniformity}), which is absent in most examples of interest, such as finite $G$ at sufficiently large system size.

We only consider \textit{Abelian} symmetry groups with \textit{on-site} action because their Gibbs states are fully symmetrically separable at infinite temperature. In contrast, it has been found that non-Abelian symmetries \cite{Moharramipour_2024,Li_2025,sahu2025entanglementcosthierarchiesquantum} and anomalous non-on-site symmetries \cite{Lessa_2025MixedStateQuantumAnomaly, li_how_2024, wang_intrinsic_2025, li_symmetry-enforced_2024, wang_anomaly_2024, hsin_higher-form_2025} can guarantee entanglement even at infinite temperature, so it would not be as surprising if the entanglement of their symmetric thermal ensembles persists at high temperatures.

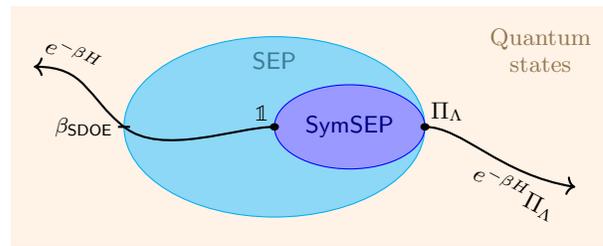
\begin{figure}[t]
    \centering





\begin{tikzpicture}
    \begin{scope}[yscale=0.8]
    \fill[orange!10] (-4,-2) rectangle (4,2); 
    \draw[YellowOrange!50!black] (4,2) node[below left, align=center, inner sep=8pt] {Quantum\\states};

    \begin{scope}[shift={(-0.5,0)}]
    \draw[fill=cyan!40!white, draw=cyan] (0,0) ellipse [x radius = 2, y radius = 1.5];
    \draw[cyan!50!black] (0, 1.1) node {\SEP};

    \draw[fill=blue!40!white, draw=blue] (1,0) ellipse [x radius = 1, y radius = 0.7];
    \draw[blue!50!black] (1, 0) node {\SymSEP};    

    \draw[fill=black] (0,0) circle [radius=1.5pt] node[above left, inner sep=2pt] {$\one$};
    \draw[fill=black] (2,0) circle [radius=1.5pt] node[above right, inner sep=2pt] {$\Pi_\Lambda$};
    \draw[black, thick] (-1.92,0) -- (-2.08,0) node[left, inner sep=1.5pt] {\footnotesize $\beta_\SDOE$};

    \draw[->, thick] (2,0) .. controls (2.5, 0) and (3, -0.8) .. (4, -1) node[near end, sloped, below] {\small $e^{-\beta H} \Pi_\Lambda$};
    \draw[->, thick] (0,0) .. controls (-0.5, 0) and (-1.5, -0.5) .. (-2,0) .. controls (-2.5, 0.5) and (-2.5, 1) .. (-3.2, 1) node[near end, sloped, above] {\small $e^{-\beta H}$};
    \end{scope}
    \end{scope}

\end{tikzpicture}
    \caption{Schematic paths of generic many-body thermal states in relation to the set of (symmetrically) separable states \SEP\ (\SymSEP). The Gibbs ensemble $\rho_\beta \propto e^{-\beta H}$ is separable from $\beta = 0$ to $\beta_{\SDOE}$ due to the sudden death of entanglement, but it becomes symmetrically entangled as soon as $\beta > 0$. Similarly, the canonical ensemble $\rho_{\beta, \Lambda} \propto e^{-\beta H} \Pi_\Lambda$ is also entangled for $\beta > 0$. For a more accurate representation of \SEP\ and \SymSEP\ in a slice of the space of quantum states, see Fig. \ref{fig:geometry_twoqubits} in Appendix \ref{appsec:two_qubit}.}
    \label{fig:geometry_schematic}
\end{figure}

Recent works report similar findings on the persistence of entanglement~\cite{garratt2025entanglementprivateinformationmanybody} and negativity~\cite{kim2025persistenttopologicalnegativityhightemperature} in canonical ensembles at high temperatures. Our work provides significant generalization through distinct proof techniques and complete characterization of symmetries and irreps with persistent entanglement. Furthermore, we unify the ``no \SDOE'' result for canonical ensembles with the persistence of symmetric entanglement in thermal states under superselection rule, which were previously noted for fermionic systems \cite{Banuls_2007, Benatti_2014, Shapourian_2019finite_temperature, Choi_2024, ramkumar2025hightemperaturefermionicgibbsstates} and $U(1)$-symmetric systems \cite{ma_symmetric_2022}. More specifically for fermions, we are also able to show an equivalent version of Theorem \ref{thm:informal} where the entanglement negativity measure used is the appropriate one for fermionic systems \cite{Shapourian_2017, Shapourian_2019}, along the way proving existing conjectures in~\cite{Shapourian_2019, Shapourian_2019finite_temperature}. 

\textit{\textbf{Setup and notation---}}
We consider a multipartite Hilbert space with $N < \infty$ identical sites, $\Hilb = \bigotimes_{i=1}^N \Hilb_i$, $\forall i,j.\Hilb_i \simeq \Hilb_j$. The symmetry group $G$ is Abelian and acts as a on-site unitary $U : G \to \mathcal{L}(\mathcal{H})$, $U = \bigotimes_i u_i$ in a non-trivial way on each site, i.e. $\exists g \in G, u(g) \not\propto \one$. Sometimes, we will specialize to homogeneous symmetry actions $U = u^{\otimes N}$.

From the symmetry action of $G$ on the Hilbert space $\Hilb$, it is decomposed into $\Hilb = \bigoplus_\lambda V_{\lambda}^{\otimes m_\lambda}$, where the direct sum is over all irreps $\lambda$ of $G$, with $m_\lambda$ being the multiplicities. Since $G$ is Abelian, each irrep space $V_\lambda$ is one-dimensional, and so the irreps are identified with their characters, $\lambda : G \to U(1)$. By Schur's lemma, a weakly symmetric state $\rho$ can be decomposed into strongly symmetric block diagonal components, $\rho = \sum_{\lambda} q_{\lambda} \rho_{\lambda}$, where $\rho_{\lambda} = \rho \Pi_{\lambda} / q_\lambda$ with $\Pi_\lambda = \sum_{g \in G} \overline{\lambda(g)} U(g)$ being the projector onto the \textit{isotypic component} $V_\lambda^{\otimes m_\lambda}$, and $q_\lambda = \Tr[\rho \Pi_{\lambda}]$. We shall call $\sum_{\lambda} q_\lambda \rho_\lambda$ the \textit{isotypic decomposition} of $\rho$.

A state $\rho$ is separable ($\rho \in \SEP$) with respect to a bipartition $A|B$ if it can be represented as a convex mixture of tensor product states, $\rho = \sum_{i}p_i \rho_{A,i} \otimes \rho_{B,i}$~\cite{Werner_1989}. If a separable state $\rho$ is weakly symmetric under a bipartite symmetry representation $U = U_A \otimes U_B$, and each $\rho_{A,i}$ and $\rho_{B,i}$ is weakly symmetric under $U_A$ and $U_B$ respectively, then we say $\rho$ is \textit{symmetrically separable} ($\rho \in \SymSEP$)~\footnote{For on-site symmetries this is in fact equivalent to the following definition of symmetric separability~\cite{Banuls_2007}: $\rho$ is symmetrically separable for a symmetry $U$ if there exists a decomposition $\rho = \sum_{i}p_i \rho_i$, such that $[\rho_i, U] = 0$ and $\rho_i$ is a tensor product state.}. 

\textit{\textbf{Persistence of entanglement---}}
Now consider the Gibbs state $\rho_{\beta} \propto e^{-\beta H}$. If $H$ is local, $\rho_\beta$ becomes fully separable at high temperatures. We will now give conditions under which not only symmetry constraints \textit{prevent} \SDOE, 
but also leads to the \textit{persistence} of entanglement (\POE) at high temperature, by which we mean that there exists a $\beta^* > 0$ such that $\rho_\beta$ is symmetrically entangled if $\beta < \beta^*$, with an analogous statement for $\rho_{\beta, \Lambda}$.

We proceed by assuming that the entanglement of $\rho_{\beta}$ does \textit{not} persist. That is equivalent to $\rho_\beta$ being symmetrically separable for all $\beta$ in a decreasing sequence $(\beta_n)_{n \in \N}$ converging to zero: $\rho_{\beta} = \sum_{i}p_i \rho_{\beta,A,i}\otimes \rho_{\beta, B,i}$ for a bipartition $A|B$. Since the $\rho_{\beta, A,i}$ states are individually symmetric, we have $\forall g \in G, [U_A(g), \rho_{\beta}] = \sum_{i}p_i[U_A(g), \rho_{\beta, A,i}]\otimes \rho_{\beta, B,i} = 0$. Since each matrix element of $[U_A(g), \rho_{\beta}]$ is a holomorphic function of $\beta$ that is zero on the set $\{\beta_n \}$ with the accumulation point $\beta =0$, it is identically zero for all $\beta \in \C$. In particular, its derivative at $\beta = 0$ must also be zero, which leads to a necessary condition for symmetric separability at high temperatures: $\forall g\in G,\ [U_A(g), H] = 0$.

By the converse of the reasoning above, we arrive at a sufficient condition for the Gibbs state to be symmetrically entangled at arbitrarily high temperatures $\beta \in (0, \beta^*)$~\footnote{An equivalent symmetric entangling condition was found in \cite{ma_symmetric_2022} for $U(1)$ symmetries.},
\begin{align}\label{eq:symm_entangling_cond}
    \forall g\in G,\  &[U_A(g), H] \neq 0, \nonumber \\ &~~\text{[Symmetric entangling condition (\SEC)]}
\end{align}

By the correspondence between the symmetric separability of $\rho_\beta$ and the separability of $\rho_{\beta, \Lambda}$, \textsf{SEC} also implies the existence of at least one irrep $\Lambda$ whose canonical state exhibits \POE. Beyond this, following the same argument above for $\rho_{\beta, \Lambda} \propto e^{-\beta H} \Pi_\Lambda$, we can reach a different sufficient condition for \POE\ for \textit{any given $\Lambda$}, 
\begin{align}\label{eq:entangling_cond}
    \forall g\in G,\  [U_A(g), H]&~\Pi_{\Lambda} \neq 0. \nonumber \\ &~~\text{[Entangling condition (\EC)]}
\end{align}

Combining these results, we arrive at the following theorem, which is summarized in Table \ref{tab:schematic_plots}

\begin{theorem}[Persistence of entanglement]\label{thm:poe}
    Given an on-site Abelian symmetry,
    \begin{enumerate}
        \item For a symmetric Hamiltonian $H$ that satisfies \SEC, the Gibbs ensemble is symmetrically entangled, and $\exists$ an irrep $\Lambda$ such that the canonical ensemble is entangled at arbitrarily high temperatures.
        \item For a symmetric Hamiltonian $H$ and an irrep $\Lambda$ that satisfies \EC, the canonical ensemble is entangled at arbitrarily high temperatures.
    \end{enumerate}
\end{theorem}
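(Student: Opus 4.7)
My plan is to formalize the analyticity argument sketched immediately above the theorem statement, filling in two details it leaves implicit. Both parts are proven by contraposition: a constraint that holds on a sequence of inverse temperatures accumulating at zero must, by analyticity, also hold for all complex $\beta$, and in particular every Taylor coefficient at $\beta = 0$ must vanish.

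For part 1, I would assume $\rho_\beta$ is symmetrically separable on a sequence $\beta_n \to 0^+$ and deduce the failure of \SEC. From the definition of symmetric separability, each local piece of the decomposition commutes with $U_A(g)$, so $[U_A(g), \rho_\beta] = 0$ along the sequence, hence $[U_A(g), e^{-\beta H}] = 0$ since the partition function is a nonvanishing scalar. The matrix elements of this commutator are entire functions of $\beta$ that vanish on a set with $0$ as an accumulation point, so by the identity theorem for holomorphic functions they vanish identically. Differentiating at $\beta = 0$ yields $[U_A(g), H] = 0$, contradicting \SEC. The second clause of part 1---existence of an irrep $\Lambda$ whose canonical ensemble is entangled---then follows immediately from the equivalence between symmetric entanglement of $\rho_\beta$ and ordinary entanglement of some $\rho_{\beta,\Lambda}$ recorded in the text and proved in Appendix~\ref{appsec:symmetric_decompositions}.

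For part 2, I would run the same template on $\rho_{\beta,\Lambda} \propto e^{-\beta H}\Pi_\Lambda$. Assuming separability on a sequence $\beta_n \to 0^+$, I first use the fact that a strongly symmetric state is separable iff it is symmetrically separable (same appendix) to conclude $[U_A(g), e^{-\beta H}\Pi_\Lambda] = 0$ along the sequence. The identity theorem promotes this to all $\beta \in \C$, so every Taylor coefficient at $\beta = 0$ vanishes. The order-zero coefficient is $[U_A(g), \Pi_\Lambda]$, and the order-one coefficient collapses to $-[U_A(g), H]\Pi_\Lambda$ after commuting $U_A(g)$ past $\Pi_\Lambda$; the resulting constraint $[U_A(g), H]\Pi_\Lambda = 0$ directly contradicts \EC.

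The only auxiliary statement that requires an independent verification is $[U_A(g), \Pi_\Lambda] = 0$, which makes the order-zero constraint automatic and lets the order-one constraint collapse to the clean form above. I would prove it as a short lemma using Abelianness and on-site action: the total charge projector factorizes as $\Pi_\Lambda = \sum_{\lambda_A \lambda_B = \Lambda} \Pi_{\lambda_A}^A \otimes \Pi_{\lambda_B}^B$, and each local charge projector commutes with the corresponding local symmetry. The most delicate conceptual point, which I would handle by appeal to the appendix rather than reprove, is the separable-implies-symmetrically-separable step used for the strongly symmetric canonical ensemble in part 2; this is what bridges ordinary entanglement of $\rho_{\beta,\Lambda}$ and the symmetry-aware framework in which the analyticity argument naturally lives.
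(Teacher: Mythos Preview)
Your proposal is correct and follows essentially the same analyticity-plus-contraposition argument the paper gives in the paragraphs preceding the theorem; the only differences are that you make explicit a few steps the paper leaves implicit, namely stripping off the partition function to work with the entire function $e^{-\beta H}$ (resp.\ $e^{-\beta H}\Pi_\Lambda$), invoking Lemma~\ref{lem:equivalence_of_symm_separability} to pass from separability to symmetric separability of $\rho_{\beta,\Lambda}$, and verifying $[U_A(g),\Pi_\Lambda]=0$ so that the first-order Taylor coefficient reduces cleanly to $[U_A(g),H]\Pi_\Lambda$.
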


\textit{\textbf{Genericness and local indistinguishability ---}}
We now intend to prove that \SEC\ and \EC\ are generically satisfied. The crucial property of these conditions is that they are linear in $H$; for any \( H \) that violates $\SEC / \EC$, adding a perturbation \( V \) that satisfies $\SEC / \EC$ ensures that \( H + \epsilon V \) also satisfies for all \( \epsilon \neq 0 \). Thus, proving the existence of entangling perturbations suffices to prove that fine-tuning $H$ is required to violate $\SEC / \EC$.  

For example, consider a qubit chain with $ G =\mathbb{Z}_2$ symmetry generated by \( \bigotimes_i Z_i \). A fine-tuned Hamiltonian $H_Z$ consisting solely of \( Z \) terms yields separable thermal states $\rho_\beta$ and $\rho_{\beta, \lambda}$. However, the operators \( X_i \) and \( Y_i \) carry charge \(-1\), so \( V_{ij} = X_i X_j \) or \( Y_i Y_j \) is an entangling perturbation when \( i \in A \) and \( j \in B \), making $H_Z + \epsilon V_{ij}$ satisfy both \EC\ and \SEC, even for infinitesimal $\epsilon$.

For other symmetry groups, an entangling $V$ that satisfies \SEC\ can be similarly constructed by pairing a charged operator in \(A\) with an oppositely charged one in \(B\), so that \( [U(g), V] = 0 \) but \( [U_A(g), V] \neq 0 \). The \EC\, however, is harder to establish to in general, due to the projector $\Pi_{\Lambda}$. In fact, there are some charge sectors for which \EC\ is not satisfiable: e.g. the maximal or the minimal charge irrep of a $U(1)$ action on a system of qubits as $\theta \mapsto e^{i \theta \sum_i Z_i}$ leads to an unentangled canonical ensemble, as the projector is rank one, negating \EC regardless of $V$. 

We can avoid such scenarios if we specialize to \textit{finite} Abelian symmetries with homogeneous symmetry action, $U = u^{\otimes N}$, which allows us to use local indistinguishability of symmetric subspaces to justify the existence of $V$ satisfying both \EC\ and \SEC. Consider a large tripartition $A|B|C$, where $B$ is a buffer region that separates $A$ and $C$, taken to be large enough such that there are no local Hamiltonian terms in $H$ that is supported on $AC$. Since $\Pi_\Lambda$ is maximally mixed with just a \textit{global} charge constraint, we expect $\Tr_{C}\Pi_{\Lambda} \approxprop  \one_{AB}$, which we can quantitatively establish for finite Abelian groups (See \cite{supp}). This implies that tracing out a sufficiently large region disjoint from the support of $V$ and $A$ reduces \( [U_A(g), V] \Pi_\Lambda = 0 \) to \( [U_A(g), V] \approx 0 \), with equality in the thermodynamic limit. In fact, for finite Abelian groups, we can further show that \SEC\ and \EC\ are actually equivalent for large enough regions in the thermodynamic limit (See \cite{supp}). 

Not only is $\rho_{\beta,\Lambda}$ entangled when \EC\ is satisfied, but it is also locally equivalent to $\rho_{\beta}$. We prove this for $k$-local Hamiltonians symmetric under finite $G$ at high temperatures, in \cite{supp}.  


\textit{\textbf{Classification of irreps---}} Using local indistinguishability to construct entangling perturbations has two limitations: it assumes large system sizes, and its rigorous formulation requires finite Abelian groups. As already seen, infinite groups like $U(1)$ have irreps that are not only locally distinguishable to identity, but also lead to unentangled canonical ensembles. We now present a construction of entangling perturbations that goes beyond the local indistinguishability argument, using the property of semiuniformity of irreps.

\begin{definition}\label{def:semi-uniformity}
    A global irrep $\Lambda$ of a $N$-partite system is \textbf{uniform} if its projector $\Pi_\lambda$ is a uniform tensor product of an on-site irrep $\lambda$: $\Pi_\Lambda = \bigotimes_{i=1}^N \Pi_\lambda^{(i)}$. $\Lambda$ is \textbf{semiuniform} if its projector $\Pi_\Lambda$ is a direct sum of uniform tensor products: $\Pi_\Lambda = \sum_\alpha \bigotimes_{i=1}^N \Pi_{\lambda_\alpha}^{(i)}$. 
\end{definition}

A small-sized example of semiuniform irrep is the trivial irrep $\Lambda = 1$ of a $\Z_3$ action on $N=3$ qubits generated by $\tilde{Z}_1 \tilde{Z}_2 \tilde{Z}_3$, where $\tilde{Z} = \diag(1, e^{2\pi i/3})$. In this case, $\Pi_{\Lambda=1}$ projects onto the subspace generated by the uniform states $\ket{000}$ and $\ket{111}$. An example of an uniform irrep is the maximal or the minimal charge irrep of the aforementioned $U(1)$ symmetry on qubits.

By excluding semiuniform irreps, we are able to construct two-body entangling Hamiltonians:
\begin{theorem}\label{thm:no_semiuniform-entangling_ham}
    Consider a multipartite system with homogeneous symmetry action $U$. If $\Lambda$
    \begin{enumerate}
        \item is not semiuniform, then for any pair of sites $(i,j)$, there exists a two-body symmetric Hamiltonian $v_{i,j}$ supported on $\{i, j\}$ that entangles any bipartition that separates site $i$ from site $j$. Moreover, given a connected interaction graph, the uniform nearest-neighbor perturbation $\sum_{\langle i, j \rangle} v_{i,j}$ entangles all bipartitions.
        \item is semiuniform, then there is no entangling Hamiltonian with support over $N-1$ sites or fewer, and for all $(N-1)$-local Hamiltonian, $\rho_{\beta, \Lambda}$ is fully separable when $\rho_\beta$ is fully separable.
    \end{enumerate}  
\end{theorem}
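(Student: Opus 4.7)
The plan is to split the theorem by whether $\Lambda$ is semiuniform, and to exploit permutation invariance of $\Pi_\Lambda$ (which follows from homogeneity of $U$) in the non-semiuniform case, together with the rigid global uniform-charge structure in the semiuniform case.

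Part 1: Non-semiuniformity guarantees at least one non-uniform configuration $(\mu_1,\ldots,\mu_N)$ in $\Pi_\Lambda$; by a permutation I may place the differing on-site charges at any pair $(i,j)$, say $\mu_1 \neq \mu_2$, in which case the swapped configuration $(\mu_2,\mu_1,\mu_3,\ldots,\mu_N)$ also lies in $\Pi_\Lambda$. I then take the symmetric charge-swap
\begin{equation*}
    v_{i,j} = \ketbra{\mu_1,\mu_2}{\mu_2,\mu_1}_{i,j} + \mathrm{h.c.},
\end{equation*}
which preserves the total charge $\mu_1+\mu_2$ on $\{i,j\}$ but transfers it between the two sites. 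Using the witnesses $\ket{\psi}=\ket{\mu_1,\mu_2,\mu_3,\ldots}$ and $\ket{\phi}=\ket{\mu_2,\mu_1,\mu_3,\ldots}$, both in $\Pi_\Lambda$, one finds $\bra{\phi}[u_i(g),v_{i,j}]\ket{\psi}=\mu_2(g)-\mu_1(g)\neq 0$ for any $g$ distinguishing the two characters, verifying \EC. The same witness pair handles the uniform sum $V=\sum_{\langle k,l\rangle}v_{k,l}$: for every other cross-edge $(k',l')$, the matrix element $\bra{\phi}[u_{k'}(g),v_{k',l'}]\ket{\psi}$ vanishes, since if $\{k',l'\}$ is disjoint from $\{i,j\}$ the operator is trivial on the only sites where $\ket{\phi},\ket{\psi}$ disagree, and if it overlaps $\{i,j\}$ at one site the image still differs from $\ket{\phi}$ at the other. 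Hence only the $(i,j)$-term survives, the boundary sum is nonzero, and connectedness of the interaction graph provides such a cross-edge for every bipartition.

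Part 2: The core step I would establish is a locality lemma---any symmetric operator $H$ supported on a proper subset $S\subsetneq\{1,\ldots,N\}$ has $H_S$ commuting with each on-site uniform block $\Pi_{S,\alpha}=\bigotimes_{i\in S}\Pi_{\lambda_\alpha}^{(i)}$. If instead $H_S\Pi_{S,\alpha}$ had a non-uniform component $\ket{\mu_S}$ of total charge $|S|\lambda_\alpha$, then tensoring with $\ket{\lambda_\alpha}^{\otimes|S^c|}$ on the untouched sites would produce a non-uniform full state of total charge $\Lambda$, orthogonal to $\Pi_\Lambda$ by semiuniformity; pairing this against $H\ket{\psi_S,\lambda_\alpha^{|S^c|}}$ with $\ket{\psi_S}\in\Pi_{S,\alpha}$ collapses to $\bra{\mu_S}H_S\ket{\psi_S}$, which must vanish because $H$ commutes with $\Pi_\Lambda$ and maps $\Pi_\Lambda$-states into $\Pi_\Lambda$, contradicting the assumption. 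Given the lemma,
\begin{equation*}
    [U_A(g),H]\Pi_\Lambda = \sum_\alpha \bigl(U_{A\cap S}(g)-\lambda_\alpha(g)^{|A\cap S|}\bigr)\,H_S\Pi_{S,\alpha}\otimes \Pi_{S^c,\alpha} = 0,
\end{equation*}
since $H_S\Pi_{S,\alpha}$ lies in the $\lambda_\alpha(g)^{|A\cap S|}$-eigenspace of $U_{A\cap S}(g)$, ruling out \EC\ for every bipartition. The separability statement follows at once: $[H,\Pi_{\Lambda,\alpha}]=0$ implies $[\rho_\beta,\Pi_{\Lambda,\alpha}]=0$, so $\rho_{\beta,\Lambda}\propto\sum_\alpha\Pi_{\Lambda,\alpha}\rho_\beta\Pi_{\Lambda,\alpha}$, and the tensor factorization $\Pi_{\Lambda,\alpha}=\bigotimes_n\Pi_{\lambda_\alpha}^{(n)}$ turns a fully separable decomposition of $\rho_\beta$ term-by-term into a fully separable decomposition of each sandwich.

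The main obstacle is the locality lemma of Part 2: the lift from a local non-uniformity on $S$ to a global non-uniformity requires at least one untouched site in $S^c$ on which to place $\ket{\lambda_\alpha}$, explaining the sharpness of the $(N-1)$-local hypothesis, and it relies crucially on semiuniformity (non-uniform global configurations must be absent from $\Pi_\Lambda$). In Part 1 the potentially delicate non-cancellation in the nearest-neighbor sum is resolved by the matrix-element computation sketched above; the only bookkeeping subtlety is that on-site charge sectors may have multiplicity greater than one, which requires choosing specific basis representatives $\ket{\mu}$ within each on-site charge subspace but does not otherwise alter the argument.
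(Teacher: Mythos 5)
Your proof is correct and follows essentially the same route as the paper's: in Part 1 you use the identical charge-swap perturbation $v_{i,j}$, the same witness pair $\ket{\psi},\ket{\phi}$ differing only at the two chosen sites, and the same cross-edge cancellation for the uniform sum; in Part 2 your locality lemma $[H_S,\Pi_{S,\alpha}]=0$ (proved by lifting a putative off-block matrix element to a globally non-uniform configuration via the untouched sites) is a mild repackaging of the paper's direct computation $H\Pi_\Lambda=\sum_\alpha\Pi_\alpha H\Pi_\alpha$, which uses orthogonality of the projectors on the untouched $N$-th site. Both arguments rest on the same two ingredients --- permutation invariance of $\Pi_\Lambda$ in the non-semiuniform case, and the existence of at least one untouched site combined with semiuniformity in the other --- so there is no gap.
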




We now sketch the construction of the entangling $v_{i,j}$, leaving a full proof to the Supplemental Materials (S.M.)\cite{supp}. If the irrep is not semiuniform, then for any pair of sites $i,j$ there exist at least two local irreps $\lambda' \neq \lambda$.
Then, let $\sigma_i^+ \defeq \!\ketbra{\lambda'}{\lambda}_i$ and $\sigma_j^+ \defeq \!\ketbra{\lambda'}{\lambda}_j$. Denoting their Hermitian conjugates by $\sigma_{i,j}^- \defeq (\sigma_{i,j}^+)^\dagger$, we choose $v_{i,j} \defeq \sigma_i^+ \sigma_j^- + \sigma_i^- \sigma_j^+$. It is easy to check that $v_{i,j}$ is symmetric and satisfies the entangling condition Eq.~\ref{eq:entangling_cond} if $i \in A$, since it pairs a charged operator in $A$ with an opppositely charged one in $B$.


In fact, we can also show that semiuniform irreps are rare: if the symmetry is homogeneous and independent of the number of sites, the fraction of semiuniform global irreps goes to zero in the thermodynamic limit as $O(2^d/N)$, where $d$ is the on-site Hilbert space dimension. Furthermore, if $G$ is finite, there are no semiuniform global irreps for $N > |G|$~\cite{supp}. 

\textit{\textbf{Persistence of negativity---}}
We can also ask whether $\rho_{\beta, \Lambda}$ remains entangled while exhibiting zero negativity. This question was answered negatively in Ref. \cite{ma_symmetric_2022} for $U(1)$ symmetries and in Ref. \cite{kim2025persistenttopologicalnegativityhightemperature} for the classical Ising Hamiltonian $H = \sum_i Z_i Z_{i+1}$ with $\Z_2$ symmetry generated by $\prod_i X_i$. In S.M.~\cite{supp}, we report the same conclusion for the cluster chain Hamiltonian $H = \sum_i Z_{i-1} X_i Z_{i+1}$ with the same symmetry (See Fig. \ref{fig:cluster_chain_plot}). 

We generalize these examples by showing that the entanglement condition of Eq. \eqref{eq:entangling_cond} also implies nonzero negativity, and thus the absence of bound entanglement with positive partial transpose (\PPT)~\footnote{The logarithmic negativity upper bounds distillable entanglement; however non-zero negativity does not necessarily imply no bound or undistillable entanglement, as there may exist bound entanglement which is nevertheless non-\textsf{PPT}.}. To establish this, we begin by defining the basis with respect to which the partial transpose is taken.

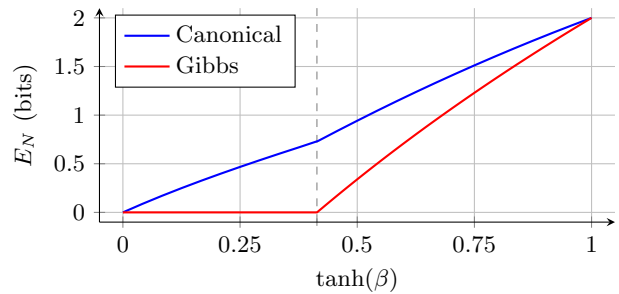
\begin{figure}[t]
    \centering
    \begin{tikzpicture}
    \begin{axis}[
        axis lines = left,
        xlabel={$\tanh(\beta)$}, 
        ylabel=$E_N$ (bits),
        legend pos=north west,
        domain=0:1,
        enlargelimits=0.05,
        y post scale=0.5,
        legend cell align={left},
        grid=both,
        xtick={0, 0.25, 0.5, 0.75, 1},
        ytick={0, 0.5, 1, 1.5, 2},
        minor xtick = {0.414},
        minor grid style = {dashed, gray},
    ]   
        \addplot[thick,blue] 
        {ln(4*(
        3*abs(1+x^2)^2 +
        4*abs(x-x^3) +
        8*abs(x+x^3) +
        2*abs(2-2*x^4) +
        abs(1-6*x^2+x^4)
        )/2^5)
        / ln(2)
        };
        \addlegendentry{Canonical};

        \addplot[thick,red,samples=100]
        {2*ln((
        abs(1+2*x-x^2) + 
        2*abs(1+x^2) +
        abs(1-2*x-x^2)
        )/4)
        / ln(2)
        };
        \addlegendentry{Gibbs};
    \end{axis}
\end{tikzpicture}
    \caption{Entanglement negativity $E_N$ for the Gibbs and canonical ensembles of the cluster chain Hamiltonian $H = \sum_{i=1}^N Z_{i-1} X_i Z_{i+1}$ under $\Z_2$ symmetry generated by $\prod_i X_i$ in the limit $N \to \infty$. Both curves are independent of the entangling interval size $|A| \geq 2$, and the canonical one, of the charge sector.}
    \label{fig:cluster_chain_plot}
\end{figure}

For the case of on-site Abelian symmetries $U = \bigotimes_i u_i$, we choose to do the partial transpose in a region $A$ with respect to a basis that diagonalizes $U_A$, such that $\forall g \in G, U_A(g)^{T_A} = U_A(g)$, and thus we would have $\Pi_{\Lambda}^{T_A} = \Pi_{\Lambda}$. 
Now consider the operator $\tilde{\Pi}_\Lambda\,\rho_{\beta,\Lambda}^{T_A}\,\tilde{\Pi}_\Lambda$, with $\tilde{\Pi}_\Lambda \defeq \one - \Pi_{\Lambda}$. It is traceless, as $\Tr(\tilde{\Pi}_\Lambda\,\rho_{\beta,\Lambda}^{T_A}\,\tilde{\Pi}_\Lambda)\propto \Tr(e^{-\beta H} \Pi_\Lambda \tilde{\Pi}_\Lambda^{T_A})=0$, and if $\rho_{\beta, \Lambda}$ has \PPT, then $\tilde{\Pi}_\Lambda\,\rho_{\beta,\Lambda}^{T_A}\,\tilde{\Pi}_\Lambda$ is also positive semidefinite. Hence, it must be identically zero. Just like before, assuming $\rho_{\beta, \Lambda}$ has \PPT\ in a sequence of $\beta$ converging to zero implies the derivative of $\tilde{\Pi}_\Lambda\,\rho_{\beta,\Lambda}^{T_A}\,\tilde{\Pi}_\Lambda$ at $\beta = 0$ must vanish, which gives a necessary condition for zero negativity at high temperatures: $(\one - \Pi_{\Lambda}) (H \Pi_{\Lambda})^{T_A} (\one - \Pi_{\Lambda}) = 0$.

Negating this necessary condition gives a sufficient one for \POE\ with non-zero negativity for a bipartition $A|B$:
\begin{align}\label{eq:negativity_cond}
    \left(\one - \Pi_{\Lambda}\right) (H \Pi_{\Lambda})^{T_A} &\left(\one - \Pi_{\Lambda}\right) \neq 0. \nonumber \\ &~\text{[Negativity condition (\textsf{NC})]}
\end{align}

In S.M.~\cite{supp}, we show the equivalence of the entangling and negativity conditions, $\text{EC} \iff \text{NC}$. Combining the results so far, we have the general implication for \POE\ in canonical ensembles, and all the results quoted in Theorem~\ref{thm:informal}.

\textit{\textbf{Fermions---}} 
We now discuss fermionic systems, where the physically relevant states are always (weakly) symmetric under fermion parity. Consider a local Hilbert space $\mathcal{H}$ and an operator algebra spanned by mutually anticommuting Majorana operators $\{c_j\}$ with $\{c_j, c_k\} = 2\delta_{jk}$, with the fermion parity operator $P \equiv \prod_{i}c_i$. Physical density operators preserve $P$ and their set is referred to as $\mathcal{S}(\mathcal{H})$. Any (symmetrically) separable state under a bipartition $A|B$ must also preserve the local parity operators $P_A, P_B$, where $P_{A(B)} \equiv \prod_{i\in A(B)}c_i$~\cite{Banuls_2007}. Our previous results thus imply that the fermion parity superselection rule generically protects entanglement from sudden death at high temperatures, agreeing with previous expectation \cite{parez2025fateentanglement}. Here, we establish a stronger result: even a computable entanglement measure, the fermionic negativity~\cite{Shapourian_2017, Shapourian_2019}, generically persists at arbitrarily high temperatures for both the Gibbs and the canonical ensembles.

Any density matrix $\rho \in \mathcal{S}(\mathcal{H}_A \otimes \mathcal{H}_B)$ can be expanded in Majorana monomials $a_{p_1} \cdots a_{p_{k_1}} b_{q_1} \cdots b_{q_{k_2}}$, where $a_{p_j}$ and $b_{q_j}$ act on $A$ and $B$ respectively, and $k_1 + k_2$ is even by fermion parity. The fermionic partial transpose $\mathcal{T}_A$ multiplies such monomial by the phase $i^{k_1}$. The fermionic negativity, defined as $E_{N}^f =\log ||\rho^{\mathcal{T}_A}||_1$ is a more sensitive measure of fermionic entanglement~\cite{Shapourian_2019} than its ``bosonic'' counterpart~\cite{Eisler_2015, Eisert_2018}, which is defined without the phase factor $i^{k_1}$. 

We proceed by contradiction to show that $E_{N}^f$ persists generically for the \emph{Gibbs ensemble} $\rho_\beta \propto e^{-\beta H}$, conjectured in~\cite{Shapourian_2019finite_temperature}. Assume $\rho_{\beta}$ has zero fermionic negativity, i.e. $||\rho_{\beta}^{\mathcal{T}_A}||_1 = 1$, for a sequence of $\beta$ converging to $0$. This, combined with the fact that $\Tr{\rho_{\beta}^{\mathcal{T}_A}} = \Tr{\rho_{\beta}} = 1$, implies that $\rho_{\beta}^{\mathcal{T}_A}$ is Hermitian~\footnote{This is proved in Lemma 2.2. of ~\cite{li_characterizations_2015}: $||A||_1 = \Tr{A}$ if, and only if, $A \geq 0$.}. By taking the derivative of $\rho_{\beta}^{\mathcal{T}_A} - \left(\rho_{\beta}^{\mathcal{T}_A}\right)^{\dagger} = 0$ at $\beta=0$, we have a necessary condition for zero $E_{N}^f$ at high temperature:
\(
H^{\mathcal{T}_A} - (H^{\mathcal{T}_A})^\dagger = 0,
\)
whose negation gives a sufficient condition for the persistence of $E_N^f$,
\begin{equation}
    (H^{\mathcal{T}_A})^\dagger \neq H^{\mathcal{T}_A}.
\end{equation}

This is satisfied by the two-Majorana term \(V = ic_i c_j\), with \(i \in A\) and \(j \in B\), since $(ic_ic_j)^{\mathcal{T}_A}=-c_ic_j$ is not Hermitian. Note that this entangling condition never occurs for the ``bosonic'' negativity, since $H^{T_A}$ is always Hermitian. The same entangling perturbation also leads to the persistence of $E^{f}_{N}$ for the canonical ensemble with fermion parity. A similar idea also resolves a conjecture in~\cite{Shapourian_2019}, that fermionic states mixing local fermion parity have non-vanishing fermionic negativity, as their partial transposed state is necessarily not Hermitian. We explain both in S.M.~\cite{supp}. 


\textit{\textbf{Discussions---}}
Our results imply the persistence of entanglement at high temperatures in two related scenarios: when the thermal state is enforced to be strongly symmetric (canonical ensemble), and when the symmetry is enforced at the level of decomposition of the mixed state into pure states (Gibbs ensemble with superselection). We also discuss the generic nature of this entanglement, showing the existence of local Hamiltonian perturbations which lead to entanglement at arbitrarily high temperatures, as well as a complete classification of charge sectors with \POE\ for homogeneous on-site global Abelian symmetries. Ultimately, genericness follows because the set of symmetrically separable states is contained in a lower dimensional subspace given by linear constraints.

These results can be generalized in various ways. Firstly, the results of the sufficient conditions for \POE\ also extend to Abelian higher form and subsystem symmetries. In fact, strongly symmetric versions of toric codes (with Abelian one form symmetries) already show a persistence of negativity, as can be inferred from~\cite{Hart_2018}. A more careful study of higher form and other symmetries, including antiunitary symmetries, is left for a future study. 

Some open questions about the quantification of entanglement at high temperatures remain: our results only imply that a faithful measure of entanglement (and negativity) will remain non-zero at arbitrarily high temperatures for any fixed system size $N$, but it could vanish as $N \to \infty$. Calculations on stabilizer models, however, suggest that the entanglement negativity does not vanish at non-zero $\beta$ even in the thermodynamic limit $N \to \infty$, which will be useful to establish more generally. This is also relevant to the question of local detectability of the entanglement in $\rho_{\beta, \Lambda}$ for temperatures past \SDOE\ of $\rho_{\beta}$.  

It is also interesting to explore whether quantum resources beyond entanglement, such as magic and Gaussianity~\cite{ramkumar2025hightemperaturefermionicgibbsstates}, undergo sudden death for the canonical ensemble. Lastly, it would be useful to investigate the complexity of preparation of the canonical ensemble, in view of the recent interest in similar questions about the Gibbs ensemble~\cite{chen2023quantumthermalstatepreparation, chen2023efficientexactnoncommutativequantum,rouze2024efficientthermalizationuniversalquantum, bakshi2025hightemperaturegibbsstatesunentangled}.

\begin{acknowledgments}
    We acknowledge helpful discussions with Amin Moharramipour, Yìlè Yīng, Pablo Sala, Ramanjit Sohal, Farzin Salek, Timothy H. Hsieh, and Chong Wang. L.A.L. and A-R.N. acknowledge support from the Natural Sciences and Engineering Research Council of Canada (NSERC) under Discovery Grants No. RGPIN-2018-04380 (L.A.L. and A-R.N.) and No. RGPIN-2020-04688 (L.A.L.). S.S. thanks the Erwin Schr\"odinger Institute, Vienna, Austria, where part of this work was done. This work was also supported by an Ontario Early Researcher Award. Research at Perimeter Institute is supported in part by the Government of Canada through the Department of Innovation, Science and Industry Canada and by the Province of Ontario through the Ministry of Colleges and Universities.
\end{acknowledgments}

\bibliography{bib}

\newpage

\onecolumngrid

\newpage
\begin{center}
{\Large\textbf{Supplemental Material: Symmetry enforces entanglement at high temperatures}\\
}
\end{center}

\section{Emergence of canonical ensemble}\label{appsec:emergence}

Consider a Hamiltonian $H$ with a symmetry described by a group $G$, where $H$ commutes with the unitary representations $U(g)$ for all $g \in G$. Consider a one dimensional irrep $\Lambda$ of the symmetry, labeled by the eigenvalues of the group generators, $U(g)\ket{\Lambda, \alpha} = \Lambda(g)\ket{\Lambda, \alpha}$, with $\alpha$ labeling the multiple representations of the irrep in the local tensor product Hilbert space. Note that the requirement of a one dimensional irrep is naturally satisfied for an Abelian group.

The projector onto the irrep space of $\Lambda$ is given by
$\Pi_{\Lambda} = \sum_{\alpha}\ket{\Lambda, \alpha}\bra{\Lambda, \alpha}$. A strongly symmetric mixed state $\rho$ in the symmetric subspace is one that satisfies the constraint
$\rho = \Pi_{\Lambda} \rho \Pi_{\Lambda}$. A strongly symmetric quantum channel $\mathcal{E} = \sum_{a} K_a (\cdot) K_a^\dagger$ is one whose Kraus operators $\{ K_a \}$ commute with the symmetry: $\forall g \in G, [K_a, U(g)] = 0$ \cite{Buca_2012, Albert_2014, albert_lindbladians_2018}. Importantly, a strongly symmetric quantum channel preserved the strong symmetry:

\begin{lemma}\label{lemma:symmetric_evolution}
A strongly symmetric mixed state remains in the symmetric subspace when evolved with a strongly symmetric quantum channel.
\end{lemma}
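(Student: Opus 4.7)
The plan is to show that each Kraus operator $K_a$ commutes with the irrep projector $\Pi_\Lambda$, and then sandwich the invariance relation $\rho = \Pi_\Lambda \rho \Pi_\Lambda$ through the channel action.

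First, I would recall that the projector onto the isotypic component can be written as the group average $\Pi_\Lambda = \sum_{g \in G} \overline{\Lambda(g)}\, U(g)$ (up to a normalization absorbed into $\Lambda$), so $\Pi_\Lambda$ is a linear combination of symmetry unitaries. Since by hypothesis $[K_a, U(g)] = 0$ for every $a$ and every $g \in G$, linearity gives $[K_a, \Pi_\Lambda] = 0$. Taking Hermitian conjugates of this identity (and using $\Pi_\Lambda^\dagger = \Pi_\Lambda$) also yields $[K_a^\dagger, \Pi_\Lambda] = 0$.

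Next, using strong symmetry of $\rho$ in the form $\rho = \Pi_\Lambda \rho \Pi_\Lambda$, I would compute
\begin{equation}
    \mathcal{E}(\rho) = \sum_a K_a \rho K_a^\dagger = \sum_a K_a \Pi_\Lambda \rho \Pi_\Lambda K_a^\dagger = \Pi_\Lambda \Bigl( \sum_a K_a \rho K_a^\dagger \Bigr) \Pi_\Lambda = \Pi_\Lambda\, \mathcal{E}(\rho)\, \Pi_\Lambda,
\end{equation}
where the two commutation relations above are used to pull $\Pi_\Lambda$ past each $K_a$ and $K_a^\dagger$. This is exactly the strong symmetry condition for $\mathcal{E}(\rho)$, proving the lemma.

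There is essentially no obstacle here: the only subtlety is the identification of $\Pi_\Lambda$ as a function of the symmetry unitaries, which reduces the hypothesis $[K_a, U(g)] = 0$ to $[K_a, \Pi_\Lambda] = 0$. I would mention in a short remark that the argument extends without change to continuous $G$ by replacing the sum over $g$ with the Haar integral, and to higher-dimensional irreps by restricting to the block-diagonal structure in the isotypic decomposition.
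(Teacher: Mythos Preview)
Your proof is correct and follows essentially the same approach as the paper: both arguments use $[K_a,\Pi_\Lambda]=0$ to pull the projectors through the Kraus operators in the chain $\Pi_\Lambda\,\mathcal{E}(\rho)\,\Pi_\Lambda = \sum_a K_a \Pi_\Lambda \rho \Pi_\Lambda K_a^\dagger = \mathcal{E}(\rho)$. The only difference is that you make explicit why $[K_a,\Pi_\Lambda]=0$ holds (via the group-average formula for $\Pi_\Lambda$), whereas the paper invokes this commutation directly.
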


\begin{proof}
Suppose the initial state $\rho_i$ is in the symmetric subspace, i.e., $\rho_i = \Pi_{\Lambda} \rho_i \Pi_{\Lambda}$. Then, the evolved state $\rho_f = \mathcal{E}(\rho_0)$ under a strongly symmetric channel $\mathcal{E} = \sum_a K_a (\cdot) K_a^\dagger$ also satisfies the condition:
\begin{align}
     \Pi_{\Lambda} \rho_f \Pi_{\Lambda} & = \sum_{a} \Pi_{\Lambda} K_a\rho_0 K_a^\dagger\Pi_{\Lambda} = \sum_{a} K_a \Pi_{\Lambda} \rho_0 \Pi_{\Lambda}K_a^\dagger = \sum_{a} K_a  \rho_0 K_a^\dagger = \rho_f.
\end{align}
\end{proof}

Now, we want to show that the canonical ensemble is a state that minimizes the free energy associated with a local Hamiltonian when the states are restricted to a symmetric subspace. This would imply that such a state would naturally be the steady state of a symmetric mixing quantum process in the presence of a bath.


The free energy associated with a state at a temperature $\beta^{-1}$ is given by,
\begin{align}
    F(\rho) = \Tr \rho H -\beta^{-1}S(\rho).
\end{align}

We can prove the following result,

\begin{lemma}\label{thm:emergence_gibbs}
    The canonical ensemble minimizes the free energy for all mixed states in the symmetric subspace. Specifically, for any state $\rho$ such that $\rho = \Pi_{\Lambda}\rho \Pi_{\Lambda}$, we have,
    \begin{align}
        F(\rho_{\beta, \Lambda}) - F(\rho) \leq 0.
    \end{align}
\end{lemma}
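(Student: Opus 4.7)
The plan is to reduce the claim to the non-negativity of the quantum relative entropy $S(\rho \| \sigma) \defeq \Tr[\rho \log \rho] - \Tr[\rho \log \sigma] \geq 0$ (Klein's inequality), evaluated at $\sigma = \rho_{\beta, \Lambda}$. This is exactly the standard variational characterization of the Gibbs state, with the only new ingredient being the projector $\Pi_\Lambda$.

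First, I would handle the support subtlety: $\rho_{\beta,\Lambda} = e^{-\beta H} \Pi_\Lambda / Z_\Lambda$ with $Z_\Lambda \defeq \Tr[e^{-\beta H} \Pi_\Lambda]$ is not full-rank, since it lives on the isotypic component $\im \Pi_\Lambda$. However, the hypothesis $\rho = \Pi_\Lambda \rho \Pi_\Lambda$ guarantees $\supp \rho \subseteq \supp \rho_{\beta,\Lambda}$, so the relative entropy is finite. Within this subspace, $\log \rho_{\beta,\Lambda} = -\beta H \Pi_\Lambda - (\log Z_\Lambda)\Pi_\Lambda$, where the equality is interpreted as operators restricted to $\im \Pi_\Lambda$ (and everything annihilates its complement when paired with $\rho$).

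Next, using $\rho \Pi_\Lambda = \Pi_\Lambda \rho = \rho$ and the fact that $H$ commutes with $\Pi_\Lambda$ by symmetry, I would compute
\begin{equation}
    \Tr[\rho \log \rho_{\beta,\Lambda}] = -\beta \Tr[\rho H] - \log Z_\Lambda,
\end{equation}
so that Klein's inequality rearranges to
\begin{equation}
    \beta F(\rho) = \beta \Tr[\rho H] - S(\rho) \geq -\log Z_\Lambda.
\end{equation}
Evaluating the same identity on $\rho = \rho_{\beta,\Lambda}$ gives equality, i.e. $\beta F(\rho_{\beta,\Lambda}) = -\log Z_\Lambda$. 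Subtracting the two yields
\begin{equation}
    F(\rho) - F(\rho_{\beta,\Lambda}) = \beta^{-1} S(\rho \| \rho_{\beta,\Lambda}) \geq 0,
\end{equation}
which is the desired bound.

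The only real obstacle is bookkeeping around the rank-deficiency of $\rho_{\beta,\Lambda}$: one must be careful that $\log \rho_{\beta,\Lambda}$ is only defined on $\im \Pi_\Lambda$, but the assumption that $\rho$ is strongly symmetric in the same sector makes every trace well-defined. Combined with Lemma \ref{lemma:symmetric_evolution}, which ensures that a symmetric thermalizing channel never takes $\rho$ outside of $\im \Pi_\Lambda$, this variational principle justifies $\rho_{\beta,\Lambda}$ as the natural steady state and hence confirms the physical emergence of the canonical ensemble.
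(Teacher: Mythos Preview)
Your proof is correct and follows essentially the same approach as the paper: both reduce the claim to the identity $F(\rho) - F(\rho_{\beta,\Lambda}) = \beta^{-1} D(\rho \,\|\, \rho_{\beta,\Lambda})$ and then invoke non-negativity of the relative entropy. Your treatment of the rank-deficiency (restricting to $\im \Pi_\Lambda$ so that $\log \rho_{\beta,\Lambda} = -\beta H - \log Z_\Lambda$ there) is slightly cleaner than the paper's, which instead writes $\ln \rho_{\beta,\Lambda} = \ln \Pi_\Lambda - \beta H - \ln Z_\Lambda$ and argues that $\Tr[\rho \ln \Pi_\Lambda] = 0$ because $\ln \Pi_\Lambda$ vanishes on $\supp \rho$.
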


\begin{proof}
We study the quantity,
\begin{align}
    \Tr \rho \ln \rho_{\beta, \Lambda} = \Tr \rho \ln \Pi_{\Lambda} - \beta \Tr \rho H - \ln \Tr \Pi_{\Lambda}e^{-\beta H}.
\end{align}
For strongly symmetric states $\rho = \Pi_\Lambda \rho \Pi_\Lambda$, the first term above, $\Tr \rho \ln \Pi_{\Lambda}$, is zero, since $\ln \Pi_\Lambda$ is zero on the support of $\rho$.

For the canonical ensemble, we have,
\begin{align}
    F(\rho_{\beta, \Lambda}) = \Tr \rho_{\beta, \Lambda} H + \beta^{-1}\Tr \rho_{\beta, \Lambda} \ln \rho_{\beta, \Lambda} = -\beta^{-1} \ln \Tr \Pi_{\Lambda}e^{-\beta H}.
\end{align}

For any strongly symmetric state $\rho$, we have,
\begin{align}
    \Tr \rho \ln \rho_{\beta, \Lambda} &= -\beta \Tr \rho H - \ln \Tr \Pi_{\Lambda}e^{-\beta H} \nonumber\\&= -\beta (F(\rho) - \beta^{-1} \Tr \rho \ln \rho) - \ln \Tr \Pi_{\Lambda}e^{-\beta H}\nonumber \\
    &= \beta\left(F(\rho_{\beta, \Lambda}) - F(\rho)\right) + \Tr \rho \ln \rho.
\end{align}
Thus, 
\begin{align}
    F(\rho_{\beta, \Lambda}) - F(\rho) &= \beta^{-1}\left(\Tr \rho \ln \rho_{\beta, \Lambda} - \Tr \rho \ln \rho\right) \nonumber \\&= - \beta^{-1} D(\rho || \rho_{\beta, \Lambda}).
\end{align}
By the positivity of relative entropy, we have the desired result.
\end{proof}

\section{\texorpdfstring{Two qubit example with $\Z_2$ symmetry}{Two qubit example with Z2 symmetry}}\label{appsec:two_qubit}
In this section, we pedagogically examine the case of two qubits with the $\Z_2$ symmetry generated by $X_1 X_2$. The persistence of entanglement results of Theorem \ref{thm:poe} are valid for such symmetry action, but some others are not due to the small system size. More specifically, the projection term $\Pi_\lambda$ cannot be removed from the \EC\ $[U_A(g), H] \neq 0$.

Indeed, consider the 2-qubit XYZ Hamiltonian $H = X_1 X_2 + J(\frac{1+\gamma}{2} Y_1Y_2 + \frac{1-\gamma}{2} Z_1Z_2) = X_1X_2 + J Y_1Y_2 (\gamma \Pi_{+1} + \Pi_{-1})$, with added restriction to the $\Z_2$-symmetric sector $X_1X_2 = +1$. The negativity of its canonical ensemble state $\rho_{\beta,+1} \propto e^{- \beta H} \Pi_{+1}$ can be computed to be $\mathcal{N}(\rho_{\beta,+1}) = \frac{1}{2}|\tanh(\beta J \gamma)|$. Thus, for all temperatures, $\rho_{\beta, +1}$ is separable if $\gamma = 0$. However, if $J \neq 0$, $H$ does not commute with the partial symmetry action and the \SEC\ is satisfied: $[X_1, H] = i J (Z_1 Y_2 - Y_1 Z_2) \neq 0$, implying persistence of symmetric entanglement. If one instead includes the projection $\Pi_{+1}$, then we have $[X_1, H \Pi_{+1}] = 2 iJ \gamma Z_1 Y_2 \Pi_{+1} = 0$, as expected from the separability of $\rho_{\beta, +1}$. One might notice that this counterexample relies on the projections $\Pi_{\pm1}$ directly appearing in the expression of the Hamiltonian, which should not happen for a system of many parties and a local Hamiltonian.

We can visually represent this example in the case of $\gamma = 0$ with the diagram in Fig. \ref{fig:geometry_twoqubits}. It is an accurate representation of a slice in the space of quantum states akin to Fig. \ref{fig:geometry_schematic}. We can see that the Gibbs state (black dashed line, $J=1$) becomes symmetrically entangled as soon as $\beta > 0$, but has sudden death of (non-symmetric) entanglement at low enough $\beta$. At the same time, the canonical ensemble (thick line from $\frac{1}{2} \Pi_-$ to $\ket{\Psi_-}$) becomes entangled as soon as $\beta > 0$.

\begin{figure}[ht]
    \centering
    \begin{tikzpicture}[scale=2]
    \fill[orange!10] (0,1) node[above] {$\ketbra{\Psi_-}{\Psi_-}$} 
                  -- (0,-1) node[below] {$\ketbra{\Phi_-}{\Phi_-}$} 
                  -- ($sqrt(3)*(1,0)$) node[right] {$\frac{1}{2} \Pi_+$}
                  -- cycle;

    \draw[fill=cyan!40!white, draw=cyan] (0,0) -- ($0.5*sqrt(3)*(1,0)+0.5*(0,1)$) -- ($sqrt(3)*(1,0)$)  node[near start,sloped,above,cyan] {\footnotesize\SEP} -- ($0.5*sqrt(3)*(1,0)+0.5*(0,-1)$) -- cycle;
    
    \draw[draw=blue, very thick] (0,0) -- ($sqrt(3)*(1,0)$);
    \draw[blue] (1.13, 0.1) node {\footnotesize \SymSEP};   

    \draw (0,1) node[above] {$\ketbra{\Psi_-}{\Psi_-}$} 
       -- (0,-1) node[below] {$\ketbra{\Phi_-}{\Phi_-}$} 
       -- ($sqrt(3)*(1,0)$) node[right] {$\frac{1}{2} \Pi_+$}
       -- cycle;

    \fill[black] ($0.5*sqrt(3)*(1,0)$) circle [radius=0.6pt] node[below] {$\one/4$};
    \fill[black] (0,0) circle [radius=0.6pt] node[left] {$\frac{1}{2} \Pi_-$};
    \fill[black] (0,1) circle [radius=0.6pt];
    \fill[black] (0,-1) circle [radius=0.6pt];
    \fill[black] ($sqrt(3)*(1,0)$) circle [radius=0.6pt];

    \draw[-{Stealth}, thick, dashed] ($0.5*sqrt(3)*(1,0)$) -- ($0.2*sqrt(3)*(1,0) + (0,0.6)$);
    \draw[thick, dashed]($0.2*sqrt(3)*(1,0) + (0,0.6)$) -- (0,1);
    \draw [-{Stealth}, very thick] (0,0) -- (0,0.5);
    \draw [very thick] (0,0.5) -- (0,1);
\end{tikzpicture}
    \caption{Subset of quantum states of two qubits formed by convex mixtures of the Bell states $\ket{\Psi_-} \propto \ket{01} -\ket{10}$ and $\ket{\Phi_-} = \ket{00} - \ket{11}$, and the maximally mixed $X_1X_2 = +1$ state $\frac{1}{2} \Pi_+ = \frac{1}{2} \ketbra{00}{00} + \frac{1}{2} \ketbra{11}{11}$. The dashed line traces the path of the Heisenberg antiferromagnetic thermal states $e^{-\beta \mathbf{S}_1 \cdot \mathbf{S}_2}$, or, equivalently, of the Werner states $\lambda \ketbra{\Psi_-}{\Psi_-} + (1-\lambda) \frac{1}{4} \one$. Meanwhile, the thick black line going from $\frac{1}{2} \Pi_-$ to $\ket{\Psi_-}$ traces the path of the canonical ensemble $e^{- \beta \mathbf{S}_1 \cdot \mathbf{S}_2} \Pi_{-1}$ in the $X_1 X_2 = -1$ sector. The \SEP\ region in cyan consists of separable states, with the blue \SymSEP\ line inside it indicating the symmetrically separable ones. Their loci were found via the Peres-Horodecki criterion, since it is necessary and sufficient for two-qubit systems \cite{Peres_1996, Horodecki_1996}. This illustration was inspired by Fig. 16.8(b) of \cite{bengtsson_geometry_2017}.}
    \label{fig:geometry_twoqubits}
\end{figure}
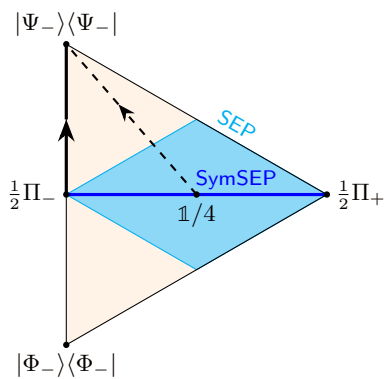

\section{Symmetric entanglement}
\label{appsec:symmetric_decompositions}


We briefly recall the definition of separability and symmetric separability. A state $\rho$ is separable for a bipartition $A|B$ if it can be represented as a convex mixture of tensor product states, $\rho = \sum_{i}p_i \rho_{A,i} \otimes \rho_{B,i}$. A separable state $\rho$ is also symmetrically separable if all $\rho_{A,i} \otimes \rho_{B,i}$ are also (weakly) symmetric.

Following these definitions, and the definitions of weak and strong symmetries, we prove the following lemmas,

\begin{lemma}\label{lem:equivalence_of_symm_separability}
    A strongly symmetric state $\rho$ is symmetrically separable if, and only if, it is separable.
\end{lemma}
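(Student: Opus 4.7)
The plan is to prove the nontrivial backward direction by showing that any pure-state separable decomposition of a strongly symmetric $\rho$ is \emph{already} symmetrically separable, with no refinement needed. The forward direction is immediate from the definitions.

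First I would write $\rho = \sum_i p_i \ketbra{\psi_{A,i}}{\psi_{A,i}} \otimes \ketbra{\psi_{B,i}}{\psi_{B,i}}$ in some pure-state separable decomposition (which exists by assumption). Fix a product basis adapted to the charge decomposition: each $\ket{a}_A$ has definite $A$-charge $\lambda_A(a)$ and each $\ket{b}_B$ has $B$-charge $\lambda_B(b)$. Strong symmetry, $\rho = \Pi_\Lambda \rho \Pi_\Lambda$, forces the diagonal matrix element $\bra{a,b}\rho\ket{a,b}$ to vanish whenever $\lambda_A(a)\lambda_B(b) \neq \Lambda$. Expanding in the separable decomposition yields
\begin{equation*}
    0 = \bra{a,b}\rho\ket{a,b} = \sum_i p_i \, \lvert\langle a|\psi_{A,i}\rangle\rvert^2 \, \lvert\langle b|\psi_{B,i}\rangle\rvert^2.
\end{equation*}
Since every summand is non-negative, each must vanish; equivalently (contrapositive), whenever $\langle a|\psi_{A,i}\rangle \neq 0$ and $\langle b|\psi_{B,i}\rangle \neq 0$ simultaneously, we must have $\lambda_A(a)\lambda_B(b) = \Lambda$.

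The next step is to use this to rigidly constrain each $\psi_{A,i}$ to lie in a single $A$-charge sector (and analogously for $B$). Suppose $\psi_{A,i}$ had non-zero components at two distinct $A$-charges $\alpha_1 \neq \alpha_2$, and let $\beta$ be any $B$-charge appearing with non-zero weight in $\psi_{B,i}$. Picking basis vectors $a_j \in \alpha_j$ and $b \in \beta$ with non-zero overlaps, the contrapositive above gives $\alpha_1 \beta = \Lambda = \alpha_2 \beta$, forcing $\alpha_1 = \alpha_2$---a contradiction. Here Abelian-ness is essential: one-dimensional irreps mean each $A$-charge $\alpha$ pairs with a \emph{unique} $B$-charge $\Lambda\alpha^{-1}$ to form total charge $\Lambda$. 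Hence each $\psi_{A,i}$ sits in some single $\Hilb_A^{(\alpha_i)}$, and $\psi_{B,i} \in \Hilb_B^{(\beta_i)}$ with $\alpha_i \beta_i = \Lambda$.

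To finish, I would note that a rank-one projector onto a vector of definite charge commutes with the on-site symmetry action (it is strongly symmetric by itself), so each $\ketbra{\psi_{A,i}}{\psi_{A,i}}$ is weakly symmetric under $U_A$, and similarly for $B$. The original decomposition thus already witnesses $\rho \in \SymSEP$. The main conceptual insight---and what I expect to be the only real subtlety---is that strong symmetry combined with positivity of each overlap $\lvert\langle a|\psi_{A,i}\rangle\rvert^2$ is stringent enough to force definite local charges on each pure product, with no twirling, averaging, or clever reparametrization required.
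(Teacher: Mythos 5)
Your proof is correct, but it takes a genuinely different route from the paper's. The paper's argument is a two-liner: it invokes the \emph{inheritance property} of strong symmetry (citing an external reference) --- every component $\rho_i$ of a convex decomposition $\rho = \sum_i p_i \rho_i$ of a strongly symmetric state is itself strongly symmetric, since $\supp(\rho_i) \subseteq \supp(\rho) \subseteq \Pi_\Lambda \Hilb$ --- applies it to the product components $\rho_{A,i} \otimes \rho_{B,i}$ of an arbitrary separable decomposition, and then observes that a strongly symmetric tensor product must have weakly symmetric factors. You instead pass to a pure-product refinement and give what amounts to a hands-on proof of that inheritance property in this special case: positivity of the diagonal matrix elements $\bra{a,b}\rho\ket{a,b}$ in a simultaneous charge eigenbasis forces each $\ket{\psi_{A,i}}$ into a single $U_A$-charge sector, with the local charges obeying $\alpha_i \beta_i = \Lambda$ (which is essentially the decomposition $\Pi_\Lambda = \sum_\lambda \Pi^A_\lambda \otimes \Pi^B_{\Lambda\overline\lambda}$ that the paper uses in the companion lemma). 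Your version buys self-containedness --- no citation, and an explicit identification of which charge sector each factor lives in; the paper's version buys brevity and works directly with mixed product components, without needing a pure-state refinement or a simultaneous eigenbasis of $U_A$ (it only needs that $U(g)\rho_A\otimes\rho_B = \Lambda(g)\rho_A\otimes\rho_B$ forces $U_A(g)\rho_A = c(g)\rho_A$ with $|c(g)|=1$, hence weak symmetry of each factor). One cosmetic point: you do perform a refinement from mixed products to pure products before your ``no refinement needed'' claim applies; this is harmless, since every separable state admits a pure-product decomposition, but the phrasing slightly oversells the rigidity.
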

\begin{proof}
    $(\Rightarrow)$ is true for any state. For the $(\Leftarrow)$ direction, given a separable and strongly symmetric state $\rho = \sum_i p_i \rho_{A,i} \otimes \rho_{B,i}$, each tensor product state $\rho_{A,i} \otimes \rho_{B,i}$ is also strongly symmetric due to the inheritance property of the strong symmetry \cite{lessa_higher-form_2025}. In particular, they are also weakly symmetric.    
\end{proof}
\begin{lemma}\label{lem:equivalence_of_symm_entanglement}
    If the symmetry action $U : G \to U(\Hilb)$ is on-site, $U = U_A \otimes U_B$, then a weakly symmetric state $\rho$ is symmetrically separable if, and only if, each of its strongly symmetric parts $\rho_\lambda \propto \rho \Pi_\Lambda$ is separable.
\end{lemma}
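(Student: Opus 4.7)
\textit{Proof plan for Lemma \ref{lem:equivalence_of_symm_entanglement}.} The forward direction follows by applying the isotypic decomposition locally. Starting from a symmetrically separable decomposition $\rho = \sum_i p_i \rho_{A,i} \otimes \rho_{B,i}$, each weakly symmetric factor can itself be written as $\rho_{A,i} = \sum_{\lambda_A} q_{A,i,\lambda_A}\, \rho_{A,i,\lambda_A}$, and analogously for $B$, where $\rho_{A,i,\lambda_A} \propto \rho_{A,i}\,\Pi_{\lambda_A}^{A}$ is strongly symmetric under $U_A$. Since the symmetry is on-site and $G$ is Abelian, the global projector factorizes as $\Pi_\Lambda = \sum_{\lambda_A \lambda_B = \Lambda} \Pi_{\lambda_A}^{A} \otimes \Pi_{\lambda_B}^{B}$. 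The plan is to insert both local decompositions into the expression $\rho\,\Pi_\Lambda$ and use this factorization to read off a separable decomposition of $\rho_\Lambda$ as a convex mixture of the tensor products $\rho_{A,i,\lambda_A} \otimes \rho_{B,i,\lambda_B}$ with $\lambda_A \lambda_B = \Lambda$.

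For the reverse direction, assume each $\rho_\Lambda$ is separable. Since $\rho_\Lambda$ is strongly symmetric by construction, Lemma \ref{lem:equivalence_of_symm_separability} upgrades separability to symmetric separability: each $\rho_\Lambda = \sum_i p_{\Lambda,i}\, \rho_{A,\Lambda,i} \otimes \rho_{B,\Lambda,i}$ with each local factor being (weakly) symmetric. Then I would reconstruct $\rho$ via its isotypic decomposition $\rho = \sum_\Lambda q_\Lambda \rho_\Lambda$ and collect all terms into a single convex combination indexed by $(\Lambda, i)$, which directly exhibits $\rho$ as symmetrically separable.

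The only nontrivial step is the product-of-projectors identity $\Pi_\Lambda = \sum_{\lambda_A \lambda_B = \Lambda} \Pi_{\lambda_A}^{A} \otimes \Pi_{\lambda_B}^{B}$ for on-site Abelian symmetries, which follows from $U(g) = U_A(g) \otimes U_B(g)$ and the orthogonality of characters. I expect this to be the main (mild) obstacle: everything else is bookkeeping of convex combinations and appealing to Lemma \ref{lem:equivalence_of_symm_separability} for the reverse direction.
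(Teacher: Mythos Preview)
Your proposal is correct and follows essentially the same approach as the paper. The only cosmetic difference is that in the forward direction you first expand each local factor into its isotypic decomposition before applying $\Pi_\Lambda$, whereas the paper applies the factorization $\Pi_\Lambda = \sum_\lambda \Pi_\lambda^A \otimes \Pi_{\Lambda\overline\lambda}^B$ directly to the separable decomposition of $\rho$; the resulting separable decomposition of $\rho_\Lambda$ is the same, and the reverse direction via Lemma~\ref{lem:equivalence_of_symm_separability} and the isotypic decomposition of $\rho$ is identical to the paper's.
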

\begin{proof}
$(\Rightarrow)$ Since the irrep projector $\Pi_\Lambda$ decomposes as $\Pi_\Lambda = \sum_{\lambda} \Pi_\lambda^A \Pi^B_{\Lambda \overline{\lambda}}$, then if $\rho = \sum_{i} p_i \rho_{A,i} \otimes \rho_{B,i}$, we have the following separable decomposition of $\rho_\lambda$:
\begin{align}
    \rho_\Lambda \propto \rho \Pi_\Lambda = \sum_{i, \lambda} p_i (\rho_{A,i} \Pi_{\lambda}^A) \otimes (\rho_{B,i} \Pi^B_{\Lambda \overline\lambda}).
\end{align}

$(\Leftarrow)$ follows from $\rho$ being a convex mixture of its strongly symmetric parts $\rho_\lambda$, each of which is separable, and thus symmetrically separable by Lemma \ref{lem:equivalence_of_symm_separability}.  
\end{proof}

These two lemmas prove our assertion in the main text: $\rho$ is symmetrically entangled (i.e. \textit{not} symmetrically separable) if, and only if, its strongly symmetric component $\rho_{\Lambda}$ is also entangled for some charge $\Lambda$.

\section{Local indistinguishability of the canonical and Gibbs ensembles}
\label{appsec:local_indistinguishability_a}
In this section, we explore whether the canonical ensemble $\rho_{\beta, \Lambda}$ is locally indistinguishable from the Gibbs ensemble $\rho_\beta$, up to an exponentially small error. We establish this for $\beta = 0$ for finite Abelian groups --- which we use to develop intuition about the equivalence of $\EC$ and $\SEC$ in the main text --- and for small enough $\beta < \beta_*$ if $H$ is a $k$-local Hamiltonian. 




We start by collecting some known facts about the symmetries and their representations. From the symmetry action of $G$ on the Hilbert space $\Hilb$, it gets decomposed into $\Hilb = \bigoplus_\lambda V_{\lambda}^{\otimes m_\lambda}$, where the direct sum is over all irreps $\lambda$ of $G$, with $m_\lambda$ being the multiplicities. The projectors onto $V_{\lambda}^{\otimes m_{\lambda}} \subseteq \Hilb$ can be expressed as
\begin{align}\label{eq:pi_decomp}
    \Pi_{\lambda} & = \frac{1}{|G|} \sum_{g \in G} \overline{\lambda(g)} U(g),
\end{align}
where $\lambda(g)$ is the character. They form an orthogonal resolution of identity. Namely,
\begin{align}
    \sum_\lambda \Pi_\lambda & = \one, \quad \Pi_{\lambda} \Pi_{\lambda'} = \delta_{\lambda, \lambda'}\Pi_\lambda.
\end{align}

For a bipartition $\Hilb = \Hilb_A \otimes \Hilb_B$, each Hilbert space $\Hilb_A$ and $\Hilb_B$ also decomposes into a direct sum of irrep spaces of $U_A$ and $U_B$:
\begin{align}
    \Hilb_A  \simeq \bigoplus_{\lambda} V_{\lambda}^{\otimes m^A_{\lambda}}; ~
    \Hilb_B  \simeq \bigoplus_{\mu} V_{\lambda}^{\otimes m^A_{\mu}}
\end{align}
where both direct sums are over all 1d irreps $\lambda$ and $\mu$ of $G$, with multiplicities $m_{\lambda}^A$ and $m_{\mu}^B$, respectively. Note that the multiplicities can be zero. Similar constructions as $\ref{eq:pi_decomp}$ can be done for projectors onto $V_{\lambda}^{\otimes m^A_{\lambda}} \subset \Hilb_A$ and $V_{\lambda}^{\otimes m^B_{\lambda}} \subset \Hilb_B$.

Two irreps, $\lambda$ and $\mu$, fuse together in the tensor product to an irrep $\lambda \otimes \mu \simeq \lambda \cdot \mu$ of $U = U_A \otimes U_B$, such that
\begin{align}
    \Hilb & \simeq \left(\bigoplus_{\lambda} V_{\lambda}^{\otimes m^A_{\lambda}}\right) \otimes \left(\bigoplus_{\mu} V_{\lambda}^{\otimes m^A_{\mu}}\right) \simeq \bigoplus_{\lambda, \mu} V_{\lambda\mu}^{\otimes(m^A_{\lambda} + m^B_\mu)} = \bigoplus_{\Lambda} V_\Lambda^{\otimes M_\Lambda},
\end{align}
where 
\begin{equation}
    M_\Lambda = \sum_{\lambda, \mu} \delta_{\Lambda,\lambda \mu} (m_\lambda^A + m_{\mu}^B) = \sum_{\lambda} (m^A_\lambda + m^B_{\Lambda \overline\lambda}).
\end{equation}

Similarly, the projector onto the irrep $\Lambda$ of $\Hilb$ decomposes as, 
\begin{align}
    \Pi_{\Lambda} & = \sum_{\lambda} \Pi^A_{\lambda} \otimes \Pi^B_{\Lambda \overline\lambda} \\
    & = \frac{1}{|G|^2} \sum_{g, h \in G} \left(\sum_{\lambda}\overline{\lambda(g)} \lambda(h) \right) \overline{\Lambda(h)}  U_A(g) \otimes U_B(h) \\
    & = \frac{1}{|G|} \sum_{g \in G} \overline{\Lambda(g)} U(g),
\end{align}
where, in the third line, we used Schur's orthogonality relations for Abelian groups: $\sum_{\lambda}\overline{\lambda(g)} \lambda(h) = |G| \delta_{g,h}$. Similar identities can be derived for multipartitions.

With these tools we can prove the following lemma:

\begin{lemma}\label{lemma:trace_irrep_projector}
    Given a finite Abelian group $G$, and an homogeneous symmetry action $U = u^{\otimes N}$, with $u$ 
    acting on a $d$-dimensional on-site Hilbert space, then, as $N \to \infty$,
    \begin{equation}
        \Tr[\Pi_\Lambda] = \frac{d^N}{|G / K|} (1 + O(e^{-N/\xi})),
    \end{equation}
    where $K = \{ g \in G \mid u(g) \propto \one \}$, and $\xi = - 1/ \log(\max_{g \notin K} \left|\Tr[u(g)]\right|/d)$.
\end{lemma}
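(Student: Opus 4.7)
The plan is to start from the projector formula $\Pi_\Lambda = \frac{1}{|G|} \sum_{g \in G} \overline{\Lambda(g)} U(g)$, take its trace, and exploit homogeneity of $U = u^{\otimes N}$ to reduce everything to on-site traces:
\begin{equation}
\Tr[\Pi_\Lambda] = \frac{1}{|G|} \sum_{g \in G} \overline{\Lambda(g)}\,(\Tr u(g))^N.
\end{equation}
I would then split the sum according to whether $g \in K$ or $g \notin K$. This split is natural because $g \in K$ is exactly the regime in which $|\Tr u(g)|$ saturates its maximum value $d$, while outside $K$ an exponential suppression kicks in.

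For $g \in K$, unitarity combined with $u(g) \propto \one$ forces $u(g) = c(g) \one$ for a phase $c(g) \in U(1)$, and $c : K \to U(1)$ is manifestly a character since $u$ is a representation. Hence $\Tr u(g) = d\, c(g)$, and the contribution of $K$ to the trace is $\frac{d^N}{|G|}\sum_{g \in K} \overline{\Lambda(g)}\, c(g)^N$. Character orthogonality on the finite Abelian subgroup $K$ reduces this to $\frac{d^N |K|}{|G|}\, \delta_{\Lambda|_K,\, c^N}$. The delta is $1$ whenever $\Pi_\Lambda \neq 0$, since by homogeneity $U(g) = c(g)^N \one$ for $g \in K$, so any irrep $\Lambda$ present in the decomposition of $\Hilb$ must restrict to $c^N$ on $K$. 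For $g \notin K$, the hypothesis $u(g) \not\propto \one$ combined with unitarity gives the strict bound $|\Tr u(g)| < d$ (the equality case of Cauchy--Schwarz saturates only for scalar unitaries); together with the definition of $\xi$ this yields $|\Tr u(g)|^N \leq d^N e^{-N/\xi}$ uniformly, so the tail contributes at most $d^N e^{-N/\xi}$ in absolute value.

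Combining the two contributions gives
\begin{equation}
\Tr[\Pi_\Lambda] = \frac{d^N |K|}{|G|}\bigl(1 + O(e^{-N/\xi})\bigr) = \frac{d^N}{|G/K|}\bigl(1 + O(e^{-N/\xi})\bigr),
\end{equation}
where the $N$-independent constant $|G/K|$ arising from dividing error by leading term is harmlessly absorbed into the $O(\cdot)$. There is no serious obstacle to this proof; the only mildly delicate points are identifying $c$ as a genuine character so that character orthogonality on $K$ applies, and verifying that the irreps $\Lambda$ for which the lemma is nontrivial are exactly those with $\Lambda|_K = c^N$, so that the leading term does not vanish and the exponential remainder controls the relative error.
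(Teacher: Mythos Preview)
Your proposal is correct and follows essentially the same route as the paper: start from the projector formula, use homogeneity to factor the trace, split the sum into $g \in K$ versus $g \notin K$, and bound the latter by $r^N$ with $r = \max_{g \notin K}|\Tr u(g)| < d$. The only cosmetic difference is that you invoke character orthogonality on $K$ to evaluate the leading sum, whereas the paper observes directly that $\Lambda(h)\Pi_\Lambda = U(h)\Pi_\Lambda = (\Tr[u(h)]/d)^N \Pi_\Lambda$ forces $\Lambda(h) = c(h)^N$ for $h \in K$, so each summand equals $d^N$ without needing orthogonality; your argument reaches the same conclusion via the same identification $\Lambda|_K = c^N$.
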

\begin{proof}
    Using that $\Pi_\Lambda = \sum_{g \in G} \overline{\Lambda(g)} U(g)$, we have
    \begin{equation}
        \Tr[\Pi_\Lambda] = \frac{1}{|G|} \sum_{g \in G} \overline{\Lambda(g)} \Tr[u(g)]^N.
    \end{equation}
    From the above, we can see that the behavior of $\Tr[\Pi_\Lambda]$ as $N \to \infty$ is dictated by the largest values of $\left|\Tr[u(g)]\right|$. Since $u(g)$ is unitary, then $\left|\Tr[u(g)]\right| \leq d$ for all $g \in G$, and $\left|\Tr[u(g)]\right| = d$ if, and only if, $u(g) \propto \one$. The set of elements satisfying this relation form the normal subgroup $K = \{ g \in G \mid u(g) \propto \one \}$, and the value of $\Lambda$ there is fixed by the equation $\Lambda(h) \Pi_\Lambda = U(h) \Pi_\Lambda = (\Tr[u(h)]/d)^N \Pi_\Lambda$, where $h \in K$. In the last step, we have used that $u(h) = \Tr[u(h)]\mathbb{I}_d/d$, from the definition of the group $K$.
    
    Finally, let $r = \max_{g \notin K} \left|\Tr[u(g)]\right| < d$ be the second largest absolute value of $\Tr[u(g)]$ in the following:
    \begin{align}
        \Tr[\Pi_\Lambda] & = \frac{d^N}{|G|} \sum_{h \in K} \overline{\Lambda(h)} \left( \frac{\left|\Tr[u(h)]\right|}{d} \right)^N + O(r^N) \\
        & = \frac{d^N}{|G/K|} (1 + O[(r/d)^N]).
    \end{align}

\end{proof}

If $\Tr[u(g)] = 0$ for all $g \in G$ satisfying $u(g) \not\propto \one$, then $\xi = +\infty$, and the exponentially small error above is actually zero. This is the case of $\Z_d$ acting on a qudit with $u(1) = Z$ or any other qudit Pauli. 

\begin{theorem}
    The maximally mixed symmetric state $\rho_{\beta=0, \Lambda}$ of a finite abelian group $G$ is locally indistinguishable from the identity state, up to an exponentially small error in the trace norm:
    \begin{equation}
        \norm{\Tr_{A^C}[\rho_{\beta=0,\Lambda}] - \one_A/d^{|A|}}_1 = O(e^{-(N-|A|)/\xi}),
    \end{equation}
    with $\xi$ given in Lemma \ref{lemma:trace_irrep_projector}.
\end{theorem}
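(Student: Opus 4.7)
The plan is to compute the partial trace $\Tr_{A^C}[\Pi_\Lambda]$ directly from the character-sum representation $\Pi_\Lambda = |G|^{-1} \sum_{g \in G} \overline{\Lambda(g)}\, U(g)$ and then normalize by $\Tr[\Pi_\Lambda]$ using Lemma \ref{lemma:trace_irrep_projector}. Under the homogeneity assumption $U = u^{\otimes N}$, the partial trace factors as $\Tr_{A^C}[U(g)] = u(g)^{\otimes |A|}\,\Tr[u(g)]^{N-|A|}$, reducing the task to analyzing a single finite sum over $G$ whose terms split cleanly according to whether $g$ belongs to $K = \{g \in G : u(g) \propto \one\}$ or not.

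First I would extract the ``identity contribution'' from $K$. For $h \in K$ write $u(h) = \omega_h \one$ with $|\omega_h|=1$; then $u(h)^{\otimes|A|}\,\Tr[u(h)]^{N-|A|} = \omega_h^{N} d^{N-|A|}\,\one_A$, and because $\Pi_\Lambda \neq 0$ forces $\Lambda(h) = \omega_h^N$ on $K$ (as noted in the proof of Lemma \ref{lemma:trace_irrep_projector}), the phases cancel and the $K$-sum collapses to $\frac{|K|}{|G|} d^{N-|A|}\,\one_A = \frac{d^{N-|A|}}{|G/K|}\,\one_A$. This is precisely the term that will match $\one_A/d^{|A|}$ after division by $\Tr[\Pi_\Lambda] = \frac{d^N}{|G/K|}(1+O((r/d)^N))$.

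Next I would bound the contribution of $g \notin K$ in trace norm. Each such term satisfies $\|u(g)^{\otimes |A|}\|_1 \leq d^{|A|}$ and $|\Tr[u(g)]|^{N-|A|} \leq r^{N-|A|}$, where $r = \max_{g \notin K} |\Tr[u(g)]| < d$. By the triangle inequality, summing over at most $|G|-|K|$ elements gives a residual operator $E$ with $\|E\|_1 = O(d^{|A|} r^{N-|A|})$. Normalizing by $\Tr[\Pi_\Lambda]$ turns this into $O((r/d)^{N-|A|})$, and the denominator's $O((r/d)^N)$ correction multiplies $\one_A/d^{|A|}$ (unit trace norm) to contribute an extra $O((r/d)^N) \leq O((r/d)^{N-|A|})$, both absorbed into the claimed bound $O(e^{-(N-|A|)/\xi})$ via $\xi = -1/\log(r/d)$.

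The main content is routine character-sum arithmetic and no substantive obstacle is anticipated. The only point requiring care is ensuring that the normalization by $\Tr[\Pi_\Lambda]$ does not spoil the decay rate of the residual, which is automatic because the dominant $K$-contribution carries exactly the $|G/K|$ factor needed to cancel against $\Tr[\Pi_\Lambda]$, and the subleading $O((r/d)^N)$ correction from the denominator decays at least as fast as the partial-trace residual itself.
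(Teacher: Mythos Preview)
Your proposal is correct and takes essentially the same approach as the paper: both compute $\Tr_{A^C}[\Pi_\Lambda]$ via the character sum and isolate the dominant $g\in K$ contribution from the $O(r^{N-|A|})$ remainder coming from $g\notin K$. The only cosmetic difference is that the paper first writes $\Pi_\Lambda = \sum_\lambda \Pi^A_\lambda \otimes \Pi^B_{\Lambda\overline{\lambda}}$ and invokes Lemma~\ref{lemma:trace_irrep_projector} as a black box on each $\Tr_B[\Pi^B_{\Lambda\overline{\lambda}}]$ (then uses $\sum_\lambda \Pi^A_\lambda = \one_A$), whereas you work directly with the group sum and carry out the $K$ versus $G\setminus K$ split by hand --- which is exactly the content of that lemma's proof, specialized to a partial trace.
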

\begin{proof}
    Let $B = A^C$. First note that
    \begin{align}
        \Tr_{B}[\Pi_\Lambda] & = \sum_{\lambda} \Pi^A_{\lambda} \Tr_B[\Pi^B_{\Lambda \overline{\lambda}}] \\
        & = \one_A \frac{d^{|B|}}{|G / K|} (1 + O(e^{-|B|/\xi})),
    \end{align}
    where in the second line we used Lemma \ref{lemma:trace_irrep_projector} and $\sum_\lambda \Pi^A_\lambda = \one_A$.

    Then, we have
    \begin{align}
        \Tr_{B}[\rho_{\beta = 0, \Lambda}] & = \frac{\Tr_B[\Pi_\Lambda]}{\Tr[\Pi_\Lambda]} \\
        & = \frac{\one_A}{d^{|A|}}(1 + O(e^{-|B|/\xi})),
    \end{align}
    Hence,
    \begin{align}
        \norm{\Tr_{B}[\rho_{\beta = 0, \Lambda}] - \one_A/d^{|A|}}_1 = O\left(\norm{\one_A/d^{|A|}}_1e^{-|B|/\xi}\right) = O(e^{-|B|/\xi}).
    \end{align}
\end{proof}

Finally, we end with a theorem that a similar result for local indistinguishability of canonical and Gibbs ensembles holds true even at finite temperatures,   


\begin{theorem}
    For $G$ a finite Abelian group and $H = \sum_{i=1}^N h_i$ a local Hamiltonian with $\norm{h_i} \leq J$ and $|\supp h_i| \leq k$, then
    \begin{equation}
        \norm{\Tr_{A^C}[\rho_{\beta,\Lambda}] - \Tr_{A^C}[\rho_\beta]}_1 = O(e^{-(c_\beta N-|A|)/\xi_\beta}),
    \end{equation}
    where $\xi_\beta, c_\beta > 0$ if $\beta < \beta_*$, with $\xi_\beta$, $c_\beta$ and $\beta_*$ independent of system size $N$.
\end{theorem}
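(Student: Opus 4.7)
My plan is to expand the partial trace of both ensembles in a group-theoretic Fourier basis, isolate the contributions from the trivial-kernel subgroup $K$, and bound the remaining ``twisted'' contributions via a high-temperature cluster expansion, in the same spirit as Lemma \ref{lemma:trace_irrep_projector} but extended to $\beta > 0$.

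First, using $[U(g),\rho_\beta] = 0$ and $\Pi_\Lambda = \frac{1}{|G|}\sum_g \overline{\Lambda(g)} U(g)$, I would write
\begin{equation}
\Tr_{A^C}[\rho_{\beta,\Lambda}] = \frac{1}{|G| q_\Lambda} \sum_{g \in G} \overline{\Lambda(g)}\,U_A(g)\,\Tr_{A^C}[U_{A^C}(g)\rho_\beta],
\end{equation}
with $q_\Lambda = Z_{\beta,\Lambda}/Z_\beta$. Assuming (as in the $\beta=0$ case) a homogeneous action $U = u^{\otimes N}$, let $K = \{g \in G : u(g) \propto \one\}$. For $g \in K$, $U(g) = \omega(g)^N\one$ and $\Lambda(g) = \omega(g)^N$ by compatibility with $\Pi_\Lambda$; the $K$-summands collectively contribute $(|K|/(|G|q_\Lambda))\Tr_{A^C}[\rho_\beta]$, and similarly $q_\Lambda = |K|/|G| + |G|^{-1}\sum_{g\notin K}\overline{\Lambda(g)}\langle U(g)\rangle_\beta$. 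Subtracting $\Tr_{A^C}[\rho_\beta]$ yields
\begin{align}
\Tr_{A^C}[\rho_{\beta,\Lambda} - \rho_\beta] &= \Big(\tfrac{|K|/|G|}{q_\Lambda} - 1\Big)\,\Tr_{A^C}[\rho_\beta] \nonumber \\ &\quad + \frac{1}{|G| q_\Lambda} \sum_{g \notin K} \overline{\Lambda(g)}\,U_A(g)\,\Tr_{A^C}[U_{A^C}(g)\rho_\beta],
\end{align}
reducing everything to controlling $\Tr_{A^C}[U_{A^C}(g)\rho_\beta]$ for $g \notin K$.

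The core technical step is a \emph{twisted correlator bound}: there exist $\beta_* > 0$, $r_\beta \in (0,1)$ for all $\beta < \beta_*$, and $C > 0$ independent of $N$, such that for every $g \notin K$ and every $O_A$ with $\norm{O_A}_\infty \leq 1$,
\begin{equation}
|\langle O_A \otimes U_{A^C}(g) \rangle_\beta|\ \leq\ C\,r_\beta^{|A^C|}.
\end{equation}
At $\beta = 0$ this is immediate, since the expectation factorizes as $\Tr[O_A]/d^{|A|}\cdot(\Tr[u(g)]/d)^{|A^C|}$, bounded by $r_0^{|A^C|}$ with $r_0 = \max_{g\notin K}|\Tr[u(g)]|/d < 1$. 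For $\beta > 0$, I would invoke a convergent Koteck\'y--Preiss-type polymer/cluster expansion of $\log\Tr[(O_A\otimes U_{A^C}(g))e^{-\beta H}]/Z_\beta$: at high temperature each site in the bulk of $A^C$ is perturbed multiplicatively by at most $e^{O(\beta k J)}$, boundary clusters straddling $A|A^C$ contribute an $N$-independent prefactor absorbed into $C$, and the net per-site suppression is $r_\beta = r_0\, e^{O(\beta k J)}$, strictly less than $1$ for $\beta < \beta_*$ depending only on $J$, $k$, and the gap $1 - r_0$. Trace-norm duality then gives $\norm{\Tr_{A^C}[U_{A^C}(g)\rho_\beta]}_1 \leq C r_\beta^{|A^C|}$.

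To finish, setting $O_A = \one$ gives $|q_\Lambda - |K|/|G|| = O(r_\beta^N)$, which both lower-bounds $q_\Lambda \geq |K|/(2|G|)$ for large $N$ and bounds the first ratio term by $O(r_\beta^N)$; the second term is bounded by $|G\setminus K|\cdot Cr_\beta^{|A^C|}/(|G|q_\Lambda) = O(r_\beta^{N-|A|})$. Choosing $c_\beta = 1$ and $\xi_\beta = 1/\log(1/r_\beta)$ then produces the stated $O(e^{-(c_\beta N - |A|)/\xi_\beta})$ bound. The hard part will be the twisted correlator estimate, since $U_{A^C}(g)$ is supported on an extensive region rather than a bounded local one, so standard clustering-of-correlations theorems don't directly apply. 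I would need a cluster expansion that is uniform in $|A^C|$ and keeps the bulk per-site suppression factor $|\Tr[u(g)]|/d$ manifest; this is precisely what constrains $\beta_*$ to be $N$-independent but to degrade as $r_0 \to 1$.
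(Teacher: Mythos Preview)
Your plan is sound and rests on the same mechanism the paper exploits: for $g\notin K$, each site untouched by Hamiltonian terms contributes a factor $|\Tr[u(g)]|/d<1$, so the twisted contributions are exponentially suppressed in $|A^C|$. The organizational difference is that you Fourier-decompose $\Pi_\Lambda$ first and reduce everything to a single ``twisted correlator'' estimate for each $g\notin K$, whereas the paper keeps $\Pi_\Lambda$ intact, Taylor-expands $e^{-\beta H}=\sum_n(-\beta)^n H^n/n!$ directly, and for each monomial $\prod_\ell h_{i_\ell}$ traces out the region $B\setminus S_{\{i_\ell\}}$ untouched by the Hamiltonian terms, where Lemma~\ref{lemma:trace_irrep_projector} gives the per-site suppression. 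The paper's route is more elementary and fully self-contained: no polymer or Koteck\'y--Preiss machinery is needed, just the Taylor series, the $k$-locality bound $|S_{\{i_\ell\}}|\leq kn$, and the $\beta=0$ lemma. If you actually carried out your cluster expansion by hand at the level of the Taylor series, you would recover essentially the same computation; invoking Koteck\'y--Preiss is overkill here since the ``polymers'' are just the supports of products of local terms and no connectedness resummation is required.

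One small caution: your parametrization with $c_\beta=1$ and a single effective $r_\beta=r_0\,e^{O(\beta kJ)}$ is slightly optimistic. The paper's Taylor bound naturally produces a term of the form $(r/d)^{N-|A|}e^{\beta J (d/r)^k N}$, which when repackaged gives $\xi_\beta^{-1}=\ln(d/r)$ (independent of $\beta$) and a $\beta$-dependent $c_\beta<1$, rather than absorbing everything into $r_\beta$. There is also a second regime (when $kn\gtrsim|B|$) contributing a Stirling-type tail that must be controlled separately. Neither issue invalidates your strategy, but if you try to execute it you will likely find the bound comes out as $e^{-(c_\beta N-|A|)/\xi_\beta}$ with genuinely $\beta$-dependent $c_\beta$, not simply $r_\beta^{N-|A|}$.
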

\begin{proof}
    We first rewrite the difference of states above in terms of a difference in partition functions and a difference of Gibbs exponentials. For that, let $B = A^C$, and $Z_\beta \defeq \Tr[e^{-\beta H}]$, $Z_{\beta,\Lambda} = \Tr[e^{-\beta H} \Pi_\Lambda]$:
    \begin{align}
        \norm{\Tr_{B}[\rho_{\beta,\Lambda}] - \Tr_{A^C}[\rho_\beta]}_1 & = \norm{-\left(\frac{|G/K| Z_{\beta,\Lambda}}{Z_{\beta}} - 1\right) \frac{\Tr_B[e^{-\beta H} \Pi_\Lambda]}{Z_{\beta,\Lambda}} + \frac{1}{Z_{\beta}} (|G/K| \Tr_B[e^{-\beta H} \Pi_\Lambda] - \Tr_B[e^{-\beta H}])}_1 \\ 
        & \leq Z_\beta^{-1} ||G/K| Z_{\beta, \Lambda} - Z_\beta|\norm{\Tr_{B}[\rho_{\beta, \Lambda}]}_1 + Z_{\beta}^{-1} \norm{|G/K| \Tr_B[e^{-\beta H} \Pi_\Lambda] - \Tr_B[e^{-\beta H}]}_1. \label{eq:two_terms_bound}
    \end{align}
    The first term above is nothing but the second term specialized to $A = \emptyset$. Hence, we will focus on bounding the latter. Without loss of generality, we can assume $H \leq 0$ by adding to it a constant term proportional to the identity operator that will not change $\rho_\beta$ nor $\rho_{\beta,\Lambda}$. In that case, $Z_{\beta} = \Tr[e^{-\beta H}] \geq d^N$. By applying a high-temperature expansion of the exponentials, we arrive at
    \begin{align}
        Z_\beta^{-1} \lVert|G/K|\Tr_B[e^{-\beta H }\Pi_{\Lambda}] - \Tr_B[e^{-\beta H}] \rVert_1 & \leq d^{-N} \sum_{n=0}^{\infty} \frac{\beta^n}{n!} \lVert|G/K|\Tr_B[H^n\Pi_{\Lambda}] - \Tr_B[H^n] \rVert_1. \label{eq:taylor_series_sum}
    \end{align}
    We now focus on each term of the sum above,
    \begin{align}
        \lVert|G/K|&\Tr_B[H^n \Pi_{\Lambda}] - \Tr_B[H ^n] \rVert_1 \nonumber \\ 
        &\leq \sum_{i_1, \ldots, i_n} \norm{ |G/K| \Tr_B\left[\prod_{\ell=1}^n h_{i_\ell} \Pi_\Lambda\right] - \Tr_B\left[\prod_{\ell=1}^n h_{i_\ell}\right] }_1 \label{eq:bound_Second_term}\\
    \end{align}
    At this stage, it is useful to introduce a notation for the region with non-trivial action of $\prod_{\ell = 1}^{n}h_{i_{\ell}}$, $S_{\{i_\ell\}} = \cup_{\ell=1}^n\supp h_{i_\ell}$. We decompose the trace over $B$ into trace over $B\cap S_{\{i_{\ell}\}}$ and $B\cap S_{\{i_{\ell}\}}^{C}$. We also decompose $\Pi_\Lambda = \sum_\lambda \Pi_\lambda^R \Pi_{\Lambda \overline{\lambda}}^{R^C}$ for $R = A \cup S_{\{i_\ell\}}$, such that $R^C \equiv B \cap S_{\{i_\ell\}}^C$. We also use the fact that $\sum_{\lambda}\Pi_{\lambda}^{R} = \one_{R}$ to simplify the second term in Eq.~\eqref{eq:bound_Second_term}: 
    \begin{align}
        \lVert|G/K|&\Tr_B[H^n \Pi_{\Lambda}] - \Tr_B[H ^n] \rVert_1 \nonumber \\
        &\leq \sum_{i_1, \ldots, i_n}\norm{\sum_{\lambda}\Tr_{B\cap S_{\{i_{\ell}\}}}\left[\prod_{\ell = 1}^{n}h_{i_{\ell}}\Pi_{\lambda}^{A\cup S_{\{i_\ell\}}}\right]\left(|G/K|\Tr_{B\cap S_{\{i_\ell\}}^C}\left[\Pi_{\Lambda \overline{\lambda}}^{B \cap S_{\{i_\ell\}}^C}\right]-\Tr_{B \cap S_{\{i_\ell\}}^C}\left[\mathbb{I}_{B \cap S_{\{i_\ell\}}^C}\right]\right)}_1 \\
        & = \sum_{i_1, \ldots, i_n} \sum_{\lambda} \norm{\Tr_{B\cap S_{\{i_{\ell}\}}}\left[\prod_{\ell=1}^n h_{i_\ell} \Pi^{A \cup S_{\{i_{\ell}\}}}_\lambda\right]}_1 \left| |G/K| \Tr_{B\cap S_{\{i_{\ell}\}}^C}\left[\Pi_{\Lambda \overline{\lambda}}^{B\cap S_{\{i_{\ell}\}}^C}\right] - d^{|B\cap S_{\{i_{\ell}\}}^C|} \right| \\
        & \leq C \sum_{i_1, \ldots, i_n} \sum_{\lambda} \norm{\Tr_{B\cap S_{\{i_{\ell}\}}}\left[\prod_{\ell=1}^n h_{i_\ell} \Pi^{A \cup S_{\{i_{\ell}\}}}_\lambda\right]}_1 r^{|B\cap S_{\{i_{\ell}\}}^C|}~\text{ (using Lemma \ref{lemma:trace_irrep_projector}.)} \\
        & \leq C \sum_{i_1, \ldots, i_n} \sum_\lambda \norm{\prod_{\ell=1}^n h_{i_\ell} \Pi^{A \cup S_{\{i_{\ell}\}}}_\lambda}_1 r^{|B|-\max(|B|,kn)} \\
        & = C \sum_{i_1, \ldots, i_n} \norm{\prod_{\ell=1}^n h_{i_\ell}}_1 r^{|B|-\max(|B|,kn)} \\
        & \leq C \sum_{i_1, \ldots, i_n} d^{|A|+\max(|B|,kn)} J^n r^{|B|-\max(|B|,kn)} \\
        & \leq C  
        \begin{cases}
        r^N (d/r)^{|A|} (J N d^k / r^k)^n, & \text{if } kn < |B| \\
        d^N (JN)^n & \text{if } kn \geq |B|
        \end{cases}
    \end{align}
    where in the third line we have used Lemma \ref{lemma:trace_irrep_projector} in the form $||G/K|  \Tr[\Pi^R_\lambda] - d^{|R|}| < C r^{|R|}$, with $C$ a constant and, crucially, $r < d$; and in the fifth line we used that $\prod_{\ell=1}^n h_{i_\ell}$ is block diagonal with respect to the subspaces projected by $\Pi_{\lambda}^{A \cup S_{\{i_\ell\}}}$ because the product commutes with the symmetry acting on $A \cup S_{\{i_\ell\}}$. Putting it back in the original sum (Eq. \eqref{eq:taylor_series_sum}),
    \begin{align}
        Z_\beta^{-1} \lVert|G/K|\Tr_B[e^{-\beta H }\Pi_{\Lambda}] &- \Tr_B[e^{-\beta H}] \rVert_1 \\ 
        & \leq C \left( (r/d)^{N-|A|} \sum_{n=0}^{\lceil |B|/k \rceil -1} \frac{(\beta J N d^k/r^k)^n}{n!} + \sum_{n=\lceil |B|/k \rceil}^\infty \frac{(\beta JN)^n}{n!} \right) \\
        & \leq C \left( (r/d)^{N-|A|} \sum_{n=0}^{\infty} \frac{(\beta JN d^k/r^k)^n}{n!} + e^{\beta J N} \frac{( \beta JN)^{\lceil |B|/k \rceil}}{\lceil |B|/k \rceil!}  \right) \\ 
        & \leq C \left((r/d)^{N-|A|} e^{\beta J N d^k/r^k}+ e^{\beta J N } \left(\frac{e \beta J N}{\lceil |B|/k \rceil}\right)^{\lceil |B|/k \rceil} \right) \\
        & \leq C (e^{-(\ln(d/r) - \beta J d^k/r^k)N + \ln(d/r) |A|} + e^{\beta J N} (e \beta J k)^{N - |A|}  (1 - |A|/N)^{-(N-|A|)}) \\
        & \leq C (e^{-(\ln(d/r) - \beta J d^k/r^k)N + \ln(d/r) |A|} + e^{(\beta J + \ln(e \beta J k))N  + (-\ln(e \beta  J k) + 1)|A|}) \\
        & = O(e^{-(c_\beta N - |A|)/\xi_\beta}),
    \end{align}
    where the values of $\xi_\beta$ and $c_\beta$ depend if $r = 0$ or not: if $r=0$, then the first term above vanishes, and so $\xi_\beta^{-1} = 1 - \ln(e \beta Jk)$ and $c_\beta = -\xi_\beta (\beta J + \ln(e \beta J k))$; and if $r > 0$, then, at small enough $\beta$, the first term dominates the second, and we have $\xi_\beta^{-1} = \ln(d/r)$ and $c_\beta = 1 - \xi_\beta \beta J d^k/r^k$. Importantly, in both cases we have $\xi_\beta, c_\beta > 0$ if $\beta < \beta_*$ for $\beta_*$ small enough.

    Since this bound is tighter the smaller region $A$ is, then the first term of Eq. \eqref{eq:two_terms_bound} is smaller than the second one, and we arrive at the conclusion.
\end{proof}

\section{\texorpdfstring{Equivalence of \SEC\ and \EC\ for finite Abelian groups}{Equivalence of SEC and EC for finite Abelian groups}}
\label{appsec:local_indistinguishability_b}

Here, we find a sufficient condition on the symmetry multiplicities that allows us to establish equivalence between the entangling condition $(\EC)$ and $(\SEC)$ for finite Abelian groups. First, a technical lemma.

\begin{lemma}\label{lemma:PiLambda_removal}
    Consider a Hamiltonian defined on a Hilbert space with the tripartition $A|B|C$, $\Hilb = \Hilb_A \otimes \Hilb_B \otimes \Hilb_C$, $H = H_{AB} \otimes \one_C + \one_{A} \otimes H_{BC}$, such that $B$ acts as a buffer region between $A$ and $C$ with no global interaction term nor any local interaction term between $A$ and $C$. Furthermore, each term $H_{AB}$ and $H_{BC}$ is assumed to be individually symmetric: $\forall g \in G, [U(g), H_{AB}] = [U(g), H_{BC}] = 0$. If all irreps $\lambda$ that appear in $AB$ (i.e. $m^{AB}_{\lambda} \neq 0$) contribute to the total irrep $\Lambda$ (i.e. $m^C_{\Lambda\overline\lambda} \neq 0$), then we have an equivalence between the $\EC$ and $\SEC$,
    \begin{equation}
        [U_A(g), H]\Pi_\Lambda \neq 0 \iff [U_A(g), H] \neq 0.
    \end{equation}
\end{lemma}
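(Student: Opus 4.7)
\medskip

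\textbf{Proof proposal.} The forward direction ($[U_A(g),H]\Pi_\Lambda \neq 0 \Rightarrow [U_A(g),H] \neq 0$) is immediate by contrapositive, so the real content is the converse. My plan is to reduce the claim to a statement purely about the support of the commutator in the isotypic decomposition of $\Hilb_{AB}$, and then invoke the hypothesis on the multiplicities to lift a nonzero piece on $AB$ to a nonzero piece on the full tripartition.

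The first step is to simplify the commutator using the locality of $H_{BC}$ relative to $U_A(g)$. Since $U_A(g) = U_A(g) \otimes \one_B \otimes \one_C$ and $H_{BC} = \one_A \otimes H_{BC}$, we have $[U_A(g), \one_A \otimes H_{BC}] = 0$, hence
\begin{equation}
    [U_A(g), H] = [U_A(g), H_{AB}] \otimes \one_C.
\end{equation}
So the hypothesis $[U_A(g), H] \neq 0$ is equivalent to $[U_A(g), H_{AB}] \neq 0$ as an operator on $\Hilb_{AB}$.

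Next, I would apply the bipartite decomposition of the irrep projector, $\Pi_\Lambda = \sum_\lambda \Pi^{AB}_\lambda \otimes \Pi^C_{\Lambda\overline\lambda}$, established earlier in the excerpt. Combining with the first step gives
\begin{equation}
    [U_A(g), H]\,\Pi_\Lambda \;=\; \sum_{\lambda}\, [U_A(g), H_{AB}]\,\Pi^{AB}_\lambda \;\otimes\; \Pi^C_{\Lambda\overline\lambda}.
\end{equation}
The orthogonality relations $\Pi^C_{\Lambda\overline\lambda}\Pi^C_{\Lambda\overline{\lambda'}} = \delta_{\lambda,\lambda'}\Pi^C_{\Lambda\overline\lambda}$ let me isolate a single $\lambda$-term by right-multiplying with $\one_{AB} \otimes \Pi^C_{\Lambda\overline\lambda}$. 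Therefore the full operator vanishes if and only if each summand vanishes separately.

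Finally, assuming $[U_A(g), H_{AB}] \neq 0$, the resolution of the identity $\sum_\lambda \Pi^{AB}_\lambda = \one_{AB}$ guarantees some $\lambda^\star$ with $[U_A(g), H_{AB}]\Pi^{AB}_{\lambda^\star} \neq 0$. In particular $\Pi^{AB}_{\lambda^\star} \neq 0$, i.e.\ $m^{AB}_{\lambda^\star} \neq 0$. The multiplicity hypothesis of the lemma now kicks in: it ensures $m^C_{\Lambda\overline{\lambda^\star}} \neq 0$, and therefore $\Pi^C_{\Lambda\overline{\lambda^\star}} \neq 0$. The corresponding tensor product summand is then nonzero, completing the reverse direction.

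I do not anticipate a serious obstacle: the only nontrivial input is the multiplicity assumption, which is precisely what converts a nonzero $\lambda^\star$-block on $AB$ into a nonzero block on $AB \otimes C$; without it one could imagine a charge sector $\lambda^\star$ carrying all of the commutator on $AB$ but absent from $C$, which would let $\Pi_\Lambda$ kill the commutator entirely. The rest is bookkeeping with orthogonal projectors.
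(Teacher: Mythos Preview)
Your proof is correct and follows essentially the same route as the paper: reduce to $[U_A(g),H_{AB}]\otimes\one_C$, expand $\Pi_\Lambda=\sum_\lambda \Pi^{AB}_\lambda\otimes\Pi^C_{\Lambda\overline\lambda}$, isolate a single $\lambda$ via the orthogonality of the $\Pi^C$ projectors, and invoke the multiplicity hypothesis. The only cosmetic difference is that the paper argues the contrapositive (assume $[U_A(g),H]\Pi_\Lambda=0$, trace out $C$, then sum over $\lambda$ using block-diagonality of the commutator), whereas your direct argument via the resolution of the identity $\sum_\lambda\Pi^{AB}_\lambda=\one_{AB}$ is slightly more streamlined and avoids that extra step.
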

\begin{proof}
    The ($\Rightarrow$) direction is trivial. For the ($\Leftarrow$) direction, we prove their negations are equivalent. We first right-multiply the RHS $[U_A(g), H]\Pi_\Lambda = 0$ by $\Pi^{AB}_\lambda$ where $\lambda$ is any irrep that appears in $\Hilb_{AB}$, i.e. $m_{\lambda}^{AB}\neq0$:
    \begin{align}
        0 & = [U_A(g), H]\Pi_{\Lambda} \Pi^{AB}_{\lambda} = [U_A(g), H] \Pi_{\lambda}^{AB} \Pi_{\Lambda\overline\lambda}^C  = [U_A(g), H_{AB}] \Pi_{\lambda}^{AB} \Pi_{\Lambda\overline\lambda}^C .
    \end{align}
    Then, we trace out the $C$ subsystem:
    \begin{equation}
        0 = [U_A(g), H_{AB}] \Pi_{\lambda}^{AB} \otimes \Tr[\Pi^C_{\Lambda\overline\lambda}] = m^C_{\Lambda\overline{\lambda}}[U_A(g), H_{AB}] \Pi_{\lambda}^{AB}.
    \end{equation}
    By assumption, $m_{\lambda}^{AB} \neq 0$ implies $m^C_{\Lambda\overline\lambda} \neq 0$, which allows us to get rid of $m^C_{\Lambda \overline\lambda}$ from the equation above. Summing over $\lambda$, we finally arrive at
    \begin{align}
        0 & = \sum_{\lambda, m_{\lambda}^{AB}\neq 0} [U_A(g), H_{AB}] \Pi_{\lambda}^{AB} \\
          & = \sum_\lambda \Pi_{\lambda}^{AB} [U_A(g), H_{AB}] \Pi_{\lambda}^{AB} \\
          & = \sum_{\lambda, \lambda'} \Pi_{\lambda}^{AB} [U_A(g), H_{AB}] \Pi_{\lambda'}^{AB} \\
          & = [U_A(g), H_{AB}], \\
    \end{align}
    which, after restoring the $C$ system by tensoring with $\one_C$, we have $[U_A(g), H] = 0$.
\end{proof}

In a multipartite system $\Hilb = \bigotimes_{i \in M} \Hilb_i$ with a Hamiltonian $H = \sum_{X \subseteq M} h_X$ that is a sum of local terms $h_X$, $\supp(h_X) = X$, we usually require them to be individually symmetric, $\forall X \subseteq M, \forall g \in G, [U(g), h_X] = [U_X(g), h_X] = 0$. Here, locality means that for all regions $A$ with size $|A| \leq N/2$, there exists a buffer region $B$ with size $|B| < N/2$ (normally, $|B| \ll N$) for which $H$ has no direct interaction between $A$ and $C = (AB)^C$, i.e. $X \cap A \cap C \neq 0 \Rightarrow h_X =0$. Then the Lemma above can be applied for such $A$ and $B$ regions if its conditions are met. In such case, we know that $\forall A \subseteq M, |A| < N/2 \Rightarrow \exists g \in G, [U_A(g), H] \neq 0$ is equivalent to $\rho_{\beta, \Lambda}$ having nonzero negativity for all bipartitions at high enough temperatures.

For finite Abelian symmetry groups $G$ acting faithfully, then every irrep of $G$ should appear for large enough regions $C$. Moreover, ``large enough'' here should depend only on the group itself, and not on the representation or system size. For such cases, then the above lemma is useful to prove $\POE$ purposes for any finite-range Hamiltonian. Let us make this intuition more precise, starting with the following lemma:
\begin{lemma}[Stable power of generating set]\label{lemma:stable_power_generating_set}
    Let $H$ be a finite group (not necessarily Abelian) and $S \subseteq H$ a generating set of $H = \langle S \rangle$ containing the identity element $e \in S$. With $S^n$ being the set of all products of $n$ elements of $S$, we have
    \begin{equation}
        \{e\} = S^0 \subseteq S^1 \subseteq S^2 \subseteq \cdots \subseteq S^{|H|-1} = S^{|H|} = \cdots = H
    \end{equation}
\end{lemma}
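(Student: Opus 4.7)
The plan is to prove the chain in three observations: (i) $(S^n)_n$ is weakly increasing, (ii) it stabilizes by index $|H|-1$, and (iii) the stable value equals $H$. Step (i) is immediate: since $e \in S$, any product $s_1 \cdots s_n \in S^n$ equals $e \cdot s_1 \cdots s_n \in S^{n+1}$, so $S^n \subseteq S^{n+1}$. For (ii), let $n_0$ be the smallest index with $S^{n_0} = S^{n_0+1}$. For $k < n_0$ the inclusion $S^k \subsetneq S^{k+1}$ is strict, so $|S^{k+1}| \geq |S^k|+1$; iterating from $|S^0| = 1$ yields $|S^{n_0}| \geq n_0 + 1$, and combined with $|S^{n_0}| \leq |H|$ this forces $n_0 \leq |H|-1$. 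Stabilization is absorbing, because $S^m = S^{m+1}$ implies $S^{m+2} = S \cdot S^{m+1} = S \cdot S^m = S^{m+1}$; by induction $S^n = S^{n_0}$ for all $n \geq n_0$, so in particular $S^{|H|-1} = S^{|H|} = \cdots$

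For (iii), stability gives $S \cdot S^{n_0} = S^{n_0+1} = S^{n_0}$, so $S^{n_0}$ is closed under left multiplication by $S$, and by induction under left multiplication by every element of $T := \bigcup_{k \geq 0} S^k$; that is, $T \cdot S^{n_0} \subseteq S^{n_0}$. Since $H$ is finite, each $s \in S$ has finite order, and $s^{-1}$ is a positive power of $s$; therefore $T$ coincides with the full subgroup generated by $S$, i.e.\ $T = \langle S \rangle = H$. Combined with $e \in S^0 \subseteq S^{n_0}$, we conclude $H = H \cdot \{e\} \subseteq T \cdot S^{n_0} \subseteq S^{n_0}$, whence $S^{n_0} = H$.

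The argument is essentially elementary; the only mildly subtle point (the main obstacle, such as it is) is in step (iii), namely promoting the submonoid generated by $S$ to the full subgroup. This step crucially uses the finiteness of $H$: only then can each $s^{-1}$ be realized as a positive power of $s$, so that allowing only products of elements of $S$ (rather than formal words in $S \cup S^{-1}$) is enough to exhaust all of $H$.
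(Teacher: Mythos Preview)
Your proof is correct and complete. It differs from the paper's argument in the key step of bounding the stabilization index. You use a straightforward cardinality count: the sizes $|S^k|$ strictly increase until stabilization, and since $|S^0|=1$ and $|S^{n_0}| \leq |H|$, one gets $n_0 \leq |H|-1$ directly. The paper instead fixes $n \geq |H|$ and argues by contradiction: assuming $h \in S^n \setminus S^{n-1}$, it writes $h = s_1 \cdots s_n$, forms the partial products $h_k = s_1 \cdots s_k$, and shows that $h \notin S^{n-1}$ forces all $n+1$ partial products to be distinct, violating pigeonhole in a group of size $|H|$. Your counting argument is shorter and more elementary; the paper's partial-product argument is a bit more constructive, effectively exhibiting how any word of length $\geq |H|$ can be shortened. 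Your step (iii), showing the stable value equals $H$ via the submonoid $T = \bigcup_k S^k$ being a subgroup in a finite group, is also more explicit than the paper, which simply asserts ``since $H$ is finite and $S$ generates it, then, for large enough $n$, $S^n = H$'' without spelling this out.
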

\begin{proof}
    First, note that $m \leq n \implies S^m \subseteq S^n$ since $S^n = S^{m} \cdot S^{n-m} \supseteq S^m \cdot \{ e \} = S^m$. Moreover, since $H$ is finite and $S$ generates it, then, for large enough $n$, $S^n = H$. Hence, it only remains to prove that $\forall n \geq |H|, S^{n-1} = S^{n}$. 
    
    Suppose, by contradiction, that there is an element $h \in S^n \setminus S^{n-1}$. Since $h \in S^n$, there are $s_1, \ldots, s_n \in S$ such that $h = s_1 s_2 \cdots s_n$. Now, recursively define the partial product sequence $(h_k)_{k=0}^n$ by $h_0 = e$ and $h_k = h_{k-1} s_k$. Crucially, such sequence satisfies $h_k \in S^k$ and $h_k^{-1} h \in S^{n-k}$. 
    
    We will prove that the assumption $h \notin S^{n-1}$ implies all elements $h_k$ are distinct from each other. Otherwise, if $h_k = h_l$ for $0 \leq k < l \leq n$, then 
    \begin{equation}
        h = h_l (h_l^{-1} h) = h_k (h_l^{-1}h) \in S^k \cdot S^{n-l} \subseteq S^{n-1},
    \end{equation}
    contradicting the assumption $h \notin S^{n-1}$. However, $H$ has only $|H|$ elements, so the $n+1 \geq |H|+1$ elements of $(h_k)_{k=0}^{n}$ cannot be all distinct. Thus, the original hypothesis of $\exists h \in H, h \in S^n \setminus S^{n-1}$ is incorrect, and we have $S^{n-1} = S^n$.
\end{proof}

The condition $e \in S$ is important. For a counterexample, consider $H = \Z_2 \times \Z_2$ and $S = \{(1,0), (1,1)\}$. Even though $S$ generates $H$, its powers alternate between $S = S^3 = \cdots = \{(1,0),(1,1)\}$ and $S^2 = S^4 = \cdots = \{(0,0), (0,1)\}$, never stabilizing to the full group $H$.

We can apply the lemma above to a subgroup of $\widehat{G}$, the group of 1d irreps of $G$ (also called the Pontryagin dual of $G$), generated by the set of irreps appearing in the on-site representation $u$, here assumed to be site-independent. Let us frame it in the context of spin chains for simplicity, but a similar result will be valid for any finite-range Hamiltonian in any bounded-degree graph (which includes higher dimensional lattices).
\begin{theorem}
    Let $G$ be a finite Abelian group. Consider a spin chain with periodic boundary conditions and homogeneous symmetry action $U = u^{\otimes N} : G \to U(\Hilb)$, with $u$ containing the trivial irrep in its decomposition. Further assume that the terms $h_i$ of the Hamiltonian $H = \sum_{i} h_i$ are supported around site $i \ni \supp(h_i)$, have finite range $d$, $\diam(\supp(h_i)) \leq d$, and are individually symmetric under $U$. For any global irrep $\Lambda$ and any connected region $A$ (an interval) separating $H = H_A \otimes \one_{A^C} + \one_A \otimes H_{A^C} + H_{AA^C}$ with $|A| \leq N - (2d + |G| - 3)  = N - O(1)$, we have
    \begin{equation}
        [U_A, H]\Pi_\Lambda \neq 0 \iff [U_A, H] \neq 0.
    \end{equation}
\end{theorem}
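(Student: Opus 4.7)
The plan is to engineer a tripartition $A|B|C$ on which Lemma~\ref{lemma:PiLambda_removal} applies, and then to use Lemma~\ref{lemma:stable_power_generating_set} to verify the irrep-matching hypothesis of that lemma. The buffer $B$ needs to be wide enough (of order $d$) to prevent any local term from straddling $A$ and $C$; the complementary region $C$ must be wide enough ($|C|\ge|G|-1$) to carry every irrep of $\widehat{G}$ accessible from the on-site representation $u$. The hypothesis $|A|\le N-(2d+|G|-3)$ is exactly what remains once both constraints are imposed in the PBC geometry with $A$ a single interval.

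Concretely, I would take $B$ to be the two $O(d)$-wide buffer strips adjacent to $A$ on either side within $A^c$; the range condition $\diam(\supp h_i)\le d$ then forbids any $h_i$ from touching both $A$ and $C$, so that $H=H_{AB}+H_{BC}$ with each summand individually symmetric (being a sum of individually symmetric $h_i$). Let $S=\{\lambda\in\widehat{G}:m^u_\lambda\neq 0\}$ be the on-site irrep content. For homogeneous $u$, the irreps appearing on any $k$-site subsystem are exactly the Minkowski power $S^k\subseteq\widehat{G}$ (multiplicative action of characters on tensor products). Because $u$ contains the trivial irrep, $e\in S$, so Lemma~\ref{lemma:stable_power_generating_set} applies with $S$ viewed as a generating set of $H_0:=\langle S\rangle\le\widehat{G}$, yielding $S^{|H_0|-1}=H_0$. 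Since $|H_0|\le|\widehat{G}|=|G|$, the condition $|C|\ge|G|-1$ forces $S^{|C|}\supseteq S^{|H_0|-1}=H_0$.

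To close, if $\Pi_\Lambda=0$ both sides of the claimed equivalence vanish trivially, so assume $\Pi_\Lambda\neq 0$, which forces $\Lambda\in S^N\subseteq H_0$. For every $\lambda$ with $m^{AB}_\lambda\neq 0$ we have $\lambda\in S^{|AB|}\subseteq H_0$, hence $\Lambda\overline{\lambda}\in H_0\subseteq S^{|C|}$, i.e.\ $m^C_{\Lambda\overline{\lambda}}\neq 0$. The hypothesis of Lemma~\ref{lemma:PiLambda_removal} is thus satisfied, and the equivalence $[U_A,H]\Pi_\Lambda\neq 0\iff[U_A,H]\neq 0$ follows. The main obstacles are two bookkeeping checks I have glossed over: (i) pinning down the exact buffer width required by $\diam(\supp h_i)\le d$ to match the constant $2d+|G|-3$ in the PBC geometry, and (ii) the identification ``irreps appearing on $k$ sites $=S^k$'', which relies on $G$ Abelian and $u$ homogeneous, and where having $e\in S$ is essential --- without $e\in S$ the powers of $S$ can oscillate between cosets and never stabilize, as illustrated by the $\Z_2\times\Z_2$ example following Lemma~\ref{lemma:stable_power_generating_set}.
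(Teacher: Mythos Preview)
Your proposal is correct and follows essentially the same approach as the paper: construct a tripartition $A|B|C$ with buffer $B$ of width $d-1$ on each side, invoke Lemma~\ref{lemma:stable_power_generating_set} with $S$ the on-site irrep content of $u$ to conclude $S^{|C|}=\langle S\rangle$ once $|C|\ge|G|-1$, and then apply Lemma~\ref{lemma:PiLambda_removal}. Your explicit treatment of the degenerate case $\Pi_\Lambda=0$ and the observation that $|H_0|\le|G|$ suffices (rather than needing $|\widehat G|$ directly) are minor refinements, and the two bookkeeping items you flag are exactly the ones the paper resolves by choosing $B=\{i\in A^c:\mathrm{dist}(A,i)\le d-1\}$, giving $|B|=2(d-1)$ and hence $|A|\le N-2(d-1)-(|G|-1)=N-(2d+|G|-3)$.
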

\begin{proof}
    Let $S = \{ \lambda \in \widehat{G} \mid \lambda \in u \}$, where ``$\lambda \in u$'' means that $\lambda$ appears in the irrep decomposition of the on-site representation $u$ with nonzero multiplicity. By assumption $e \in S$. Then, it is easy to see that $S^n = \{ \lambda \in \widehat{G} \mid \lambda \in u^{\otimes n} \}$. Lemma \ref{lemma:stable_power_generating_set} guarantees that all irreps that will ever appear in any tensor product of $u$ form the subgroup $\langle S \rangle \leq \widehat{G}$, and that all regions $C$ with size $|C| \geq |\widehat{G}|-1 = |G|-1$ will contain them, i.e. $S^{|C|} = \langle S \rangle$. Here, we used that if $G$ is a finite Abelian group, then $G \simeq \widehat{G}$. That follows from $\widehat{\Z_n} \simeq \Z_n$ and the fact that $G$ is a direct product of cyclic groups. For such regions $C$, we can apply Lemma \ref{lemma:PiLambda_removal} if $B$ is large enough to separate the interval $A$ from $C$. That is because both the irrep $\lambda$ of $AB$ with multiplicity $m_\lambda^{AB} \neq 0$ is an element of $S^{|AB|} \subseteq \langle S \rangle$, and the total irrep $\Lambda$ is an element of $S^{N} = \langle S\rangle$, which means that $\Lambda \overline\lambda \in \langle S \rangle = S^{|C|}$ appears in $C$. For that, we choose $B = \{ i \in A^C \mid \text{dist}(A,i) \leq d-1 \}$, and $C = (AB)^C$. Given that $A$ is connected, these three regions separate Hamiltonian into $H = \one_A \otimes H_{BC} + H_{AB} \otimes \one_C$, thus satisfy all of the assumptions of Lemma \ref{lemma:PiLambda_removal}. Finally, the condition $|C| \geq |G| -1$ translates to $|A| = N - |B| - |C| \leq N - 2(d-1) - (|G| - 1)$.
\end{proof}

As an example, let us consider a nearest-neighbor Hamiltonian $H = \sum_i h_{i,i+1}$ with homogeneous on-site $\Z_2$ symmetry given by $\prod_i u_i$, $u_i^2 = \one_i$ (so $d = 2$ and $|G| = 2$). If the terms $h_{i,i+1}$ are $\Z_2$ symmetric, but not diagonal in the $u_i$ basis, then they will not commute with the partial action of the symmetry, i.e. $[u_i, h_{i,i+1}] \ne 0$. Under these assumptions, the corollary above implies that any canonical ensemble state will not have $\SDOE$ for all bipartitions into two intervals, if the total system size $N$ is larger than $5$. That is because $\lfloor N/2 \rfloor \leq N - (2d + |G| - 3) = N-3$ for $N \geq 5$. 

Note that the finiteness of $G$ is important for the above. For an example of an infinite Abelian symmetry with tighter constraints on the regions $A$ for which Lemma \ref{lemma:PiLambda_removal} can be applied, consider an homogeneous $G=U(1)$ symmetry action with on-site charges $0$ and $+1$, which is the case for $U(1)$ acting on qubits as $\theta \mapsto \diag(1, e^{i \theta})$. Then, the condition of Lemma \ref{lemma:PiLambda_removal} is equivalent to the following condition on the total charge $n$ ($\Lambda(\theta) = \exp(i n \theta)$): $|AB| \leq n \leq N - |AB|$. In particular, the Lemma can be applied to any finite-range Hamiltonian constrained to a total charge $n \approx N - O(1)$, if only entanglement of finite regions $|A| = O(1)$ is relevant. For larger regions, such as $|A| = N/2 - O(1)$, the effect of the global charge constraint can be noticeable, which does not happen for finite Abelian groups (if $N$ is large enough).

\section{Classification of irreps}\label{appsec:irreps}

Here we provide a detailed proof of Theorem~\ref{thm:no_semiuniform-entangling_ham} presented in the main text. The definition of semiuniform irreps and the theorem is quoted below for the ease of readers.

\begin{definition}
    A global irrep $\Lambda$ of a $N$-partite system is \textbf{uniform} if its projector $\Pi_\lambda$ is a uniform tensor product of an on-site irrep $\lambda$: $\Pi_\Lambda = \bigotimes_{i=1}^N \Pi_\lambda^{(i)}$. $\Lambda$ is \textbf{semiuniform} if its projector $\Pi_\Lambda$ is a direct sum of uniform tensor products: $\Pi_\Lambda = \sum_\alpha \bigotimes_{i=1}^N \Pi_{\lambda_\alpha}^{(i)}$. 
\end{definition}

\begin{theorem}
    Consider a multipartite system with homogeneous symmetry action $U$. If $\Lambda$  
    \begin{enumerate}[{(A)}]
        \item is not semiuniform, then for any pair of sites $(i,j)$, there exists a two-body symmetric Hamiltonian $v_{i,j}$ supported on $\{i, j\}$ that entangles any bipartition that separates site $i$ from site $j$. Moreover, given a connected interaction graph, the uniform nearest-neighbor perturbation $\sum_{\langle i, j \rangle} v_{i,j}$ entangling all bipartitions $A|B$, with $A, B \neq \emptyset$.
        \item is semiuniform, then there is no entangling Hamiltonian with support over $N-1$ sites or fewer, and for all $(N-1)$-local Hamiltonian, $\rho_{\beta, \Lambda}$ is fully separable when $\rho_\beta$ is fully separable.
    \end{enumerate}  
\end{theorem}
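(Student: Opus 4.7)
The plan is to handle parts (A) and (B) by complementary arguments: part (A) constructs explicit two-body entangling perturbations from the non-uniform configurations guaranteed by non-semiuniformity, while part (B) establishes a rigidity result --- under semiuniformity, any symmetric $H$ missing at least one site must preserve the block structure of $\Pi_\Lambda$.

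First, for part (A), I would extract from non-semiuniformity a configuration $(\lambda_1, \ldots, \lambda_N) \in \Pi_\Lambda$ with distinct local irreps $\lambda \neq \lambda'$ at some pair of indices. Because the on-site action is homogeneous, $\Pi_\Lambda$ is permutation-symmetric, so for any pair $(i, j)$ I can reshuffle the configuration to place $\lambda$ at site $i$ and $\lambda'$ at site $j$ while remaining in $\Pi_\Lambda$. The candidate $v_{i,j} = \sigma_i^+ \sigma_j^- + \sigma_i^- \sigma_j^+$, with $\sigma_i^+ = \ketbra{\lambda'}{\lambda}_i$, is manifestly symmetric since its factors carry offsetting charges on $i$ and $j$; a direct character calculation gives
\begin{equation}
    [u_i(g), v_{i,j}] = (\lambda'(g) - \lambda(g))(\sigma_i^+\sigma_j^- - \sigma_i^-\sigma_j^+),
\end{equation}
whose action swaps the selected configuration to a distinct configuration still in $\Pi_\Lambda$, so $[u_i(g), v_{i,j}]\Pi_\Lambda \neq 0$ for any $g$ with $\lambda(g) \neq \lambda'(g)$, satisfying \EC. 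For the uniform nearest-neighbor sum on a connected interaction graph, intra-$A$ and intra-$B$ edges commute with $U_A$, so only crossing-edge commutators remain; connectivity supplies at least one crossing edge for any nontrivial bipartition, and since distinct crossing edges act on disjoint site pairs, no cancellation occurs after projection by $\Pi_\Lambda$.

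Next, for part (B), let $\Pi_\Lambda = \sum_\alpha P_\alpha$ with $P_\alpha = \bigotimes_i \Pi^{(i)}_{\lambda_\alpha}$, and let $H$ be symmetric with total support on $\leq N-1$ sites, so there exists a missing site $k$. The central claim is that $H$ commutes with each $P_\alpha$: for any $\ket{\psi}$ in block $\alpha$, the vector $H\ket{\psi}$ remains in $\Pi_\Lambda$ by symmetry and retains on-site irrep $\lambda_\alpha$ at $k$ since $H$ acts trivially there, and semiuniformity forces the block label to be uniquely fixed by the on-site irrep at any single site --- so $H\ket{\psi}$ must also lie in block $\alpha$. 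Because $U_A$ acts on block $\alpha$ as the scalar $\lambda_\alpha(g)^{|A|}$, this yields $[U_A, H]P_\alpha = 0$ for every $\alpha$, hence $[U_A, H]\Pi_\Lambda = 0$, and \EC\ fails. Applying this block-preservation term by term to an $(N-1)$-local $H$ gives $[\rho_\beta, P_\alpha] = 0$, whence $\rho_{\beta, \Lambda} \propto \sum_\alpha P_\alpha \rho_\beta P_\alpha$; substituting any fully separable decomposition $\rho_\beta = \sum_i p_i \bigotimes_j \ketbra{\psi^i_j}{\psi^i_j}$ produces summands of the form $\bigotimes_j (\Pi^{(j)}_{\lambda_\alpha}\ket{\psi^i_j})(\Pi^{(j)}_{\lambda_\alpha}\ket{\psi^i_j})^\dagger$, nonnegatively-weighted product states, so $\rho_{\beta, \Lambda}$ inherits full separability.

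The main obstacle will be the block-preservation step of part (B): semiuniformity is precisely what lets me upgrade the trivial fact that $H$ fixes the on-site irrep at $k$ into the much stronger claim that $H$ preserves the entire block. Without semiuniformity, multiple blocks share the same on-site irrep at $k$, and $H$ could freely mix them --- indeed, this mixing is exactly the mechanism exploited by the construction in part (A), showing that the two parts are tightly dual to each other. A smaller subtlety, handled by site-disjointness of the crossing-edge supports, is verifying that the nearest-neighbor contributions do not cancel after multiplication by $\Pi_\Lambda$.
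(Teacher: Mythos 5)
Your proposal follows essentially the same route as the paper's proof: part (A) uses the identical perturbation $v_{i,j} = \sigma_i^+\sigma_j^- + \sigma_i^-\sigma_j^+$ built from two distinct on-site irreps supplied by non-semiuniformity plus the permutation symmetry of $\Pi_\Lambda$, and part (B) establishes the same block structure $H\Pi_\Lambda = \sum_\alpha \Pi_\alpha H \Pi_\alpha$ from the missing site (the paper phrases it via orthogonality of the on-site projectors $\Pi^{(N)}_{\lambda_\alpha}\Pi^{(N)}_{\lambda_\beta}=\delta_{\alpha\beta}\Pi^{(N)}_{\lambda_\alpha}$ at the absent site, you phrase it as block preservation of vectors; these are the same argument), yielding both the failure of \EC\ and the inherited fully separable decomposition $\sum_{\alpha,m}p_m\,\Pi_\alpha\ketbra{\Psi_m}{\Psi_m}\Pi_\alpha$.

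The one step that does not hold as written is your justification that the nearest-neighbor sum cannot cancel after projection: distinct crossing edges need \emph{not} have disjoint supports, since any vertex of degree $\geq 2$ on the cut (e.g.\ in a star graph or any two-dimensional lattice) belongs to several boundary edges. The conclusion is still correct, but the argument should instead fix a single crossing edge $\{i^*,j^*\}\in\partial A$ and evaluate the matrix element of $[U_A(g),V]\Pi_\Lambda$ between two product states $\ket{\Psi}$ and $\ket{\Psi'}$ that differ only by swapping $\ket{\lambda}$ and $\ket{\lambda'}$ at $i^*$ and $j^*$, with all remaining factors chosen in definite on-site irreps. Every other edge term then picks up a vanishing single-site matrix element $\braket{\psi_l|\sigma_l^{\pm}|\psi_l}=0$ (since $\lambda\neq\lambda'$), so only the chosen edge survives and contributes $\lambda'(g)-\lambda(g)\neq 0$. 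This is precisely how the paper closes that step.
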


\begin{proof}

(A) Since $\Lambda$ is not semiuniform, then there is a tensor product projector term $\bigotimes_{i=1}^N \Pi^{(i)}_{\lambda_i}$ in the decomposition of $\Pi_\Lambda$ with at least two distinct irreps over sites $k$ and $l$: $\lambda \neq \lambda'$. Hence, \[\Pi_{\lambda}^{(l)} \Pi_{\lambda'}^{(k)} \Pi_{\Lambda} \neq 0.\] Since $\Pi_\Lambda$ has weak permutation symmetry from the homogeneity of the symmetry action, then we can permute sites $(k, l)$ with $(i, j)$ and $(j, i)$ on the equation above, thus proving that $\Pi_{\lambda}^{(i)} \Pi_{\lambda'}^{(j)} \Pi_{\Lambda}, \Pi_{\lambda'}^{(i)} \Pi_{\lambda}^{(j)} \Pi_{\Lambda} \neq 0$. 

    Thus there exist states $\ket{\lambda}_i$ and $\ket{\lambda'}_{i}$ of $\Hilb_i$ that transform as irreps $\lambda$ and $\lambda'$ under the symmetry, respectively, and similarly with two states $\ket{\lambda}_{j}$ and $\ket{\lambda'}_{j}$ of $\Hilb_j$. Then, let $\sigma_i^+ \defeq {}_i\!\ketbra{\lambda'}{\lambda}_i \in \mathcal{L}(\Hilb_i)$ and $\sigma_j^+ \defeq {}_j\!\ketbra{\lambda'}{\lambda}_j \in \mathcal{L}(\Hilb_j)$. Denoting their Hermitian conjugates by $\sigma_i^- \defeq (\sigma_i^+)^\dagger$ and $\sigma_j^- \defeq (\sigma_j^+)^\dagger$, we choose $v_{i,j} \defeq \sigma_i^+ \sigma_j^- + \sigma_i^- \sigma_j^+$, which is an entangling Hamiltonian because:
    \begin{itemize}
        \item $v_{i,j}$ is symmetric under $U = U_i \otimes U_j \otimes_{k \neq i,j}U_{k}$:
        \begin{align}
            U(g) v_{i,j} U(g)^\dagger = \lambda'(g) \lambda(g) \sigma_i^+ \sigma_j^- \overline{\lambda(g)} \overline{\lambda'(g)} + \text{h.c.} = v_{i,j},
        \end{align}
        
        \item $v_{i,j}$ satisfies the entangling condition \EC\ of Theorem \ref{thm:poe}. To see this, note that the assumptions imply the existence of two states, $\ket{\Psi} = \ket{\lambda}_i \otimes \ket{\lambda'}_j \otimes \ket{\psi}_{\setminus{i,j}}$ and $\ket{\Psi'} = \ket{\lambda'}_i \otimes \ket{\lambda}_j \otimes \ket{\psi}_{\setminus{i,j}}$  that transform as the global irrep $\Lambda$. Moreover, since $\lambda \neq \lambda'$, there exists $g \in G$ such that $\lambda(g) \neq \lambda'(g)$. Now we establish a nonzero matrix element of $[U_{i}(g), H]\Pi_{\Lambda}$:
        \begin{align}
            \braket{\Psi' | [U_i(g), v_{i,j}] \Pi_\Lambda | \Psi} & = (\lambda'(g) - \lambda(g)) \braket{\Psi' | v_{i,j} | \Psi} = \lambda'(g) - \lambda(g) \neq 0.
        \end{align}
    \end{itemize}

We now turn to the uniform nearest-neighbors perturbation $V = \sum_{\{i,j\} \in E} v_{i,j}$, where $E$ is the set of edges of a connected graph $\mathcal{G} = (V, E)$. By contradiction, let us assume it is not entangling for a certain region $A \neq \emptyset,V$. 
    
    In what follows, let $\partial A \defeq \{ \{i,j\} \in E \mid i \in A, j \notin A \}$, and $g \in G$ be a group element for which $\lambda'(g) \neq \lambda(g)$. Since $V$ is not entangling, then we must have, 
    
    \begin{align}
        0 = [U_A(g), V] \Pi_\Lambda & = (\lambda'(g) - \lambda(g)) \Pi_\Lambda  \sum_{\{i,j\} \in \partial A} \sigma^+_i \sigma^-_j + \text{h.c.}. \\
    \end{align}
    Since $A \neq \emptyset, V$ and $\mathcal{G}$ is connected, the sum above contains at least one term corresponding to an edge $\{i^*, j^*\} \in \partial A$. Now, let $\ket{\Psi} = \ket{\lambda}_{i^*} \otimes \ket{\lambda'}_{j^*} \bigotimes_{l \neq i^*, j^*} \ket{\psi_l}_l$ and $\ket{\Psi'} = \ket{\lambda'}_{i^*} \otimes \ket{\lambda}_{j^*} \bigotimes_{l \neq i^*, j^*} \ket{\psi_l}_l$ be tensor product states contained within $\Lambda$, i.e. $\Pi_\Lambda \ket{\Psi} = \ket{\Psi}$. Then,
    \begin{align}
        0 & = \braket{\Psi' | (\lambda'(g) - \lambda(g))\Pi_\Lambda  \sum_{( i, j) \in \partial A} \sigma^+_i \sigma^-_j + \text{h.c.}| \Psi} \\
        & = (\lambda'(g) - \lambda(g))\braket{\Psi' | \sigma^+_{i^*} \sigma^-_{j^*} | \Psi} \\
        & = \lambda'(g) - \lambda(g)
    \end{align}
    a contradiction. Here, only the term $\sigma^+_{i^*} \sigma^-_{j^*}$ survives because the others are either $\sigma^+_{j^*} \sigma^-_{i^*}$ or they contain at least one vertex, say $l \in V$, that is not $i^*$ or $j^*$, and thus will get a diagonal contribution corresponding to the state $\ket{\psi_l}_l$. In both cases, the matrix element is zero by construction.

    Thus, $V = \sum_{\{i,j\} \in E} v_{i,j}$ is an entangling Hamiltonian for all regions $A \neq \emptyset, V$, as desired.

(B) Say $\Lambda$ is semiuniform with projector $\Pi_\Lambda = \sum_\alpha \Pi_\alpha =\sum_\alpha \bigotimes_{i=1}^N \Pi_{\lambda_\alpha}^{(i)}$, where $\Pi_\alpha = \bigotimes_{i=1}^N \Pi^{(i)}_{\lambda_\alpha}$ are orthogonal tensor product projectors. Let $H$ be an arbitrary Hamiltonian that is $G$-symmetric and supported over the $N-1$ first sites, without loss of generality.  Then,
    \begin{align}
        H \Pi_\Lambda & = \Pi_\Lambda H \Pi_\Lambda \\
        & = \sum_{\alpha,\beta} \left[ \left(\bigotimes_{i=1}^{N-1} \Pi^{(i)}_{\lambda_\alpha}\right) H \left(\bigotimes_{i=1}^{N-1} \Pi^{(i)}_{\lambda_\beta}\right) \right] \otimes \Pi^{N}_{\lambda_\alpha} \Pi^N_{\lambda_\beta} \\
        & = \sum_\alpha \left(\bigotimes_{i=1}^{N} \Pi^{(i)}_{\lambda_\alpha}\right) H \left(\bigotimes_{i=1}^{N} \Pi^{(i)}_{\lambda_\alpha}\right). \label{eq:H_local_factorized}
    \end{align}
    
    From Eq. \eqref{eq:H_local_factorized}, we have that $\alpha \neq \beta \Rightarrow \Pi_\alpha (H \Pi_\Lambda) \Pi_\beta = 0$, which also implies $\Pi_\alpha e^{-\beta' H \Pi_\Lambda} \Pi_\beta = 0$. As such, if $\rho_{\beta'} \propto e^{-\beta' H}$ is separable with decomposition $\rho_{\beta'} = \sum_m p_m \ketbra{\Psi_m}{\Psi_m}$, with $\ket{\Psi_m} = \bigotimes_{i=1}^N \ket{\psi^{(i)}_m}$ product state, then 
    \begin{align}
        \rho_{\beta', \Lambda} & \propto e^{-\beta' H} \Pi_\Lambda \\
        & = \Pi_\Lambda e^{-\beta' H \Pi_\Lambda} \Pi_\Lambda \\
        & = \sum_{\alpha, \beta} \Pi_\alpha e^{-\beta' H \Pi_\Lambda} \Pi_\beta \\
        & = \sum_\alpha \Pi_\alpha e^{-\beta' H} \Pi_\alpha \\
        & \propto \sum_{\alpha, m} p_m (\Pi_\alpha \ket{\Psi_m})(\bra{\Psi_m} \Pi_\alpha), 
    \end{align}
    which gives $\rho_{\beta, \Lambda}$ a fully separable decomposition into the product states $\Pi_\alpha \ket{\Psi_m} = \bigotimes_{i=1}^N (\Pi^{(i)}_{\lambda_\alpha} \ket{\psi^{(i)}_m})$.

\end{proof}

Now, it remains to prove that semiuniform irreps are rare among the set of all global irreps:

\begin{theorem}\label{thm:semiuniform_is_rare}
    Consider a $N$-partite system with an Abelian symmetry group $G$ that 1) does not depend on $N$, 2) acts homogeneously (i.e. $U = u^{\otimes N}$), and 3) acts nontrivially on each site (i.e. $\exists g\in G, u(g) \not\propto \one$, or equivalently there are at least two distinct on-site irreps). Then, the fraction of semiuniform global irreps goes to zero in the thermodynamic limit as $O(2^d/N)$, where $d$ is the on-site Hilbert space dimension. Furthermore, if $G$ is finite, there are no semiuniform global irreps for $N > |G|$.
\end{theorem}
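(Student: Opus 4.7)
The plan is to bound the ratio of semiuniform to total global irreps by controlling each side separately: the number of semiuniform irreps is bounded above by $|S| \leq d$ where $S \subseteq \widehat G$ is the set of on-site irreps appearing in $u$ with nonzero multiplicity, while the total number of irreps with $\Pi_\Lambda \neq 0$ grows with $N$. The finite-$G$ claim will follow from an explicit non-uniform rearrangement that hits every candidate irrep $\lambda^N$.

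First I would expand $\one = \bigotimes_i \sum_\lambda \Pi^{(i)}_\lambda$ and group by the total character to obtain
\begin{equation}
    \Pi_\Lambda = \sum_{\substack{(\lambda_1,\dots,\lambda_N) \in S^N \\ \prod_i \lambda_i = \Lambda}} \bigotimes_{i=1}^N \Pi^{(i)}_{\lambda_i},
\end{equation}
so that $\Pi_\Lambda \neq 0$ iff $\Lambda$ lies in the product set $S^N := \{\prod_i \lambda_i : \lambda_i \in S\}$. By the definition of semiuniformity, every tuple in the sum above must be constant, forcing $\Lambda \in \{\lambda^N : \lambda \in S\}$, a set of size at most $|S| \leq d$. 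For the finite-$G$ statement, I would construct for each candidate $\lambda^N$ a non-uniform tuple mapping to it: pick any $\lambda' \in S \setminus \{\lambda\}$ (which exists by hypothesis) and set $\mu = \lambda' \lambda^{-1} \in \widehat G \setminus \{e\}$. Since $\widehat G \cong G$ is finite Abelian, $\mu$ has finite order $o$ with $2 \leq o \leq |G|$. The tuple with $o$ entries equal to $\lambda'$ and $N-o$ entries equal to $\lambda$ is non-uniform (since $N > |G| \geq o$ gives $0 < o < N$) and has product $(\lambda')^o \lambda^{N-o} = \mu^o \lambda^N = \lambda^N$, so $\lambda^N$ is not semiuniform and the second claim follows.

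For the thermodynamic asymptotic, I would lower bound $|S^N|$ using the one-parameter family $\Lambda_k := (\lambda')^k \lambda^{N-k}$ for $k = 0, \ldots, N$ with distinct $\lambda, \lambda' \in S$. These $N+1$ elements are all distinct precisely when $\mu = \lambda'\lambda^{-1}$ has order strictly greater than $N$ in $\widehat G$; this holds automatically whenever $\mu$ has infinite order, yielding $|S^N| \geq N+1$ and a semiuniform fraction of $O(d/N)$. In the remaining torsion regime, $\langle \mu \rangle$ is a finite cyclic subgroup and the finite-$G$ rearrangement applied inside the subgroup $\langle S \rangle \leq \widehat G$, which is generated by $|S| \leq d$ elements and in the worst purely torsion case has order at most $2^d$, eliminates the remaining semiuniform candidates once $N$ exceeds this size. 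The main obstacle I anticipate is interpolating cleanly between the free and torsion regimes; the natural fix is to invoke the structure theorem for finitely generated Abelian groups to decompose $\langle S \rangle \cong \Z^r \times G_{\mathrm{fin}}$, apply the distinct-image count to the free factor $\Z^r$ and the finite rearrangement argument to $G_{\mathrm{fin}}$, with the factor $2^d$ absorbing the worst-case size of $G_{\mathrm{fin}}$ when only $|S| \leq d$ generators are available.
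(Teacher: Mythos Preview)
Your approach is essentially the paper's two-case split: either some quotient $\lambda'\lambda^{-1}$ has infinite order (your one-parameter family $\Lambda_k$ then gives $\geq N+1$ distinct global irreps), or all quotients are torsion (the finite rearrangement kills every semiuniform candidate for large enough $N$). Your finite-$G$ argument and the paper's are the same idea, yours using the order $o$ of a single quotient where the paper uses the LCM of all of them. Your upper bound of $|S| \leq d$ on the number of semiuniform irreps is in fact sharper than the paper's $2^d$: the paper counts subsets of $S$, but as you correctly observe every semiuniform $\Lambda$ must equal $\lambda^N$ for some $\lambda \in S$, so there are at most $d$ of them.

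One claim is incorrect but inessential: you assert that in the purely torsion case $\langle S \rangle$ has order at most $2^d$. This is false --- take $G = \Z_m$ acting on a qubit ($d=2$) via the irreps $0$ and $1$, so $|\langle S \rangle| = m$ is arbitrarily large. What is true, and all you need, is that in the torsion case the rearrangement argument eliminates every semiuniform candidate once $N$ exceeds the order of the relevant $\mu = \lambda'\lambda^{-1}$, a finite constant depending on $G$ but not on $N$; the fraction is then exactly zero for large $N$, trivially $O(2^d/N)$. You do not need the structure theorem or any bound on $|G_{\mathrm{fin}}|$ in terms of $d$: the $2^d$ in the theorem statement comes from the paper's looser count of semiuniform irreps in the infinite-order case, not from any torsion threshold, and your own argument already delivers the stronger $O(d/N)$.
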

\begin{proof}
    We separate the analysis in two cases: either $\lambda \overline{\mu} \in \widehat{G}$ has finite order for all on-site irreps $\lambda, \mu \in u$, or there is a pair $(\lambda_\infty, \mu_\infty)$ with $\lambda_\infty \overline{\mu_\infty}$ having infinite order.
    \begin{enumerate}
        \item In the first case, define $n < \infty$ be the least common multiple of the orders of all such $\lambda \overline{\mu}$. Note that this is always the case for \textit{finite} symmetry groups $G$, with $n \leq |\widehat{G}| = |G|$. We will prove that there are no semiuniform global irreps for $N > n$, thus trivially satisfying the fraction bound.

        Indeed, if $\Lambda$ is semiuniform with $\Pi_\Lambda \neq 0$, there is at least one $\lambda^*$ one of the irreps that appear as a nonzero uniform term $\bigotimes_{i=1}^N \Pi^{(i)}_{\lambda^*}$ in the sum. Furthermore, since $u$ acts nontrivally, let $\mu \neq \lambda^*$ be another on-site irrep, distinct from $\lambda^*$, with $\Pi_\mu^{(i)} \neq 0$. Then, we can construct a \textit{non-uniform} irrep vector $\vec{\lambda}^{NU}$ given by $\lambda^{NU}_i = \mu$ if $1 \leq i \leq n$ and $\lambda^{NU}_i = \lambda^*$ if $n < i \leq N$. This irrep appears in the decomposition of $\Pi_\Lambda$ because $\mu^{n} (\lambda^*)^{N-n} = (\lambda^*)^{N} = \Lambda$, where we have used that $(\lambda^*)^n = \mu^n =1$, the trivial irrep. This, however, contradicts the assumption of semiuniformity of $\Lambda$.

        \item In the latter case --- of $\lambda_\infty \overline{\mu_\infty}$ having infinite order --- we can lower bound the number of global irreps by choosing some of the on-site irreps to be $\lambda_\infty$, and some to be $\mu_\infty$. As a result, we form a set $\{\lambda_\infty^m \mu_\infty^{N-m}\}_{m=0}^N$ of $N+1$ global irreps that are \textit{pairwise distinct} precisely because $\lambda_\infty \overline{\mu_\infty}$ has infinite order. 
        
        To upper bound the number of \textit{semiuniform} global irreps, first note that there are at most $d$ distinct on-site irreps contained in the decomposition of $u$. Since a semiuniform irrep is a sum over a subset of on-site irreps, there are at most $2^d$ of them. Hence, the fraction of semiuniform global irreps is bounded by $2^{d} / (N+1) < 2^d/N$.
    \end{enumerate}

\end{proof}

For some infinite groups, including $U(1)$, the fraction of irreps can be further constrained to be at most $d / (N+1)$ due to the following result:
\begin{proposition}
    If $\widehat{G}$ is torsion-less (e.g. $G=U(1)$), then all semiuniform global irreps $\Lambda$ are uniform.
\end{proposition}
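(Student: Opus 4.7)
The plan is to trace the definition of semiuniformity back to an algebraic constraint on the on-site irreps appearing in the decomposition of $\Pi_\Lambda$, and then invoke the torsion-less hypothesis on $\widehat{G}$ to conclude that only one such on-site irrep can appear. The starting point is the observation that, since the symmetry action is homogeneous, $U(g) = u(g)^{\otimes N}$, a uniform tensor product $\bigotimes_{i=1}^N \Pi^{(i)}_\lambda$ transforms under $U(g)$ as the character $\lambda(g)^N$. Hence, for this tensor product to be a nonzero summand of $\Pi_\Lambda$, the on-site character $\lambda$ must satisfy $\lambda^N = \Lambda$ as elements of $\widehat{G}$.

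Now suppose $\Lambda$ is semiuniform, so that $\Pi_\Lambda = \sum_\alpha \bigotimes_{i=1}^N \Pi^{(i)}_{\lambda_\alpha}$ with each $\Pi^{(i)}_{\lambda_\alpha}\neq 0$. First I would verify the above constraint for each summand: by the intertwining property $U(g)\Pi_\Lambda = \Lambda(g)\Pi_\Lambda$ and the orthogonality of distinct tensor products of on-site projectors, each $\lambda_\alpha$ must independently satisfy $\lambda_\alpha^N = \Lambda$. Pick any two such on-site irreps $\lambda_\alpha$ and $\lambda_\beta$ appearing in the decomposition. Then $(\lambda_\alpha \overline{\lambda_\beta})^N = \lambda_\alpha^N \overline{\lambda_\beta^N} = \Lambda \overline\Lambda = 1 \in \widehat{G}$, so $\lambda_\alpha \overline{\lambda_\beta}$ is an element of finite order (dividing $N$).

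Because $\widehat{G}$ is torsion-less by hypothesis, its only element of finite order is the trivial character, so $\lambda_\alpha \overline{\lambda_\beta} = 1$, i.e., $\lambda_\alpha = \lambda_\beta$. Since $\alpha$ and $\beta$ were arbitrary, all on-site irreps in the decomposition coincide, reducing the sum to a single term $\Pi_\Lambda = \bigotimes_{i=1}^N \Pi^{(i)}_\lambda$, which is the definition of uniformity.

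The argument is essentially an exercise in character arithmetic once the right constraint is extracted, so I do not anticipate a serious obstacle; the only delicate point is justifying that each summand individually must transform as $\Lambda$ (rather than the sum merely doing so collectively), which follows from the orthogonality of distinct tensor products of rank-one on-site irrep projectors in the homogeneous Abelian setting.
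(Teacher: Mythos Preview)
Your proposal is correct and follows essentially the same route as the paper: both extract the constraint $\lambda_\alpha^N = \Lambda$ for each on-site irrep appearing in the semiuniform decomposition, deduce $(\lambda_\alpha\overline{\lambda_\beta})^N = 1$, and invoke torsion-freeness of $\widehat{G}$ to force $\lambda_\alpha = \lambda_\beta$. The only minor slip is calling the on-site projectors ``rank-one'' (they may have multiplicity $m_\lambda > 1$), but this does not affect the orthogonality argument you use.
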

\begin{proof}
    If there are two uniform tensor products, $\bigotimes_i \Pi^{(i)}_{\lambda}$ and $\bigotimes_i \Pi^{(i)}_{\mu}$, contributing to $\Pi_\Lambda$, then this means $(\lambda \overline{\mu})^N = \Lambda \overline{\Lambda} = 1$. Since $\widehat{G}$ is torsion-less, then $\lambda = \mu$. By repeating this argument, we conclude that $\Lambda$ is uniform. 
\end{proof}

\section{\texorpdfstring{Equivalence of $\EC$ and $\NC$}{Equivalence of EC and NC}}\label{appsec:equivalence_of_entangling_conditions}
Here we prove that the conditions \EC\ (eq.~\ref{eq:entangling_cond}) and $\NC$ (eq.~\ref{eq:negativity_cond}) are equivalent. This shows that whenever a symmetric Hamiltonian and irrep satisfies \EC\ to lead to entangled canonical ensembles, the latter must also be negative partial transpose, i.e., their entanglement negativity must be non-zero. We prove the equivalence through the following theorem:
\begin{theorem}\label{thm:equivs_abelian_sep_bipartite}
    The following are equivalent:
    \begin{enumerate}[(i)]
        \item $(\one - \Pi_\Lambda) (H \Pi_\Lambda)^{T_A} (\one - \Pi_\Lambda) = 0$.
        \item $H \Pi_\Lambda = \sum_{\lambda} \Pi_{\lambda}^A (H \Pi_{\Lambda}) \Pi^A_{\lambda} = \sum_{\lambda} \Pi_{\lambda}^A \Pi^B_{\Lambda \overline{\lambda}}H \Pi^A_{\lambda} \Pi^B_{\Lambda \overline{\lambda}}$.
        \item For all $g \in G$, $[U_A(g), H\Pi_\Lambda] = [U_B(g), H\Pi_\Lambda] = 0$. \label{item:separable_bipartite-equiv_cond_commutation}
    \end{enumerate}    
\end{theorem}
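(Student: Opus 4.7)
The plan is to establish the chain (iii) $\Rightarrow$ (ii) $\Rightarrow$ (i) $\Rightarrow$ (iii), exploiting the bipartite isotypic decomposition $\Pi_\Lambda = \sum_\lambda \Pi_\lambda^A \otimes \Pi^B_{\Lambda\overline\lambda}$ throughout and working in a basis where $U_A(g)$ is diagonal (so that each $\Pi_\lambda^A$ is diagonal and therefore invariant under partial transposition, giving $\Pi_\Lambda^{T_A} = \Pi_\Lambda$).

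For (iii) $\Leftrightarrow$ (ii), I would first note that for Abelian $G$ Fourier inversion on characters gives $\Pi_\lambda^A = |G|^{-1}\sum_g \overline{\lambda(g)}\, U_A(g)$, so $[U_A(g), H\Pi_\Lambda] = 0$ for every $g$ is equivalent to $[\Pi_\lambda^A, H\Pi_\Lambda] = 0$ for every $\lambda$, which says precisely that $H\Pi_\Lambda$ is block-diagonal in the $A$-isotypic decomposition, $H\Pi_\Lambda = \sum_\lambda \Pi_\lambda^A (H\Pi_\Lambda) \Pi_\lambda^A$. Combining this with $H\Pi_\Lambda = \Pi_\Lambda H \Pi_\Lambda$ and the identity $\Pi_\lambda^A \Pi_\Lambda = \Pi_\lambda^A \otimes \Pi^B_{\Lambda\overline\lambda}$ yields the second equality in (ii). The $U_B$ half of (iii) is then automatic: since $[U(g), H\Pi_\Lambda] = 0$ and $U(g) = U_A(g) U_B(g)$, the Leibniz rule $[U_A U_B, X] = U_A[U_B, X] + [U_A, X]U_B$ combined with invertibility of $U_A(g), U_B(g)$ forces the $A$- and $B$-commutators to vanish simultaneously.

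For (ii) $\Leftrightarrow$ (i), I would decompose $H\Pi_\Lambda = \sum_{\lambda,\lambda'} H_{\lambda,\lambda'}$ with $H_{\lambda,\lambda'} \defeq (\Pi_\lambda^A \otimes \Pi^B_{\Lambda\overline\lambda})\, H\, (\Pi_{\lambda'}^A \otimes \Pi^B_{\Lambda\overline{\lambda'}})$. Because the local projectors are diagonal in the chosen basis, partial transposition only swaps the $A$-index of $H$, so $(H_{\lambda,\lambda'})^{T_A}$ has row support inside $\im(\Pi_{\lambda'}^A \otimes \Pi^B_{\Lambda\overline\lambda})$ and column support inside $\im(\Pi_\lambda^A \otimes \Pi^B_{\Lambda\overline{\lambda'}})$. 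These two subspaces are $U$-isotypic components of charges $\lambda' \cdot \Lambda\overline{\lambda}$ and $\lambda \cdot \Lambda\overline{\lambda'}$ respectively, both equal to the global charge $\Lambda$ iff $\lambda = \lambda'$. Hence $(\one - \Pi_\Lambda)(H_{\lambda,\lambda})^{T_A}(\one - \Pi_\Lambda) = 0$ for diagonal blocks, while off-diagonal blocks are preserved intact by the sandwich. Since distinct off-diagonal pairs give blocks supported on pairwise-disjoint row-column subspaces, condition (i) is equivalent to $H_{\lambda,\lambda'} = 0$ for every $\lambda \neq \lambda'$, which is condition (ii).

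The main technical obstacle is the bookkeeping of the partial transpose in the second step: I must verify that each off-diagonal block lands \emph{entirely outside} $\im(\Pi_\Lambda)$ after transposition, so that the sandwich $(\one - \Pi_\Lambda)\,\cdot\,(\one - \Pi_\Lambda)$ reproduces it faithfully rather than truncating it, while each diagonal block lands \emph{entirely inside}. Choosing the $U_A$-diagonal basis makes this transparent, since every local isotypic projector survives the partial transpose untouched and the calculation reduces to tracking how the $A$- and $B$-charge labels of the surviving row and column subspaces fail to sum to $\Lambda$ exactly when $\lambda \neq \lambda'$.
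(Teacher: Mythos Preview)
Your proof is correct. The (iii)$\Leftrightarrow$(ii) direction is essentially the same as the paper's (ii)$\Rightarrow$(iii) combined with its implicit converse via the Fourier inversion you invoke. The substantive difference lies in how (i) is linked to (ii). You argue structurally: writing $H\Pi_\Lambda = \sum_{\lambda,\lambda'} H_{\lambda,\lambda'}$ and observing that under $T_A$ each block lands in a definite $(A,B)$-charge sector, with the diagonal blocks ($\lambda=\lambda'$) landing inside $\im\Pi_\Lambda$ and the off-diagonal ones entirely outside, so the sandwich $(\one-\Pi_\Lambda)(\cdot)(\one-\Pi_\Lambda)$ isolates precisely the off-diagonal content. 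The paper instead works computationally: it expands $(H\Pi_\Lambda)^{T_A} = \sum_\lambda \Pi_\lambda^A H^{T_A} \Pi^B_{\Lambda\overline\lambda}$, inserts $\one-\Pi_\Lambda$ on both sides, transposes back, and then multiplies by $\Pi_\Lambda$ to extract (ii); for (iii)$\Rightarrow$(i) it shows directly that $(H\Pi_\Lambda)^{T_A} = H^{T_A}\Pi_\Lambda$, which is killed by $(\one-\Pi_\Lambda)$ on the right. Your block-support bookkeeping is a bit longer to set up but makes the mechanism transparent --- condition (i) literally \emph{is} the vanishing of the off-diagonal blocks --- whereas the paper's manipulation is shorter but more opaque as to why the equivalence holds.
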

\begin{proof}
    \noindent
    \begin{itemize}
    \item $(i) \Rightarrow (ii)$ First, we have
    \begin{equation}
        (H \Pi_\Lambda)^{T_A} = \sum_\lambda \Pi^A_\lambda H^{T_A} \Pi^B_{\Lambda \overline\lambda},
    \end{equation}
    where we have used $(\Pi^A_\lambda)^{T} = \Pi^A_\lambda$, which is true as we choose to do the partial transpose for $A$ and $B$ with respect to a basis that diagonalizes $U_A$ and $U_{B}$, such that $\forall g \in G, U_A(g)^{T_A} = U_A(g)$ and $U_{B}(g)^{T_{B}} = U_{B}(g)$. Then,
    \begin{align}
        0 & = [(\one - \Pi_\Lambda) (H \Pi_\Lambda)^{T_A} (\one - \Pi_\Lambda)]^{T_A} \\
        & \left[\sum_\lambda \Pi^A_\lambda H^{T_A} \Pi^B_{\Lambda \overline\lambda} + \sum_\lambda \Pi^B_{\Lambda \overline\lambda} H^{T_A} \Pi^A_{\lambda} - \sum_\lambda \Pi^B_{\Lambda \overline\lambda} H^{T_A} \Pi^B_{\Lambda \overline\lambda} - \sum_{\lambda} \Pi^A_{\lambda}  H^{T_A} \Pi^A_{\lambda} \right]^{T_A} \\
        & = H \Pi_\Lambda + \Pi_\Lambda H - \sum_\lambda \Pi^B_{\Lambda \overline\lambda} H \Pi^B_{\Lambda \overline\lambda} - \sum_{\lambda} \Pi^A_{\lambda}  H \Pi^A_{\lambda}. 
    \end{align}
    By multiplying the equation above with $\Pi_\Lambda$ and using that $[H, \Pi_\Lambda] = 0$, we have
    \begin{equation}
        H \Pi_\Lambda = \frac{1}{2}\left(\sum_\lambda \Pi_\Lambda\Pi^B_{\Lambda \overline\lambda} H \Pi^B_{\Lambda \overline\lambda} \Pi_\Lambda + \sum_{\lambda} \Pi^A_{\lambda}  H \Pi_\Lambda \Pi^A_{\lambda}\right) = \sum_{\lambda} \Pi^A_{\lambda}  (H \Pi_\Lambda) \Pi^A_{\lambda}.
    \end{equation}

    \item $(ii) \Rightarrow (iii)$
    From assertion (ii), we know that $H \Pi_\Lambda$ is block diagonal with respect to the irreps of $U_A$. Thus, by the converse of Schur's lemma, $H \Pi_{\Lambda}$ commutes with all symmetry operators $U_A(g)$, for all $g \in G$. More explicitly, from $U_A(g) \Pi_\lambda^A = \Pi_\lambda^A U_A(g) = \lambda(g) \Pi^A_\lambda$, we have
    \begin{equation}
        [U_A(g), H \Pi_{\Lambda}] = \sum_{\lambda} \lambda(g) \Pi_{\lambda}^A (H \Pi_{\Lambda}) \Pi^A_{\lambda} - \sum_\lambda \Pi_{\lambda}^A (H \Pi_{\Lambda}) \Pi^A_{\lambda} \lambda(g) = 0.
    \end{equation}

    \item $(iii) \Rightarrow (i)$ Condition $(iii)$ also implies that for all irreps $\lambda$ and $\mu$, $[\Pi^A_\lambda, H \Pi_\Lambda] = [\Pi^B_{\mu}, H \Pi_\Lambda] =0$. Hence,
    \begin{equation}
        H \Pi_\Lambda = H \sum_\lambda \Pi^A_\lambda \Pi^B_{\Lambda \overline\lambda} = \sum_\lambda \Pi^A_\lambda H \Pi^B_{\Lambda \overline\lambda}. 
    \end{equation}
    By taking the partial transpose of the equation above, we get
    \begin{equation}
        (H \Pi_\Lambda)^{T_A} = \sum_\lambda H^{T_A} \Pi^A_{\lambda} \Pi^B_{\Lambda \overline\lambda} = H^{T_A} \Pi_\Lambda, 
    \end{equation}
    which, upon right multiplication with $(\one - \Pi_\Lambda)$, results in zero.

    \end{itemize}
\end{proof}

Note that the negation of the statements $(i)$ and $(iii)$ are the \NC\ and \EC, respectively, thereby proving their equivalence.

\section{Entanglement negativity of the thermal cluster chain state}\label{appsec:negativity_cluster_chain}
We analytically compute the entanglement negativity of the Gibbs and canonical ensembles of the one-dimensional cluster chain Hamiltonian. This model was chosen for both its analytical simplicity, and for the Gibbs ensemble showing non-trivial \SDOE, that is, it is entangled at low temperatures, and becomes separable at high temperatures. This last property makes the cluster chain a more realistic example of the generic case, and is not shared with the classical Ising model studied in \cite{kim2025persistenttopologicalnegativityhightemperature}. 

We confirm that the Gibbs state of the cluster chain exhibits sudden death of entanglement negativity, while the negativity of the canonical ensemble persists for arbitrarily high temperatures. Moreover, we find that there is no bound entanglement for the Gibbs ensemble.

\subsection{Gibbs ensemble}

Consider a connected region $A$ with size $|A| \geq 2$ and its complement $A^C = B$ with size $|B| \geq 2$, so that the total system size $N$ is even. Region $A$ has two endpoints: $a_L$ and $a_R$, and they are adjacent to the endpoints $b_L$ and $b_R$ of B, respectively. Define $A_{\text{int}} \defeq A \setminus \{a_L, a_R\}$ to be the interior of region $A$, with endpoints $a_L$ and $a_R$ removed, and similarly with $B_{\text{int}}$. 

To simplify the entanglement calculation, we will conjugate the cluster chain Hamiltonian $H_{\text{cc}} = - \sum_i Z_{i-1} X_i Z_{i+1}$ by $CZ_{\text{int}}$, defined as
\begin{equation}
    CZ_{\text{int}} \defeq CZ_{a_L, b_L} CZ_{a_R, b_R} \prod_{\langle i, j\rangle} CZ_{i,j} =  \prod_{\substack{\langle i, j\rangle \neq \\ \langle a_L, b_L\rangle, \\ \langle a_R, b_R\rangle}} CZ_{i,j}.
\end{equation}
Since $CZ_{\text{int}}$ is a local unitary to the $A|B$ bipartition, it preserves the entanglement value of the thermal state. This results in the trivial paramagnetic Hamiltonian everywhere except on the two boundary links:
\begin{align}
    H'_{cc} \defeq CZ_{\text{int}} H_{cc} CZ_{\text{int}} & = H'_L + H'_R + H'_{pm},
\end{align}
where
\begin{align}
    H'_L &= -X_{a_L} Z_{b_L} -Z_{a_L} X_{b_L}, \\
    H'_R & =-X_{a_R} Z_{b_R} -Z_{a_R} X_{b_R}, \\
    H'_{pm} & =- \sum_{i \in A_{\text{int}}} X_i - \sum_{i \in B_{\text{int}}} X_i.
\end{align}
The Gibbs state $\rho_\beta' = e^{- \beta H'_{cc}} / \mathcal{Z_\beta}$ is, then,
\begin{align}
    \rho_\beta' & = (e^{- \beta H'_{pm}} / \mathcal{Z}^{pm}_\beta) \otimes (\one + \lambda X_{a_L} Z_{b_L} + \lambda Z_{a_L} X_{b_L} + \lambda^2 Y_{a_L} Y_{b_L}) \otimes (\one + \lambda X_{a_R} Z_{b_R} + \lambda Z_{a_R} X_{b_R} + \lambda^2 Y_{a_R} Y_{b_R}) / 2^4, \\
    \mathcal{Z}^{pm}_\beta & = (2 \cosh(\beta))^{N-4},
\end{align}
where $\lambda \defeq \tanh(\beta)$ and $N$ is the system size. Its partial transpose is, then
\begin{equation}
    (\rho_\beta')^{T_A} =  (e^{- \beta H'_{pm}} / \mathcal{Z}^{pm}_\beta) \otimes (\one + \lambda X_{a_L} Z_{b_L} + \lambda Z_{a_L} X_{b_L} - \lambda^2 Y_{a_L} Y_{b_L}) \otimes (\one + \lambda X_{a_R} Z_{b_R} + \lambda Z_{a_R} X_{b_R} - \lambda^2 Y_{a_R} Y_{b_R}) / 2^4,
\end{equation}
and the logarithmic negativity $E_N = \log \norm{\cdot}_1$ is
\begin{align}
    E_N(\rho_\beta) = E_N(\rho_\beta') & = \log \left(\norm{\one + \lambda X_{a_L} Z_{b_L} + \lambda Z_{a_L} X_{b_L} - \lambda^2 Y_{a_L} Y_{b_L}}_1^2 /2^4\right) \\
    & = 2 \log \Bigl(\sum_{a,b=\pm1} |1 + \lambda (a+b) - \lambda^2 ab|/4\Bigr).
\end{align}
As shown in the red curve of Fig. \ref{fig:cluster_chain_plot}, the entanglement negativity is zero for small $\lambda \leq \lambda_c$. Indeed, it can be easily deduced from the expression above that $ \lambda_c = \sqrt{2}-1$. Moreover, we can further argue that the state is exactly separable for $\lambda \leq \lambda_c$ by showing an explicit separable decomposition for it.
Consider the density matrix \(\rho_{AB}\) defined by
\begin{equation}
    \rho_{\beta} \propto e^{-\frac{\beta}{2} H_A} \, e^{-\frac{\beta}{2} H_B} \, \rho_{AB} \, e^{-\frac{\beta}{2} H_B} \, e^{-\frac{\beta}{2} H_A},
\end{equation}
where $H_{A}$ contains the terms of $H$ entirely supported on region $A$, and similarly for $H_{AB}$. If $\rho_{AB}$ can be decomposed into an ensemble of states that are separable between $A$ and $B$, then a separable decomposition for the full density matrix \(\rho_\beta\) can also be constructed.

In the case of the cluster chain Hamiltonian, and expressing $\rho_{AB}$ only at the four closest sites near the rightmost boundary between $A$ and $B$ for simplicity, we have
\begin{equation}
    \rho_{AB} = \frac{1}{2^4} \left(\one + \lambda Z_{a_R-1}X_{a_R}Z_{b_R} \right) \left(\one + \lambda Z_{a_R}X_{b_R}Z_{b_R+1} \right).
\end{equation}
It is easy to check that $\rho_{AB} = p_1 \rho_1 + p_2 \rho_2 + p_3 \rho_3$, with
\begin{align}
    \rho_1 &= \frac{1}{2^4} \left(\one + (\lambda^2 + 2\lambda) Z_{a_R-1}X_{a_R}Z_{b_R} \right), \\
    \rho_2 &= \frac{1}{2^4} \left(\one + (\lambda^2 + 2\lambda) Z_{a_R}X_{b_R}Z_{b_R+1} \right), \\
    \rho_3 &= \frac{1}{2^4} \left(\one + (\lambda^2 + 2\lambda) Z_{a_R-1}X_{a_R}Z_{b_R} \cdot Z_{a_R}X_{b_R}Z_{b_R+1} \right),
\end{align}
and probabilities $p_1 = p_2 = \lambda/(\lambda^2 + 2 \lambda)$ and $p_3 = \lambda^2/(\lambda^2 + 2 \lambda)$.

Each individual $\rho_i$ is in fact a fully separable matrix since it has the form $\one+\alpha \hat{P}$, where $\hat{P}$ is a Pauli operator. However, $\rho_i$ is positive-semidefinite if and only if: 
\begin{equation}
    -1 \leq \lambda^2+2\lambda \leq 1.
\end{equation}
Therefore the state is separable between left and right if and only if $\lambda \leq \lambda_c = \sqrt{2}-1$.

\subsection{Canonical ensemble}

Here, we will repeat the calculation for the canonical ensemble of the 1d cluster chain. Being a nontrivial SPT, the cluster chain exhibit edge modes that become entangled when global strong symmetry is imposed. Since we are interested in the bulk entanglement, we will disregard this contribution by considering periodic boundary conditions, forming a 1d ring. Hence, a nontrivial connected region $A$ will have two boundary points. Furthermore, contrary to the 1d Ising model, the stabilizers of the cluster chain and the symmetry operators are not independent. As we will see, this correlates the degrees of freedom inside the bulk with the ones at the boundary, and also between the boundary points.

Even though the cluster chain has $\Z_2 \times \Z_2$ symmetry generated by $\prod_{i} X_{2i}$ and $\prod_{i} X_{2i-1}$, we will only impose the strong symmetry of the diagonal component $P \defeq \prod_i X_i = \Lambda \in \{\pm1\}$ with projector $\Pi_\Lambda = (\one + \Lambda P)/2$, for simplicity. After conjugation by $CZ_{\text{int}}$, the symmetry becomes $P' = P_{\text{int}} P_{LR}$, where $P_{\text{int}} = \prod_{i \in A_{\text{int}}} X_i \prod_{i \in B_{\text{int}}} X_i$ and $P_{LR} = Y_{a_L}Y_{b_L}Y_{a_R}Y_{b_R}$. The strongly symmetric thermal state $\rho_{\beta,\Lambda}' = e^{-\beta H'} \Pi_\Lambda'/\mathcal{Z}_{\beta, \Lambda}$ is 
\begin{align}
    \rho_{\beta, \Lambda}' & = \Pi_{\Lambda}' \cdot (e^{- \beta H_{pm}} / \mathcal{Z}^{pm'}_{\beta, \Lambda}) \otimes (\one + \lambda X_{a_L} Z_{b_L} + \lambda Z_{a_L} X_{b_L} + \lambda^2 Y_{a_L} Y_{b_L}) \otimes (L \leftrightarrow R) /2^4, \text{ where} \\
    \mathcal{Z}^{pm'}_{\beta, \Lambda} & = (2 \cosh(\beta))^{N-4} (1 + \Lambda \lambda^N)/2.
\end{align}

Now, $\rho_{\beta, \Lambda}'$ does not factorize into a tensor product over the interior of each region and their boundaries. To deal of this, we decompose $\Pi_{\Lambda}'= \Pi^{\text{int}}_{+1} \Pi^{LR}_{\Lambda} + \Pi^{\text{int}}_{-1} \Pi^{LR}_{-\Lambda}$. Then,
\begin{align}
    \rho_{\beta, \Lambda}' & = \sum_{\mu =\pm1} r_{N,\mu} \rho_{\beta, \mu}^{pm} \otimes \Pi^{LR}_{\mu \Lambda} (\one + \lambda X_{a_L} Z_{b_L} + \lambda Z_{a_L} X_{b_L} + \lambda^2 Y_{a_L} Y_{b_L}) \otimes (L \leftrightarrow R) /2^4,  \text{ where}\\
    \rho^{pm}_{\beta, \mu} & = \Pi^{\text{int}}_{\mu} e^{- \beta H_{pm}} / \mathcal{Z}^{pm}_{\beta, \mu}, \\
    \mathcal{Z}^{pm}_{\beta, \mu} & = (2 \cosh(\beta))^{N-4} (1 + \mu \lambda^{N-4})/2, \\
    r_{N,\mu} & = \mathcal{Z}^{pm}_{\beta, \mu} / \mathcal{Z}^{pm'}_{\beta, \Lambda} = (1 + \mu \lambda^{N-4})/(1+\Lambda \lambda^N).
\end{align}
The ratio $r_{N, \mu}$ converges to $1$ in the thermodynamic limit $N \to \infty$ if $\lambda < 1$ ($\beta < \infty$). Hence, we will set it to 1 in what follows. 


The partial transpose is
\begin{align}
    (\rho_{\beta, \Lambda}')^{T_A} = \sum_{\mu = \pm 1} \rho_{\beta, \mu}^{pm} \otimes [ & (\one + \lambda X_{a_L} Z_{b_L} + \lambda Z_{a_L} X_{b_L} - \lambda^2 Y_{a_L} Y_{b_L}) \otimes (L \leftrightarrow R) + \\ & \mu \Lambda (\one - \lambda X_{a_L} Z_{b_L} - \lambda Z_{a_L} X_{b_L} - \lambda^2 Y_{a_L} Y_{b_L}) \otimes (L \leftrightarrow R) P_{LR} ] / 2^5,
\end{align}
so the logarithmic negativity is
\begin{align}\label{eq:ent_neg_cc_ss}
    E_{N}(\rho_{\beta, \Lambda}) &= \log \Bigl( \sum_{\mu = \pm1} \sum_{\substack{a_1, b_1, \\ a_2, b_2 = \pm 1}} \Bigl| \prod_{i=1,2} (1 + \lambda (a_i + b_i) - \lambda^2 a_i b_i) + \mu a_1 b_1 a_2 b_2 \prod_{i=1,2} (1 - \lambda (a_i + b_i) - \lambda^2 a_i b_i) \Bigr| /2^5\Bigr).
\end{align}
This result is plotted on the blue curve of Fig. \ref{fig:cluster_chain_plot}, which shows the persistency of entanglement for the canonical ensemble for arbitrarily small $\beta$. Also note that the final expression is independent of the global charge $\Lambda$. For finite size, however, the entanglement will depend on $\Lambda$.

The analytical results for the entanglement negativity were validated with small size ($4 \leq N \leq 6$) numerical calculations.

\section{Persistence of fermionic negativity in the canonical ensemble}\label{appsec:fermions}
Consider the \emph{canonical ensemble} for fermions with strong fermion parity, \(\rho_{\beta,\Lambda} \propto e^{-\beta H} \Pi_\Lambda\), where \(\Pi_\Lambda = \frac{1}{2}(\one + \Lambda P)\) projects onto a fixed fermion parity sector with $\Lambda = \pm 1$. Assume $\rho_{\beta, \Lambda}$ has zero fermionic negativity, i.e. $||\rho_{\beta,\Lambda}^{\mathcal{T}_A}||_1 = 1$, for a sequence of $\beta$ converging to $0$. This, combined with the fact that $\Tr{\rho_{\beta,\Lambda}^{\mathcal{T}_A}} = \Tr{\rho_{\beta,\Lambda}} = 1$, imply that $\rho_{\beta,\Lambda}$ is Hermitian and positive semi-definite~\cite{li_characterizations_2015}.

Under the fermionic partial transpose, \(P^{\mathcal{T}_A} = \eta_A P\), with \(\eta_A = (-1)^{|A| \bmod 2}\), implying \(\Pi_\Lambda \Pi_{-\eta_A \Lambda}^{\mathcal{T}_A} = \Pi_\Lambda \Pi_{-\Lambda} = 0\). Thus, the matrix $\Pi_{-\eta_A \Lambda}\rho_{\beta,\Lambda}^{\mathcal{T}_A} \Pi_{-\eta_A \Lambda}$, which is positive semidefinite, is also traceless
\(
\Tr\!\left(\rho_{\beta,\Lambda}^{\mathcal{T}_A} \Pi_{-\eta_A \Lambda}\right) \propto \Tr\!\left(e^{-\beta H} \Pi_\Lambda \Pi_{-\eta_A \Lambda}^{\mathcal{T}_A}\right) = 0
\), and thus identically zero. Since the matrix is holomorphic function of $\beta$, its derivative at $\beta = 0$ must also vanish. Negating the chain of arguments, we find that $E_{N}^f$ persists if the derivative at \(\beta = 0\) is nonvanishing: 
\begin{equation}
\Pi_{-\eta_A \Lambda} (H \Pi_\Lambda)^{\mathcal{T}_A} \Pi_{-\eta_A \Lambda} \neq 0.
\end{equation}
This is the fermionic analogue of the $\NC$, which is also generically satisfied. The two-Majorana term \(V = ic_i c_j\), with \(c_i \in A\) and \(c_j \in B\), obeys \((V \Pi_\Lambda)^{\mathcal{T}_A} = i V \Pi_{-\eta_A \Lambda}\), yielding
\(
\Pi_{-\eta_A \Lambda}(V \Pi_\Lambda)^{\mathcal{T}_A} \Pi_{-\eta_A \Lambda} = i V \Pi_{-\eta_A \Lambda} \neq 0,
\)
thereby providing an explicit entangling perturbation.

\section{\texorpdfstring{Proof of Conjecture made in Shapourian–Ryu~\cite{Shapourian_2019}}{Proof of Conjecture made in Shapourian-Ryu}}

Recall that the fermionic (logarithmic) negativity is
\[
E^{f}_{N}(\rho) \;=\; \log \,\mathrm{Tr}\,\bigl| \rho^{\mathcal{T}_A} \bigr|.
\]

Ref.~\cite{Shapourian_2019} introduces two types of fermionic states (see their Remark~2):
\begin{itemize}
\item \textbf{Type I:} Fermion-number parity of subsystems is even, $[P_A,\,\rho]=0$.
\item \textbf{Type II:} Fermion-number parity of subsystems is mixed, $[P_A,\,\rho]\neq 0$.
\end{itemize}
Type~I states may be (symmetrically) separable or entangled, and, in both cases, the fermionic and bosonic negativities coincide. In contrast, Type~II density matrices 
are always \emph{entangled}. Ref.~\cite{Shapourian_2019} further conjectured that no type~II state has vanishing fermionic negativity (their Conjecture~1). Here, we confirm this expectation. 
We first introduce the following lemma, proved in \cite{li_characterizations_2015}.

\begin{lemma}
\label{trace_equal_to_one}
For $A$ a square matrix, if $\|A\|_1 \equiv \mathrm{Tr}\bigl|A\bigr| = \mathrm{Tr}\,A$, then $A \geq 0$. In particular, $A$ is Hermitian.
\end{lemma}

\begin{theorem}
For any bipartite fermionic state $\rho$ of type~II, if $\rho$ is symmetrically entangled, then $E^f_N(\rho) > 0$.
\end{theorem}

\begin{proof}
We prove the converse by assuming $E^{f}_{N}(\rho) = 0$. By definition, this implies $\mathrm{Tr}\bigl|\rho^{\mathcal{T}_A}\bigr| = 1$. Since the fermionic partial transpose preserves the trace, $\mathrm{Tr}\,\rho^{\mathcal{T}_A} = \mathrm{Tr}\,\rho = 1$. By Lemma~\ref{trace_equal_to_one}, $\rho^{\mathcal{T}_A}$ is therefore Hermitian (in fact, positive semi-definite).

We now show that Hermiticity of $\rho^{\mathcal{T}_A}$ contradicts the assumption that $\rho$ is type~II. Expand $\rho$ in Majorana monomials $a_{p_1}\cdots a_{p_{k_1}}\, b_{q_1}\cdots b_{q_{k_2}}$, where $a_{(\cdot)}$ act on $A$ and $b_{(\cdot)}$ on $B$. Being type~II means that in this expansion there exists at least one term with an \emph{odd} number of Majorana operators on $A$. However, such a term acquires a factor of $\pm i$ under the fermionic partial transpose on $A$ (see Eq.~(23) of Ref.~\cite{Shapourian_2019}), which generates purely imaginary contributions in $\rho^{\mathcal{T}_A}$ that preclude its Hermiticity.

\end{proof}

\end{document}